\newtheorem{theorem}{Theorem}
\newtheorem{lemma}{Lemma}
\newtheorem{proposition}{Proposition}
\newtheorem{definition}{Definition}
\newbox\mytempbox
\tikzstyle{embeds} = [->, >=open triangle 45]
\newenvironment{tightcenter}
 {\parskip=0pt\par\nopagebreak\centering}
 {\par\noindent\ignorespacesafterend}
\renewcommand{\H}{\mathbb H}
\renewcommand{\G}{\mathbb G}
\newcommand{\J}{\mathbb J}
\renewcommand{\W}{\mathbb W}
\newcommand{\Naturals}{\mathbb{N}}
\newcommand{\Reals}{\mathbb{R}}
\newcommand{\Alg}{\cal{T}}
\newcommand{\defeq}{\coloneqq} 
\newcommand{\pp}{\operatorname{pp}}
\def\adm_#1{ \operatorname{adm}_{#1} }
\def\wcol_#1{ \operatorname{wcol}_{#1} }
\def\scol_#1{ \operatorname{scol}_{#1} }
\newtheorem*{theorem*}{Theorem}
\newlength{\RoundedBoxWidth}
\newsavebox{\GrayRoundedBox}
\newenvironment{GrayBox}[1]%
   {\setlength{\RoundedBoxWidth}{\textwidth-4.5ex}
    \def\boxheading{#1}
    \begin{lrbox}{\GrayRoundedBox}
       \begin{minipage}{\RoundedBoxWidth}%
   }{%
       \end{minipage}
    \end{lrbox}%
    \begin{tightcenter}%
    \begin{tikzpicture}%
       \node(Text)[draw=black!20,fill=white,rounded corners,%
             inner sep=2ex,text width=\RoundedBoxWidth]%
             {\usebox{\GrayRoundedBox}};
        \coordinate(x) at (current bounding box.north west);
        \node [draw=white,rectangle,inner sep=3pt,anchor=north west,fill=white] 
        at ($(x)+(6pt,.75em)$) {\boxheading};
    \end{tikzpicture}
    \end{tightcenter}\vspace{0pt}%
    \ignorespacesafterend
}
\let\oldnl\nl
\newcommand{\nlnonumber}{\renewcommand{\nl}{\let\nl\oldnl}}
\newcommand{\Strata}{Strata\xspace}
\newcommand{\dStrata}[1]{#1\text{-}\Strata}
\newcommand{\length}{\mbox{length}}
\newcommand{\Nadir}{\operatorname{nadir}} 
\newtheorem{fact}[lemma]{Fact}
\newcommand{\GT}{\hat{\G}}
\def\Nesetril{Ne\v{s}et\v{r}il\xspace}
\title{A sufficient condition for characterizing the one-sided testable properties of families of graphs in the Random Neighbour Oracle Model}
\author[1]{Christine Awofeso\orcidlink{0009-0000-3550-1727}}
\author[1]{Patrick Greaves\orcidlink{0009-0007-0752-0526}}
\author[1]{Oded Lachish\orcidlink{0000-0001-5406-8121}}
\author[2]{Amit Levi\orcidlink{0000-0002-8530-5182}}
\author[1]{Felix Reidl\orcidlink{0000-0002-2354-3003}}
\affil[1]{%
\begin{minipage}[t]{.8\textwidth}%
\texttt{\{cawofe01|pgreav01\}@student.bbk.ac.uk}, \newline
\texttt{\{o.lachish|f.reidl@bbk.ac.uk\}@bbk.ac.uk}, \newline
Birkbeck, University of London, UK
\end{minipage}
}
\affil[2]{%
\begin{minipage}[t]{.8\textwidth}%
\texttt{alevi@cs.haifa.ac.il}, \newline University of Haifa, Haifa, Israel
\end{minipage}
}
\begin{document}

\maketitle
\begin{abstract}
We study property testing in the \emph{random neighbor oracle} model for graphs, originally introduced by Czumaj and Sohler [STOC 2019]. Specifically, we initiate the study of characterizing the graph families that are $H$-\emph{testable} in this model. A graph family $\mathcal{F}$ is $H$-testable if, for every graph $H$, $H$-\emph{freeness} (that is, not having a subgraph isomorphic to $H$) is testable with one-sided error on all inputs from $\mathcal{F}$.

Czumaj and Sohler showed that for any $H$-testable family of graphs $\mathcal{F}$, the family of testable properties of $\mathcal{F}$ has a known characterization, a major goal in the study of property testing.
Consequently, characterizing the collection of $H$-testable graph families will not only result in new characterizations, but will also exhaust this method of characterizing testable properties. We believe that our result is a substantial step towards this goal.

Czumaj and Sohler further showed that the family of planar graphs is $H$-testable, as is any family of minor-free graphs.
In this paper, we provide a sufficient and much broader criterion under which a family of graphs is $H$-testable.
As a corollary, we obtain new characterizations for many families of graphs including: families that are closed under taking topological minors or immersions, geometric intersection graphs of low-density objects, euclidean nearest-neighbour graphs with bounded clique number, graphs with bounded crossing number (per edge), graphs with bounded queue- and stack number, and more.

The criterion we provide is based on the \emph{$r$-admissibility} graph measure from the theory of sparse graph families initiated by \Nesetril and {Ossona de Mendez}.
Proving that specific families of graphs satisfy this criterion is an active area of research,
consequently, the implications of this paper may be strengthened in the future.
\end{abstract}
\newpage

\section{Introduction}\label{sec:Intro}
The framework of graph property testing deals with designing randomized algorithms, called \emph{testers} which, given query access to an input graph $G$, can decide whether the graph satisfies a given property $\mathcal{P}$ or is $\epsilon$-far from satisfying it---that is, a large part of the representation of the graph, whose size depends on $\epsilon$ and the graph representation, needs to be modified so that $G$ has the property. A central objective in property testing is to design testers whose query complexity is upper bounded by a function that is independent of the size of the input. A property is said to be \emph{testable} if it admits such a tester.

The seminal work of Goldreich, Goldwasser, and Ron~\cite{GGR98} initiated the study of property testing in the \emph{dense graph model}. In this model, the input graph is represented by its adjacency matrix, and the tester is allowed to query the entries in this matrix. Their work sparked a line of research that culminated in a precise characterization of testable properties in the dense graph model; for one-sided error testing see~\cite{alon2005every,alon2008characterization} and for two-sided error testing see~\cite{alon2006combinatorial}. 
These characterizations resolved two of the central open problems of this model.

Graph property testing is also studied in other models, including the \emph{general graph model}. In the general graph model~\cite{parnas2002testing, kaufman2004tight, alonTriangle2008}, the tester is given the number of vertices in the input graph as part of the input and is given access to an oracle that for every pair of vertices $u,v$ in the graph can answer the following queries:
(1) What is the degree $v$?
(2) What is the $i$th neighbour of $i$ (assuming one exists)?
(3) Are $u$ and $v$ adjacent?
It is also assumed implicitly that the tester knows the set of names of the graph vertices (for example,  it may $\{1,2,\dots,n\}$, where $n$ is the number of vertices in the graph.)

Many natural graph properties are \emph{not} testable in this model. Bipartiteness requires $\Omega(\min\{\sqrt{n}, n^2/m\})$ queries ~\cite{kaufman2004tight}, where $n$ is the number of vertices and $m$ is the number of edges and triangle-freeness requires $\Omega(n^{1/3})$ queries. This rules out the possibility of constant or even subpolynomial query testers for such basic properties. 
These lower bounds highlight the intrinsic limitations of testing in the general graph model. 

However, when the input graph is restricted to a family of graphs with additional `structure', many problems become testable: Czumaj \etal~\cite{czumaj2019planar} showed that when the input is restricted to planar graphs, bipartiteness is testable, Levi~\cite{TrianglesArboricity} showed that triangle-freeness is testable for graphs with bounded \emph{arboricity}, one of the most general sparse graph families which--among many other families--generalize planar graphs. 
On the other hand, $C_4$-freeness (i.e. $4$ cycle-freeness) is not testable for graphs of bounded arboricity (see~\cite{edenCycles2024}).

To obtain results for families that lie somewhere between these extremes, Awofeso \etal~\cite{H-freeness,AwofesoGLLR25} considered a hierarchy of sparse graph families by using the measure of \emph{$r$-admissibility}. We discuss this measure in more detail below; here, it is important to note that their result shows a sharp separation of testability across this hierarchy. Specifically, they show that $C_{2r}$- and~$C_{2r+1}$-freeness is testable when the input graph has bounded $r$-admissibility but requires $\Omega(n^{\alpha})$ queries, for some $\alpha > 0$, in graphs with bounded~$(r-1$)-admissibility.

A natural continuation to this line of research is to characterize the testable properties in the general graph model when the input is restricted to a family of graphs.
This goal has yet to be achieved. However, it turns out that if one considers a weaker model yet quite natural model, such results are possible.
Czumaj and Sohler~\cite{czumaj2019characterization}
Czumaj and Sohler~\cite{czumaj2019characterization}
introduced the \emph{random neighbour oracle} model, which is a variant of the general model in which the only query the tester can use is a \emph{random neighbour query}: the query is a vertex~$v$ in the graph, and the answer is a neighbour of~$v$ selected uniformly at random.
We note that while upper bounds in this model carry over to the general graph model, the same does not hold for lower bounds.

In this model, Czumaj and Sohler proved that every property that is testable with a one-sided error can be characterized by a finite set of forbidden subgraphs. 
This follows directly from the one-sidedness of a tester in this model and the fact that it can only ask random queries.
Specifically, by using the oracle the tester can discover whether two distinct vertices are adjacent but can only conclude with probability strictly lower than $1$, whether two distinct vertices are not adjacent. 
Now, since a one-sided error tester can only reject if the input graph is not in the family, it can only reject if it actually discovered a forbidden subgraph.

The relationship between testability and subgraphs is even stronger than that just described: Czumaj and Sohler noticed that the above implies that if $H$-freeness is testable with a one-sided error on a graph family~$\mathcal G$, then \emph{all} graph properties testable with a one-sided error on~$\mathcal G$ are exactly those that can be defined by a finite set of forbidden subgraphs. This allowed them to obtain a full characterization of the one-sided error testable properties for the family of planar graphs and any other family of minor-free graphs.

The random neighbour model and the above insights provide a powerful tool for characterizing testable properties for specific families of graphs: simply show that the family of graphs is $H$-\emph{testable}, which means that $H$-freeness is testable for every graph~$H$ with one-sided error. This sets a new and ambitious goal: characterizing all the $H$\emph{-testable} families of graphs, thus reaching the limits of this approach. In this paper, we take the first step in this direction by presenting a sufficient criterion under which a family of graphs is $H$-testable. We believe that our results represent a meaningful advance towards this broader characterization. To the best of our knowledge, previous work has focused only on specific graph families.


 The formulation of our criterion requires a collection of graph families denoted by $\mathcal{A}^p_r$, where $p, r \in \Naturals$. Each family $\mathcal{A}^p_r$ consists of all graphs whose $r$-admissibility is bounded by $p$. \emph{$r$-admissibility} is a widely-used measure from structural graph theory.
We provide an intuition for this measure later on in this section and a formal definition in Section~\ref{sec:Admissibility}.
For now it should be sufficient to note that
for every fixed $p\in \Naturals$ we have $\mathcal A^p_1 \supset \mathcal A^p_2 \supset \ldots \supset \mathcal A^p_\infty$, and that the result of Awofeso \etal~\cite{AwofesoGLLR25}, implies that \textbf{none of these graph families is $H$-testable}. However, our criterion for $H$-testability instead shows that suitable intersections of these families work. Specifically, we show that a family of graphs~$\mathcal F$ is $H$-testable if
there exists a function $f\colon\Naturals \longrightarrow \Naturals$, such that for every  
 $r\in \Naturals$ we have~$\mathcal F \subseteq \mathcal A^{f(r)}_r$. In other words, the $r$-admissibility of each member of the graph family is bounded by $f(r)$ for \emph{every} $r \in \Naturals$, and say that such a family has \emph{universally bounded $r$-admissibility}.  
 \marginnote{Universally bounded $r$-admissibility}

Thus, in contrast to the results on minor-closed families, $r$-admissibility provides both positive and negative results for $H$-testability: if for a family $\mathcal F$ there exists~$r,p \in \Naturals$ such that $\mathcal A^p_r \subseteq \mathcal F$, then $\mathcal F$ is not $H$-testable. If a family has universally bounded $r$-admissibility, then it is $H$-testable. This leads us to believe that our line of work has the potential to lead to a complete characterization of all $H$-testable graph families.

In order to provide some further context on how widely applicable our result is, we now present a (probably incomplete) list of graph families that are known to have universally bounded $r$-admissibility\footnote{
    Such families are known as having \emph{bounded expansion} in the literature, which is unrelated to the notion of expander graphs. For a consistent presentation, we prefer our terminology here.
} and hence are $H$-testable. 
From algorithmic and classical graph theory, we have graphs with a bounded tree-width, planar graphs, and graphs that exclude a (topological) minor or immersion~\cite{sparsity} (thus generalising Czumaj and Sohler's result). From graph drawing and colouring theory, we have graphs with bounded Euler genus, bounded crossing number per edge, bounded queue number, and bounded stack number (also known as bounded book thickness)~\cite{BndExpExamples}. 
Geometric graphs with low density also have universally bounded $r$-admissibility~\cite{harpeledApproximation2017}; notable examples include string graphs and intersection graphs of balls in $\mathbb{R}^d$ with constant depth, or related the nearest-neighbour graphs of points in $\mathbb{R}^d$ with bounded ply number~\cite{dvorak2022meta, miller1997separators}. Furthermore, many random graph models used for studying real-world networks generate graphs with universally bounded $r$-admissibility with high probability under suitable parameter regimes. These include Erdos--Rényi graphs with a bounded average degree~\cite{BndExpExamples}, the Chung--Lu model and the configuration model for degree sequences with sufficiently light tails~\cite{SparsityNetworks}, the stochastic block model~\cite{SparsityNetworks}, and the random intersection graph model~\cite{RIGs}.
The catalogue of graph families known to have universally bounded $r$-admissibility continues to grow and remains an active area of research.

This plurality of results stems from structural graph theory, initiated by the seminal work of Nešetřil and Ossona de Mendez~\cite{sparsity}, which aims to unify the treatment of diverse sparse graph families through a common framework. Central to this framework are measures such as $r$-admissibility, weak and strong $r$-colouring numbers, the $r$-neighbourhood complexity, and the density of $r$-shallow (topological) minors,.
A key observation is that if any one of these measures is bounded for some fixed $r$, then \emph{all of them are}\footnote{See \cite{sparsity} for the equivalence of various measures related to shallow (topological) minors, \cite{zhuColouring2009} for the equivalence of the weak- and strong colouring numbers to the shallow minor measures, \cite{reidlNeighbourhood2019} for the equivalence of neighbourhood complexity to shallow minor measures, and~\cite{dvorakDomset2013} for the equivalence of admissibility to the weak colouring number. We recommend the recent survey by Siebertz~\cite{siebertz2025survey} for a summary and detailed bibliography.}, usually for the same parameter~$r$ or within some small factor. Consequently, our result for \emph{$r$-admissibility} applies to all these structural parameters, thus the broad applicability of our criterion.

We formally define $r$-admissibility in Section~\ref{sec:Admissibility} and provide a short informal description now.
The family of all graphs with $1$-admissibility bounded by $p$ ($\mathcal A^p_{1}$), includes all graphs whose vertices can be arranged from right to left so that every vertex has at most $p$ neighbours on its left.
$\mathcal A^p_{2}$, includes all graphs whose vertices can be arranged from right to left so that the following holds:
every vertex $v$ can reach at most $p$ vertices on its left by a set of paths of length at most $2$, such that only $v$ is common to more than one of these paths, and for every one of these paths that has length $2$, the middle vertex is from the right of $v$.
$\mathcal A^p_{2}$, for $r>2$ is the natural generalization of the previous.

Note that these families of graphs are very general.
If one takes any graphs and replaces each one of its edges with a length path $2$ (which includes a new vertex), the result is a graph in $\mathcal A^2_{1}$. 
If a length $r+1$ paths were used instead of the length $2$ paths, the result is a graph in $\mathcal A^2_{r}$.

\paragraph*{Technical overview}

 We show that for a fixed graph~$H$ with $k$ vertices, $H$-freeness is testable under the random neighbour oracle model with one-sided error in the family~$\mathcal A^p_{k}$ where the query complexity depends only on~$k$, the proximity parameter and the bound~$p$ on the~$k$-admissibility. 

This work continues the study initiated by Awofeso \etal~\cite{AwofesoGLLR25}. Their analysis relied heavily on the structural properties of cycles. Moreover, they proved their results in the general graph model that, as stated, allows degree queries, which enabled them to easily treat vertices of low degree in a different manner from those of high degree. Such queries are not available in the model considered here, and our tester treats every vertex in the same manner.

As often with property testing, the testing algorithm itself is quite simple, and the challenge lies in proving the correctness of the algorithm.
In the following description of our~$H$-freeness tester, when we say `a small number', we mean a number that does not depend on the size of the graph.
Our tester for $H$-freeness initially selects a small random subset~$S$ of vertices of the input graph. Then, for each vertex in $S$ it uses a small number of neighbour queries. Once this is done, all the vertices returned by these queries are added to~$S$. This is repeated a small number of times. The tester rejects if the graph consisting of all the edges discovered this way has an $H$-subgraph, and otherwise it accepts. Next, we explain why this algorithm works as required.

If the input graph does not have an $H$-subgraph, then our algorithm does not discover an $H$-subgraph and accepts the input. 
In the analysis of the algorithm, when the input is $\epsilon$-far from $H$-freeness, we show that each of the following items holds with sufficiently high probability given the previous item holds (if there is one).
\begin{enumerate}[(i),nosep]
    \item One of the vertices in~$S$ is a vertex in a \emph{special} $H$-subgraph (we define special $H$-subgraphs later).
    \item Within a small number of iterations, the algorithm discovers a subset~$D$ of the vertices that we refer to as a \emph{stable} set for the special $H$-subgraph.
    \item Within a small number of iterations, a special vertex~$u$ referred to as a \emph{nadir} is discovered by the algorithm. For such a vertex it holds that $D\cup \{u\}$ is a stable set for some special $H$-subgraph, though not necessarily the original one. This implies that a eventually the algorithm discovers all the vertices of some $H$-subgraph.
    \item The algorithm discovers all the edges of the $H$-subgraph, whose vertices it discovered in the previous step.
\end{enumerate}

\noindent
The properties that make an $H$-subgraph of~$G$ `special' ensure that the surroundings of this subgraph are such that the above stages of the algorithm work with a sufficiently high probability. The algorithm itself does not need to know about this surrounding structure; in fact, it only appears in the analysis of our algorithm. To prove that special subgraphs exist, we use a trimming procedure in the spirit of~\cite{AwofesoGLLR25} that depends both on the structure of~$H$ as well as the structure of~$G$ and \emph{discovers} the existence of special subgraphs.

In summary, we obtain the following result using the above approach:

\begin{theorem}\label{thm:testable-F}
    Every graph family $\mathcal F$ with universally bounded $r$-admissibility is $H$-testable.
\end{theorem}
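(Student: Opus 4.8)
The plan is to reduce Theorem~\ref{thm:testable-F} to the single parametrised statement announced in the technical overview: for every graph $H$ on $k$ vertices and every $p \in \Naturals$, $H$-freeness is testable with one-sided error in the random neighbour oracle model on the family $\mathcal A^p_k$, with a number of queries depending only on $k$, the proximity parameter $\epsilon$, and $p$. Granting this, the theorem follows immediately: if $\mathcal F$ has universally bounded $r$-admissibility witnessed by $f\colon \Naturals \to \Naturals$, then for a fixed $H$ with $k = |V(H)|$ we have $\mathcal F \subseteq \mathcal A^{f(k)}_k$, so the tester for $\mathcal A^{f(k)}_k$ is in particular a valid one-sided $H$-freeness tester on all inputs from $\mathcal F$, and its query complexity depends only on $k$ and $\epsilon$. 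As $H$ is arbitrary, $\mathcal F$ is $H$-testable. All the work therefore goes into the parametrised statement.

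The tester is the obvious exploration algorithm: sample a set $S$ of $c_1(k,\epsilon,p)$ vertices of $G$ uniformly at random; then for $c_2(k)$ rounds, probe every currently discovered vertex with $c_3(k,\epsilon,p)$ independent random neighbour queries and add every returned vertex, together with the traversed edge, to the discovered graph; finally reject if and only if the discovered graph contains a subgraph isomorphic to $H$. One-sided correctness is immediate, since the discovered graph is always a subgraph of $G$. The entire difficulty is to show that when $G \in \mathcal A^p_k$ is $\epsilon$-far from $H$-free, the tester rejects with probability at least $2/3$.

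For this I would follow the four-event scheme of the technical overview, preceded by the structural lemma on which everything rests: that $\epsilon$-farness forces the existence of many \emph{special} $H$-subgraphs. To prove that lemma I would fix a $k$-admissibility witnessing order of $G$ and, using the shape of $H$, run a trimming procedure in the spirit of~\cite{AwofesoGLLR25} that iteratively discards vertices and edges that are ``dangerous'' for the exploration --- essentially those from which the relevant length-$\le k$ paths branch too much or run through too-high-degree vertices on the wrong side of the order. The total number of edges removed is controlled using that $\mathcal A^p_k \subseteq \mathcal A^p_1$, so $|E(G)| \le p\,|V(G)|$, together with the fact that the $k$-admissibility bound $p$ caps the number of length-$\le k$ paths from any vertex to its predecessors; choosing the thresholds large enough (and in the right order) makes this an arbitrarily small fraction of $|E(G)|$, so the trimmed graph is still $\epsilon/2$-far from $H$-free and hence contains many edge-disjoint copies of $H$. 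These surviving copies are the \emph{special} ones, and by construction every vertex and every short path needed to discover them carries random-neighbour branching small enough that the relevant one-step transition probabilities are bounded below by a constant $q = q(k,\epsilon,p)$. Given this, the four events are comparatively routine: (i) for $c_1$ large enough $S$ hits a vertex of some special copy, since special copies span a constant (depending on $k,\epsilon,p$) fraction of $V(G)$; (ii) since each relevant transition succeeds with probability $\ge q$ and the $c_3$ probes per round are independent, the exploration grows, round by round, a \emph{stable set} $D$ for that special copy; (iii) after $O(k)$ further rounds the exploration reaches the \emph{nadir} vertex $u$, at which point $D \cup \{u\}$ is stable for some (possibly different) special copy and in particular all $k$ of its vertices have been discovered; (iv) one or two more rounds then discover all of the at most $\binom{k}{2}$ edges among those $k$ vertices, each with probability $\ge q$ per probe. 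Taking $c_1$ and $c_3$ large enough that each of the $O(k)$ steps fails with probability below $1/(10k)$ gives overall rejection probability $\ge 2/3$, and the query count $c_1 \cdot c_3 \cdot (\text{branching over } c_2 \text{ rounds})$ is a function of $k,\epsilon,p$ only.

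I expect the main obstacle to be the structural lemma, and within it the tension built into the trimming: it must remove only a negligible fraction of edges --- which is where genuine $k$-admissibility, rather than mere degeneracy, is needed, because the exploration has to survive whole length-$k$ paths, not just individual edges --- while at the same time leaving the surviving copies embedded in a region where \emph{every} step of the random-neighbour exploration succeeds with constant probability. The subtlety is that at a high-degree vertex one cannot hope to traverse a \emph{prescribed} edge, so ``success'' has to mean reaching \emph{some} good continuation, and the definitions of ``special'' subgraph, ``stable'' set and ``nadir'' must be engineered precisely so that this weaker notion of success suffices to drive the induction in event (iii) --- all while remaining invisible to the algorithm itself. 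The probabilistic chaining in events (i)--(iv) is, by comparison, standard.
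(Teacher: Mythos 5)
Your proposal is correct and follows essentially the same route as the paper: you reduce the theorem to the parametrised statement that $H$-freeness is testable on $\mathcal A^{p}_{k}$ with $k=|V(H)|$ (the paper's Theorem~\ref{thm:testable-H}) via $\mathcal F \subseteq \mathcal A^{f(k)}_{k}$, and your sketch of that statement — the sample-and-explore tester, the trimming argument preserving $\epsilon/2$-farness, and the four-stage analysis through special copies, stable sets and nadirs — is exactly the paper's strategy. The one place your sketch stays at the level of the paper's overview rather than its proof is the ``replacement'' mechanism for weak nadirs (the \Strata{} machinery of Lemmas~\ref{lem:ParallelHSubgraphs} and~\ref{lem:ManyNadirs}), which you correctly flag as the main remaining obstacle.
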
 

\noindent
Combined with the result by Czumaj and Sohler~\cite{czumaj2019characterization} we therefore have a characterization of the testable properties of $\mathcal F$:
The theorem provides the necessary direction of the characterization, and Section~V of~\cite{czumaj2019characterization} provides the sufficient part\footnote{Note that these parts of~\cite{czumaj2019characterization} are general and are not specific to the planar or excluded-minor families}.

Theorem~\ref{thm:testable-F} is an immediate result of the following theorem, because if an input graph $G$ is from a universally bounded $r$-admissibility family, then~$G \in \mathcal A^p_r$ for $p= f(r)$.

\begin{theorem}\label{thm:testable-H}
    For every graph~$H$ and $p \in \Naturals$, $p \geq 2$, $H$-freeness is testable in $\mathcal A^p_r$ where~$r = |V(H)|$.
\end{theorem}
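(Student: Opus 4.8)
The plan is to design a simple "ball-growing" tester and prove its correctness via a trimming argument. The tester samples a small set $S_0$ of vertices uniformly at random, then repeats $k = |V(H)|$ rounds in which, for each discovered vertex, it issues a bounded number (depending on $k$, $p$, $\epsilon$) of random-neighbour queries and adds all returned vertices and the traversed edges to the discovered subgraph. It rejects iff the discovered subgraph contains a copy of $H$. Soundness (accepting $H$-free inputs) is immediate since we only reject on an actual discovered $H$-subgraph. The entire difficulty is the completeness direction: if $G \in \mathcal A^p_k$ is $\epsilon$-far from $H$-freeness, then with constant probability the tester discovers an entire $H$-subgraph — all $k$ vertices \emph{and} all its edges.

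**First I would** set up the structural backbone. Being $\epsilon$-far from $H$-freeness and having bounded degeneracy (which follows from $1$-admissibility $\le p$, since $\mathcal A^p_k \subseteq \mathcal A^p_1$) forces $G$ to contain $\Omega(\epsilon n)$ edge-disjoint, or at least "well-spread", copies of $H$; more carefully, one removes a bounded number of edges per copy and argues a linear-in-$n$ packing of $H$-subgraphs survives, so a random vertex lies in one with constant probability. Then I would invoke the $k$-admissibility bound: fix an admissibility order (equivalently a weak $k$-colouring order) witnessing $\wcol_k(G) \le g(p,k)$ for some function $g$; in the back-to-front ordering every vertex weakly reaches only boundedly many vertices to its left via paths of length $\le k$ whose internal vertices lie to the right. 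The key definitions of \emph{special} $H$-subgraph, \emph{stable} set $D$, and \emph{nadir} should be crafted so that: a special copy is one whose "reachability profile" in the admissibility order is controlled (e.g. the left-reachable sets of its vertices are small and the copy sits near the "bottom" of many short paths), a stable set $D$ is a bounded superset of the copy's vertices closed under the relevant short-path reachability, and the nadir $u$ is the unique (or canonical) leftmost vertex such that $D \cup \{u\}$ anchors \emph{some} special copy. The trimming procedure — delete copies whose surroundings are "bad" in the admissibility sense, bounding the number deleted by the admissibility parameter times a polynomial in $k$ — should leave $\Omega_{\epsilon,p,k}(n)$ special copies, giving step (i).

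**The hard part will be** the probabilistic analysis of discovery through short paths under the random-neighbour oracle, i.e. steps (ii)–(iv). The obstacle is that random-neighbour queries are biased by degree: from a high-degree vertex the probability of hitting a specific neighbour is tiny, so one cannot simply "walk" along a chosen length-$\le k$ path with constant probability. This is exactly where boundedness of $k$-admissibility (not just degeneracy) must be exploited: the admissibility order guarantees that the relevant vertices are reachable via \emph{many parallel} short paths internally-disjoint except at the endpoint, so even though each individual path is traversed with small probability, the union is hit with constant probability after a bounded number of queries per vertex. I would make this precise with a union/second-moment argument: letting $q$ be the per-vertex query budget, the expected number of "fresh" copies whose full vertex set gets discovered within $k$ rounds is bounded below by a constant, and a variance bound (using that distinct special copies have limited overlap, again by trimming) concentrates this. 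Step (iii)'s nadir is the device that lets a partial discovery be "completed": once $D$ is stable, the admissibility structure ensures the remaining undiscovered vertices of some special copy are short-path-reachable from $D$, and the nadir is reached within another bounded number of rounds. Finally step (iv), discovering all \emph{edges} of the copy once all its vertices are in $S$, follows because each such edge is an edge at a bounded-degeneracy vertex — but one still needs the random-neighbour query from the \emph{low-degree} endpoint to succeed, which holds with probability $\ge 1/\deg \ge 1/g(p,k)$ only if that endpoint has bounded degree; where it does not, one re-routes through the admissibility order again, i.e. the edge itself is "short-path redundant". I expect reconciling this last point cleanly — ensuring \emph{every} edge of a special copy is discoverable, not merely its vertices — to be the most delicate part, and the definition of "special" must be strengthened accordingly.

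**To conclude**, assembling the conditional probabilities $\prob[(i)] \cdot \prob[(ii)\mid(i)] \cdot \prob[(iii)\mid(ii)] \cdot \prob[(iv)\mid(iii)] \ge \delta(\epsilon,p,k) > 0$ and amplifying by repeating the whole tester $O(1/\delta)$ times yields rejection with probability $\ge 2/3$, while the query complexity is $|S_0| \cdot q^{k} = \mathrm{poly}(1/\epsilon, g(p,k))^{k}$, independent of $n$. Combined with trivial soundness, this proves $H$-freeness is one-sided testable in $\mathcal A^p_k$, which is Theorem~\ref{thm:testable-H}.
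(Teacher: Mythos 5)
Your overall architecture matches the paper's: the same ball-growing tester, trivial soundness, a trimming procedure carried out only in the analysis, and the same four-stage conditional chain (hit a copy, descend to its minimum, grow a stable set, then collect the edges). The genuine gap is in stage (iii), which is exactly where the real work lies, and your proposal leaves it at the level of ``the key definitions \emph{should be crafted so that}'' things work. The difficulty you half-acknowledge but do not resolve is this: when the algorithm has discovered a partial vertex set $D$ and tries to extend it, the random-neighbour queries can only guarantee finding a vertex that plays the right role in \emph{some} $H$-subgraph containing $D$'s relevant prefix --- not in the particular copy $\J$ you started from. Without further structure there is no reason the union of such discoveries lies in any single $H$-subgraph. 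The paper resolves this with machinery you do not supply: \emph{useful pairs} (a connected branch of $\J$ hanging off a prefix that is a separator and precedes the branch in the admissibility order), \emph{Strata} of many pairwise similar, edge-disjoint useful pairs sharing one prefix (guaranteed in $\GT$ by trimming steps (3) and (4)), a dichotomy between $\delta$-strong nadirs (one target vertex reachable by $\Omega(\deg)$ edge-disjoint admissible paths, so it is found directly) and $\delta$-weak Strata (where a martingale argument finds many \emph{distinct} nadirs, and Lemma~\ref{lem:ParallelHSubgraphs} extracts $2r$ useful pairs meeting only in the prefix, one of which is vertex-disjoint from the rest of $\J$ and can be \emph{swapped in} to form a new $H$-subgraph for which the discovered set is again stable, via Lemma~\ref{lem:HStable}). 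This swap construction is the heart of the proof and is absent from your proposal.

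Two further points. First, your fallback of a ``union/second-moment argument'' over fixed special copies does not work: for a fixed copy $\J$, the probability of discovering all of $V(\J)$ is not bounded below by anything like $1/n$, because each extension step from a high-degree vertex hits the specific next vertex of $\J$ only with probability inversely proportional to that degree; the whole point of the nadir/replacement mechanism is that one can only guarantee completing \emph{some} copy, not a fixed one. Second, you flag stage (iv) (collecting the edges) as the most delicate part, but given the right trimming it is the easiest: the paper's trimming step (1) ensures that for every surviving edge $uv$ with $u >_{\GT} v$ the \emph{larger} endpoint satisfies $\deg_G(u)\le\alpha$, so each edge of an $H$-subgraph of $\GT$ is found from its low-degree endpoint with probability at least $1/\alpha$ per query; no re-routing is needed.
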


\noindent
Theorem~\ref{thm:testable-H} follows from the more technical Theorem~\ref{thm:main} in Section~\ref{sec:PBFS}.

\section{Preliminaries}

\marginnote{$\Naturals$,$\Reals$,$[k]$, $[k_1,k_2]$}
We write $\Naturals$ for the Natural numbers, $\Reals$ for the reals, and $\Reals^+$ for the set of positive reals.
For an integer $k$, we use $[k]$ as a shorthand for the set $\{ 1, 2, \ldots, k\}$.
For integers $k_1$ and $k_2$ such that $k_1 < k_2$, we use $[k_1,k_2]$ as a shorthand for the set $\{k_1,k_1+1 \ldots, k_2\}$ and $(k_1,k_2]$ as a shorthand for the set $\{k_1+1, k_1+2, \ldots, k\}$.
%

\newcommand{\dist}{\mbox{dist}}
All graphs considered in this work are simple undirected graphs.
For a graph $G$, we use $V(G)$ and $E(G)$ to refer to its vertex and edge set, respectively.  The degree of a vertex $v\in V(G)$, denoted $\deg_G(v)$, is the number of edges incident on $v$.
For a set $D\subseteq V(G)$, the notation $G[D]$ denotes the subgraph of $G$ induced on the vertices of $D$.
$N_G(v)$ is the set of neighbours of the vertex $v$ in the graph $G$.
The distance between two vertices $u,v\in V(G)$ is denoted $\dist_G(u,v)$. The distance between a vertex $u \in V(G)$ and a subset $D\subseteq V(G)$ is the minimum of $\dist_G(u,v)$ over every $v \in D$.

\marginnote{$xPy$}
For sequences of vertices~$x_1x_2\ldots x_\ell$ (and in particular, paths), we use the shorthand $x_1Px_\ell$ to represent the sequence. In addition, we use $Px_\ell$ (or $x_1P$) to refer to a subsequence of $x_1Px_\ell$. All paths considered in this work are simple paths. Though paths here are undirected, we often use terms used for directed paths by specifying a start and end vertex. 
The length of a path $P$ is the number of edges it has. We denote this value by $\length(P)$.

\marginnote{Ordered graph, -- subgraph}

An \emph{ordered graph} is a pair $\G = (G, \leq)$ where $G$ is a graph and $\leq$ is a
total order relation on $V(G)$.
We write $\leq_\G$ to denote the ordering of $\G$ and extend this notation to derive the relations $<_\G$,
$>_\G$, $\geq_\G$.
We use the notation we use for graphs also for ordered graphs; for example, $V(\G)$ and $E(\G)$ are $\G$'s vertex set and edge set, respectively. An \emph{ordered subgraph} of~$\G$ is an ordered graph~$\J = (J,\leq_\J)$ where~$J$ is a subgraph of~$G$ and~$\leq_\J$ is the order~$\leq_\J$ restricted to~$V(H)$.

\begin{definition} \marginnote{Order-isomorphism}
    Let $\J = (J_1, \leq_{\J_1})$ and $\J_2 = (J_2, \leq_{\J_2})$ be two ordered graphs.
    A graph isomorphism $\phi\colon V(\J_1)\longrightarrow V(\J_2)$ is an \emph{order-isomorphism} if for every two distinct vertices $u,v\in V(\J_1)$ we have $u <_{\J_1} v$ if and only if $\phi(u) <_{\J_2} \phi(v)$. 
\end{definition}

For an order-isomorphism $\phi\colon V(\J_1)\longrightarrow V(\J_2)$, and $D\subseteq V(\J_1)$, $W \subseteq V(\J_2)$ we write $\phi(D) = W$ to indicate that the image of~$D$ under~$\phi$ is~$W$. 

\marginnote{$H$-subgraph, $H$-subgraph}
In the following we will fix some graph~$H$ and an ordered graph~$\G$. To avoid lengthy descriptions, we will call an ordered subgraph~$\J$ of~$\G$ an \emph{$H$-subgraph} if $\J$ isomorphic to~$H$.


\paragraph*{Property Testing in the Random Oracle Graph Model}

\marginnote{Graph property, far, close}%
A \emph{graph property} (or simply \emph{property}) $\mathcal{P}$ is a family of graphs that is closed under isomorphism. A graph $G$ \emph{has the property $\mathcal{P}$} if $G\in \cal{P}$.

For a graph $G$ and the proximity parameter $0\leq \epsilon \leq 1$, we say $G$ is $\epsilon$-far from a property $\mathcal{P}$ if at least $\epsilon |V(G)|$ edges must be added to $G$ or removed from $G$ to obtain a graph in $\mathcal{P}$.
\marginnote{$H$-free, $\mathcal H$-free}
A graph $G$ is $\epsilon$-close to a property $\mathcal{P}$ if it is not $\epsilon$-far from it.

The central property in this work is $H$-freeness: for a fixed graph~$H$, a graph~$G$ is $H$-free if it does not contain a copy of~$H$ as a subgraph. 
For a family of graphs~$\mathcal H$, a graph~$G$ is $\mathcal H$-free if it does not contain any graph~$H \in \mathcal H$ as a subgraph. 

Note that when measuring the proximity to $H$-freeness, it is enough to consider the removal of edges. That is, a graph~$G$ is $\epsilon$-far from $H$-freeness if at least $\epsilon |V(G)|$ edges must be removed from~$G$ to get an $H$-free graph. This is simply due to the fact that the addition of edges cannot decrease the number of $H$-subgraphs in~$G$.

\marginnote{Random Oracle, i.u.r.}%
A \emph{random oracle} to a graph $G$, is an oracle that, given as input a vertex $v \in V(G)$, returns an independent and uniformly at random (i.u.r.) selected neighbour of $v$.

\begin{definition}\marginnote{Property tester}%
A \emph{property tester} for a property $\mathcal{P}$
of a family of graphs $\mathcal{G}$ is a randomized algorithm $\Alg$ that receives as input parameters $n\in \Naturals$, $\epsilon > 0$ and random oracle access to a graph $G = ([n],E)$ in $\mathcal{G}$.
If $G$ is $\epsilon$-far from $\cal{P}$, then $\Alg$ rejects with probability at least $2/3$. If the graph~$G\in \cal{P}$, then $\Alg$
accepts with probability~$1$ (if the tester has a \emph{one-sided} error)
or with probability at least~$2/3$ (if the tester has a \emph{two-sided} error).
\end{definition}

\noindent
In this paper, the tester has a one-sided error.
\marginnote{Query Complexity}%
The \emph{query complexity} of a property tester is the maximum number of queries it uses as a function of the distance parameter $\epsilon$, the size of the input $n$ and the attributes the property $\mathcal{P}$ and the family $\mathcal{G}$.
A property $\mathcal{P}$ is \emph{testable} for a family of graphs $\mathcal{G}$ if it has a property tester whose query complexity is independent of the size of the input graph.
We are particularly interested in graph families where $H$-freeness is testable for \emph{every} graph $H$, motivating the following terminology:

\begin{definition}\marginpar{$H$-testable}
    We say a graph family $\mathcal{F}$ is \emph{$H$-testable} if for every graph~$H$
    the property of being $H$-free is testable on $\mathcal{F}$.
\end{definition}


\section{Graph admissibility}\label{sec:Admissibility}

\newcommand{\Target}{\mathrm{Target}}

To define admissibility, we need the following concepts and notation.
\begin{definition}\label{def:pathAdmissible}\marginnote{$r$-admissible path}
Let $\G = (G,\leq)$ and $v \in V(G)$. 
A path~$vPx$ is $\G$-\emph{admissible} if $x <_\G v$ and $\min_{w\in P} {w} >_\G v$ (where the minimum is with respect to the order of the vertices of $\G$). That is, the path goes from~$v$ to~$x$ using only vertices $w$ such that $w >_\G v$ and~$x$ satisfies ~$v >_{\G} x$. An admissible path is 
\emph{$r$-admissible} if its length is at most $r$. 
\end{definition} 

\begin{definition}\marginnote{$\Target$}
For every integer $i > 0$, we let $\Target^i_{\G}(v)$ be the set of all vertices $u\in V(G)$ such that 
$u$ is reachable from $v$ via an \emph{$r$-admissible} path $vPu$ of length at most $i$. We omit the subscript~$\G$ when it is clear from context.  
\end{definition}

\begin{definition} \marginnote{$(r,\G)$-admissible path packing}
An \emph{$(r,\G)$-admissible path packing}
is a collection of paths~$\{vP_ix_i\}_i$ with a joint \emph{root}~$v$ and the additional properties that every path~$vP_ix_i$ is~$r$-admissible and the subpaths~$P_ix_i$ are all pairwise vertex-disjoint. In particular, all endpoints~$\{x_i\}_i$ are distinct. 
\marginnote{$\pp^r_\G$}
We write $\pp^r_\G(v)$ to denote the maximum size of any $r$-admissible path packing rooted at~$v$. 
\end{definition}

\noindent
Examples of $2$- and $3$-admissible path packings are depicted in the figure below.

\begin{definition}[Admissibility]\label{def:admissibility} \marginnote{$\adm_r(\G)$, $\adm_r(G)$}
The \emph{$r$-admissibility} of an ordered graph~$\G$, denoted $\adm_r(\G)$ and the $r$-admissibility of an unordered graph~$G$, denoted $\adm_r(G)$ are\looseness-1\footnote{%
    Note that some authors choose to define the admissibility as 
    $1 + \max_{v \in \G} \pp^r_\G(v)$ as this matches with some other, related measures.
} 
\[
\adm_r(\G) \defeq \max_{v \in \G} \pp^r_\G(v) \quad\text{and}\quad
\adm_r(G) \defeq \min_{\G \in \pi(G)} \adm_r(\G),
\]
where $\pi(G)$ is the set of all possible orders of the vertices of~$G$.
\end{definition}

\noindent
\marginnote{Admissibility ordering}%
If~$\le_\G$ is an order of the vertices of~$G$ such that $\adm_r(\G) = \adm_r(G)$, then we call $\le_\G$ an \emph{$r$-admissibility order} of~$G$. 

\begin{GrayBox}{}\small
    \begin{tightcenter}
    \includegraphics[width=0.95\textwidth]{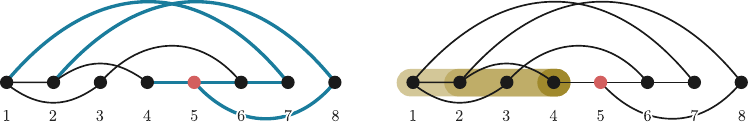}
    \end{tightcenter}
    \smallskip
    
    On the left, an example of a maximum $(3,\G)$-admissible path packing rooted at the vertex~$5$ in an ordered graph~$\G$. It contains the three paths $(5,4)$, $(5,6,7,2)$, and $(5,8,2)$. 
    The $3$-admissibility of this ordering is in fact three, with vertices~$5$ and~$6$ having $3$-admissible path packings of size three and all other vertices having packings of size less than three.
    
    On the right, the sets $\Target^1(5) = \{4\}$, $\Target^2(5) = \{2,3,4\}$, and $\Target^3(5) = \{1,2,3,4\}$
     are highlighted.
\end{GrayBox}

\begin{definition}
    \marginnote{$(p,r)$-admissible, $\mathcal A^p_r$}
\noindent
We say that a graph $G$ is \emph{$(p,r)$-admissible} if $\adm_r(G) \leq p$.
Note that, by definition, if a graph $G$ is $(p,r)$-admissible, it is also $(p,r')$-admissible for all~$r' \leq r$. We denote the family of all $(p,r)$-admissible graphs by~$\mathcal A^p_r$.
\end{definition}

\noindent
Note that $r$-admissibility is monotone under taking subgraphs, so if a graph is $(p,r)$-admissible, then so are all its subgraphs. We will frequently make use of this fact in the following way:

\begin{proposition}\label{prop:subgraph-order}
    Let~$G,J$ be graphs and let~$\G$ be a $(p,r)$-admissible ordering of~$G$. 
    Let $\J$ be a $H$-subgraph of~$G$ isomorphic to~$H$ and let~$\G'$ be the ordering of~$G'$ according to~$\G$.
    Then $\J$ is $(p,r)$-admissible.
\end{proposition}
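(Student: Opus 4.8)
The plan is to prove Proposition~\ref{prop:subgraph-order} directly from the monotonicity of $r$-admissibility under taking subgraphs, which was just noted in the text immediately preceding the statement. First I would untangle the (slightly garbled) notation: we are given a graph $G$, a $(p,r)$-admissible ordering $\G = (G, \le_\G)$, and an ordered subgraph $\J = (J, \le_\J)$ of $\G$, where $\le_\J$ is the restriction of $\le_\G$ to $V(J)$. The claim is that $\adm_r(\J) \le p$, i.e. $\J$ is $(p,r)$-admissible as an ordered graph. (Whether $\J$ happens to be isomorphic to $H$ is irrelevant to the argument; it is just the intended application.)

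The key step is the following containment claim: for every vertex $v \in V(\J)$, every $(r,\J)$-admissible path packing rooted at $v$ is also an $(r,\G)$-admissible path packing rooted at $v$. To see this, fix such a packing $\{vP_ix_i\}_i$ in $\J$. Each $P_i$ uses only edges of $J \subseteq G$, so the paths exist in $\G$. Admissibility of $vP_ix_i$ in $\J$ means $x_i <_\J v$ and every internal vertex $w$ of $P_i$ satisfies $w >_\J v$; since $\le_\J$ is the restriction of $\le_\G$, these same inequalities hold with respect to $\le_\G$, so each $vP_ix_i$ is $r$-admissible in $\G$. The length bound and the pairwise vertex-disjointness of the $P_ix_i$ are properties of the paths themselves and transfer verbatim. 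Hence the packing is $(r,\G)$-admissible, which gives $\pp^r_\J(v) \le \pp^r_\G(v)$ for every $v \in V(\J)$.

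From here the conclusion is immediate:
\[
\adm_r(\J) = \max_{v \in V(\J)} \pp^r_\J(v) \le \max_{v \in V(\J)} \pp^r_\G(v) \le \max_{v \in V(\G)} \pp^r_\G(v) = \adm_r(\G) \le p,
\]
where the middle inequality uses $V(\J) \subseteq V(\G)$ and the last uses the hypothesis that $\G$ is $(p,r)$-admissible. Thus $\J \in \mathcal A^p_r$.

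I do not expect any real obstacle here — the statement is essentially a restatement of subgraph-monotonicity of $\adm_r$ at the level of a fixed ordering, and the only thing to be careful about is bookkeeping: making sure the order relations on the internal and end vertices of each packing path are genuinely inherited from $\G$ (which is exactly what "ordered subgraph" gives us), and that we quantify over $v \in V(\J)$ rather than all of $V(\G)$ when comparing the two $\pp^r$ values. If one wanted, one could phrase the whole thing as a one-line corollary of a lemma stating $\pp^r_{\G'}(v) \le \pp^r_\G(v)$ whenever $\G'$ is an ordered subgraph of $\G$ and $v \in V(\G')$, but given the context a direct argument as above is cleanest.
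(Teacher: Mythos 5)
Your proof is correct and fills in exactly the argument the paper leaves implicit: the paper states the proposition as an immediate consequence of the subgraph-monotonicity of $r$-admissibility noted just before it, and your packing-by-packing verification is the standard way to justify that monotonicity at the level of a fixed ordering. No issues.
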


\noindent
The second fact that we will frequently use is that if a graph is $(p,r)$-admissible, then the number of its edges is bounded from above:

\begin{fact} \label{fact:num-edges} 
If $G$ is $(p,r)$-admissible with $n$ vertices, then $|E(G)|\le pn $.
\end{fact}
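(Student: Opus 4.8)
The plan is to use the defining order of $(p,r)$-admissibility and a charging argument that assigns each edge of $G$ to one of its endpoints. Let $\G = (G, \leq_\G)$ be an $r$-admissibility order witnessing $\adm_r(G) \le p$, so that $\pp^r_\G(v) \le p$ for every vertex $v$. For each vertex $v$, call an edge $uv$ a \emph{back-edge of $v$} if $u <_\G v$; orient every edge of $G$ from its larger endpoint (in $\leq_\G$) to its smaller one, so each edge is a back-edge of exactly one vertex. It then suffices to show that every vertex $v$ has at most $p$ back-edges, since summing over all $n$ vertices gives $|E(G)| \le pn$.

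The key observation is that a collection of $r$-admissible \emph{paths} of length exactly $1$ rooted at $v$ is automatically an $(r,\G)$-admissible path packing: a length-$1$ path $vx$ with $x <_\G v$ is $\G$-admissible (the condition $\min_{w \in P} w >_\G v$ is vacuous, as $P$ — the interior — is empty), it has length $1 \le r$, and distinct such paths $vx_i$, $vx_j$ share only the root $v$, so their subpaths (the single vertices $x_i$) are pairwise disjoint. Hence, if $v$ has $d$ back-edges $vx_1, \dots, vx_d$, these $d$ length-$1$ paths form an $(r,\G)$-admissible path packing of size $d$, which forces $d \le \pp^r_\G(v) \le p$. This is exactly the bound we need.

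Putting the pieces together: $|E(G)| = \sum_{v \in V(G)} (\text{number of back-edges of } v) \le \sum_{v \in V(G)} p = pn$, as claimed. I do not expect any genuine obstacle here; the only point that needs a moment's care is checking that the $\G$-admissibility condition from Definition~\ref{def:pathAdmissible} is satisfied \emph{vacuously} for length-$1$ paths (there are no interior vertices $w$ to constrain), so that single edges to smaller vertices really do count as admissible paths and can be packed freely. Everything else is routine bookkeeping with the orientation.
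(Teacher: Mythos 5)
Your proof is correct: the paper states this fact without proof, and your charging argument (each edge charged to its larger endpoint, whose back-edges form an $(r,\G)$-admissible path packing of length-$1$ paths, hence at most $p$ of them) is exactly the standard and intended justification, coinciding with the $h=1$ case of Proposition~\ref{prop:SmallNeighbours}. Your care in checking that the admissibility condition is vacuous for length-$1$ paths is well placed and the argument has no gaps.
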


\noindent
As indicated in the Introduction, we show that families whose $r$-admissibility can be bounded for every~$r$ are $H$-testable. Formally, such families are defined as follows:

\begin{definition}
    \marginnote{Universally bounded $r$-admissibility}
    A graph family $\mathcal F$ has \emph{universally bounded $r$-admissibility} if there
    exists a computable function~$f\colon \Naturals \to \Naturals$ such that for every~$r \geq 1$ it holds that $\mathcal F \subseteq \mathcal A^{f(r)}_r$.
\end{definition}

\noindent
The following is a well-known result in the field of sparse graphs; we reproduce it in Appendix~\ref{sec:appx1} using our notation for completeness. 

\newcommand{\PropSmallNeighbours}{}
\begin{restatable}{proposition}{ab}\label{prop:SmallNeighbours}
Let $\G = (G,\leq)$ be such that $\adm_r(G) = p$,
then for every $v\in V(G)$, and $h\in [r]$ we have $|\Target^h_{\G}(v)| \leq p(p-1)^{h-1}$.
\end{restatable}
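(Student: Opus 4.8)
The plan is to bound $|\Target^h_\G(v)|$ by counting, for each vertex $u$ that is $h$-admissibly reachable from $v$, the number of distinct vertices from which $u$ can be the endpoint of an admissible path of length exactly $1$; iterating this over the ``levels'' $h = 1, 2, \ldots, r$ gives the geometric-series-style bound $p(p-1)^{h-1}$. Concretely, I would first record the base case: $\Target^1_\G(v)$ consists of the $\G$-smaller neighbours of $v$ (via length-$1$ admissible paths, which have no internal vertices), and an admissible path packing consisting of these single-edge paths shows $|\Target^1_\G(v)| \leq \pp^r_\G(v) \leq \adm_r(\G) = p$.

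For the inductive step, suppose $u \in \Target^h_\G(v)$, witnessed by an admissible path $vPu$ of length at most $h$. I would like to say: the second vertex $w$ of this path (the neighbour of $v$ on $P$) satisfies $w >_\G v$, and the subpath $wP u$ is then an admissible path \emph{from $w$} of length at most $h-1$, so $u \in \Target^{h-1}_\G(w)$ for some $w \in N_\G(v)$ with $w >_\G v$. Here is the subtlety that needs care: the "tail" $wPu$ is admissible with respect to $\G$ as a path rooted at $w$ only if its internal vertices are all $>_\G w$ and its endpoint $u$ satisfies $u <_\G w$. The internal vertices are $>_\G v$ by assumption, but we need them $>_\G w$; likewise $u <_\G v$ but we need $u <_\G w$. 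Neither is automatic. So the clean induction is actually on the reachability sets of the $\G$-larger neighbours of $v$, but one must instead argue more directly: enumerate admissible paths from $v$ by their first edge $vw$ (there are at most $\deg$ choices, which is not bounded) — this does not work either.

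The correct approach, and the one I expect the proof to take, is to induct on $h$ while simultaneously using the admissibility bound at \emph{every} relevant vertex. Fix $v$ and consider all admissible paths from $v$ of length $\le h$. Group them by which vertex of $\Target^1_\G(v) \cup \{\text{first internal vertex}\}$ they pass through. Actually the cleanest version: for $h \ge 2$, every vertex $u \in \Target^h_\G(v) \setminus \Target^{1}_\G(v)$ is reached by a path whose first step goes to some $w$ with $w >_\G v$; truncating at the \emph{last} vertex before the path first drops below $v$ does not help. Instead I would argue: consider a maximum admissible path packing $\{vP_ix_i\}$ at $v$; it has at most $p$ paths. Any $u \in \Target^h_\G(v)$ lies on a path that can be assumed to start with an edge to one of at most $p$ vertices $w_1, \ldots, w_p$ (the first internal vertices realised across all admissible paths — but there could be more than $p$ of these). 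The genuinely load-bearing fact must be: the number of distinct \emph{first internal vertices} of admissible paths from $v$ is at most $p$, because these first internal vertices, together with the single-edge paths, or rather the paths themselves, form a packing. I expect the proof to establish exactly this — that the "out-neighbourhood into the larger part that begins an admissible path" has size $\le p$ — perhaps via the inequality $|\Target^1| \le p$ applied cleverly, or via $\pp^r$ directly — and then conclude by induction that from each such first vertex $w$ one reaches at most $(p-1)^{h-1}$ further targets (the $-1$ coming from not being allowed to return along the used edge / the root being excluded). The main obstacle is precisely pinning down this branching bound of $p-1$ (rather than $p$) at internal vertices, which requires carefully tracking that the edge back toward $v$'s side is unavailable, so the effective packing at an internal vertex excludes one path.

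\medskip
\noindent\textbf{Proof sketch.} Induct on $h$. For $h = 1$: a $1$-admissible path from $v$ is an edge $vx$ with $x <_\G v$; the collection of these single edges is an $r$-admissible path packing rooted at $v$, so $|\Target^1_\G(v)| \le \pp^r_\G(v) \le \adm_r(G) = p$, matching $p(p-1)^0$. For the step, assume the bound for $h-1$ at every vertex. Let $u \in \Target^h_\G(v)$ via an admissible path $vwP'u$ of length $\le h$ (if the length is $1$ it is already counted in $\Target^1_\G(v)$). Then $w >_\G v$, and I claim the number of possible such first vertices $w$ is at most $p$: extend each to a full admissible path and take a maximal vertex-disjoint (off the root) subfamily; maximality plus the admissibility bound forces at most $p$ classes, so there are $\le p$ choices of $w$. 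Given $w$, the tail $wP'u$ — after possibly replacing it by a genuinely $\G$-admissible path from $w$ to some target and re-expanding — witnesses membership in a reachability set of size $\le (p-1)^{h-1}$ at $w$, the improvement from $p$ to $p-1$ coming from the edge $wv$ being excluded as a first step of any admissible continuation (since $v <_\G w$ would make it length-$1$, already separately accounted, and in the recursive count one path slot is consumed by the incoming direction). Multiplying, $|\Target^h_\G(v)| \le p \cdot (p-1)^{h-1}$, completing the induction and hence the proof for all $h \in [r]$. \qed
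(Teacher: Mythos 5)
You have correctly identified the intended shape of the bound (branching factor $p$ at the root, $p-1$ thereafter, enforced by admissible path packings), and you have honestly flagged the two places where the argument is delicate — but your sketch does not actually close either of them, so there is a genuine gap. First, your claim that there are at most $p$ distinct first internal vertices $w$ is not justified: paths with distinct first internal vertices need not be vertex-disjoint off the root, so they do not directly form an $(r,\G)$-admissible path packing, and ``take a maximal vertex-disjoint subfamily'' does not help, since maximality of such a subfamily does not force every first internal vertex to be represented in it. Second, your inductive step does not produce the factor $(p-1)^{h-1}$: with the induction hypothesis ``$|\Target^{h-1}_\G(w)| \le p(p-1)^{h-2}$ at every vertex'' you would only get $p\cdot p(p-1)^{h-2}$, and the tail $wP'u$ is in any case not admissible from $w$ (its internal vertices are only known to exceed $v$, not $w$). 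Your proposed repair — ``replacing it by a genuinely $\G$-admissible path from $w$ and re-expanding'' — is exactly the step that needs an argument, and none is given.

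The paper resolves both issues with one auxiliary construction that your sketch is missing: take the union $W$ of one chosen admissible path $vP_uu$ per target $u$, and extract from $W$ a tree $T$ rooted at $v$ of depth at most $h$ whose leaves are exactly $\Target^h_\G(v)$. Disjointness of the subtrees below distinct children of a node $y$ supplies, for free, a family of paths from $y$ that are pairwise vertex-disjoint off $y$; truncating each such path at the \emph{first} vertex that drops below $y$ in $\leq_\G$ makes them genuinely admissible from $y$ while preserving disjointness and distinctness of endpoints, hence they form a packing of size $\pp^r_\G(y)\le p$. Applied at the root this bounds the number of children of $v$ by $p$; applied at an internal vertex $y$ (where the path through the parent edge, truncated the same way, also joins the packing) it bounds the total degree of $y$ by $p$, hence the number of children by $p-1$. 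Counting leaves of such a tree gives $p(p-1)^{h-1}$. Without the tree (or an equivalent device that converts ``many targets'' into a single vertex-disjoint family at each branch point), the two steps you assert remain unproved.
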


\newcommand{\chain}{chain}
\begin{definition}\label{def:chain}
\marginnote{$\G$-\chain, $(r,\G)$-\chain}
Let $\G = (G,\leq)$.
A path $uLv$ is a $\G$-\emph{\chain} if for every $w\in uL$, we have $w>_\G v$. 
For $r\in \Naturals$, a $\G$-\chain~$uLv$ is an $(r,\G)$-\chain~if it does not have a $\G$-admissible subpath of length strictly larger than $r$.
\end{definition}

\noindent
We note that $(r,\G)$-\chain s differ from $(r,\G)$-admissible paths, since in an $(r,\G)$-chain $uLv$, there may exist a vertex $x\in L$, such that $u \geq_\G x \geq_\G v$, while if $uLv$ was an $(r,\G)$-admissible path, such a vertex does not exist. Consequently, an $(r,\G)$-chain decomposes into a sequence of $(r,\G)$-admissible paths.

The following propositions follow directly from the definitions of a \chain~and $(r,\G)$-admissible path (Definition~\ref{def:pathAdmissible}).
\begin{proposition}\label{prop:ChainVsAdmissible}
    Let $r \in \Naturals$ where $r>1$, and $\G = (G,\leq)$.
    Let $uLv$ be an $(r,\G)$-\chain~of length $t$ in $G$.
    There exists a \textbf{unique} set of $(r,\G)$-admissible paths $\{v_{i-1}L_i'v_i\}_{i=1}^{t'}$ such that $uLv = v_0L_1'v_1\dots v_{t'-1}L_{t'}'v_{t'}$.
\end{proposition}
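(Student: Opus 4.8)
The plan is to prove Proposition~\ref{prop:ChainVsAdmissible} by exhibiting the decomposition explicitly via a greedy left-to-right procedure along the chain, and then arguing uniqueness separately. Throughout, write the chain as $uLv = w_0 w_1 \ldots w_t$ with $w_0 = u$ and $w_t = v$, and recall that being a $\G$-chain means $w_j >_\G v$ for all $j < t$ (equivalently all internal vertices, and $u$ itself, lie strictly above the terminal vertex $v$ in $\leq_\G$).

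First I would construct the decomposition. Define the breakpoints of the chain to be the indices where the chain is ``as low as it has been so far'': set $i_0 = 0$, and having defined $i_{j}$, let $i_{j+1}$ be the largest index $i > i_j$ such that the subpath $w_{i_j} \ldots w_{i}$ is $\G$-admissible, i.e.\ $w_i <_\G w_{i_j}$ and every strictly-internal vertex $w_{i_j+1}, \ldots, w_{i-1}$ is $>_\G w_{i_j}$. I would first check this is well defined: since $w_t = v <_\G w_{i_j}$ (using that $w_{i_j}$ is an internal vertex of the chain, or is $u$, hence $>_\G v$), the set of valid $i$ is nonempty, so $i_{j+1}$ exists and $i_{j+1} > i_j$; thus the process strictly increases and terminates, and I would argue it terminates exactly at $t$ — if $i_{j+1} < t$ then $w_{i_{j+1}} <_\G w_{i_j}$ and one shows the process continues, and more importantly $w_{i_{j+1}}$ becomes the new ``lowest'' vertex; the key point is that the terminal index $t$ is always a breakpoint because $v$ is strictly below every $w_{i_j}$. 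Let $t'$ be the number of steps, so $i_{t'} = t$, and set $v_j = w_{i_j}$ and $L_j' = w_{i_{j-1}} \ldots w_{i_j}$. By construction each $v_{j-1} L_j' v_j$ is $\G$-admissible; it remains only to check each is an $(r,\G)$-admissible path, i.e.\ has length at most $r$ — but each $L_j'$ is a subpath of $uLv$ that is $\G$-admissible, and since $uLv$ is an $(r,\G)$-chain it has no $\G$-admissible subpath of length greater than $r$, so $\length(L_j') \le r$. This gives existence.

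For uniqueness, suppose $uLv = v_0 L_1' v_1 \ldots v_{t'-1} L_{t'}' v_{t'}$ is \emph{any} decomposition into $(r,\G)$-admissible paths with $v_0 = u$, $v_{t'} = v$. I would show the cut vertices $v_1, \ldots, v_{t'-1}$ must coincide with the greedy breakpoints above. The crucial observation is: in an $(r,\G)$-admissible path $v_{j-1} L_j' v_j$, we have $v_j <_\G v_{j-1}$ while every internal vertex is $>_\G v_{j-1}$; consequently the sequence $v_0 >_\G v_1 >_\G \cdots >_\G v_{t'}$ is strictly decreasing, and moreover $v_j$ is a ``record low'' of the chain, meaning $v_j <_\G w$ for every $w$ strictly preceding $v_j$ in $uLv$ (prove by induction on $j$: everything before $v_{j-1}$ is $>_\G v_{j-1} >_\G v_j$ by hypothesis, and the internal vertices of $L_j'$ are $>_\G v_{j-1} >_\G v_j$). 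Conversely, any index at which the chain attains a record low that is also immediately ``exited upward'' must be a cut vertex, because an admissible subpath cannot pass through a vertex lower than its own start. Pinning this down: the cut vertices are exactly the record-low indices $i$ of the chain such that either $i = t$ or the next admissible run starting at $w_i$ is forced — this matches the greedy choice, which always takes the \emph{longest} admissible run, and I would show any decomposition must also take the longest run at each step (if a decomposition ended an admissible piece at $v_j = w_i$ with $i$ not maximal, then $w_{i+1} >_\G w_i$, forcing $w_{i+1}$ to be internal to the next piece $L_{j+1}'$, but then $v_{j+1} <_\G v_j = w_i$ and one checks the piece $L_j'$ could have been extended — contradicting that the decomposition's pieces are determined, or rather showing both decompositions have the same breakpoints). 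So the breakpoint set is forced, hence $t'$ and all the $v_j$ and $L_j'$ are determined.

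The main obstacle I anticipate is the uniqueness argument — specifically, cleanly showing that the cut vertices of an arbitrary valid decomposition must be precisely the greedy breakpoints. The subtlety is that a priori a decomposition could ``stop early'' at an admissible piece and start a new one, and one must rule this out using the interplay of two constraints: (1) within each admissible piece the endpoint is the unique vertex below the piece's start, and (2) the pieces must chain together covering all of $uLv$. The right framing is to characterize the cut vertices intrinsically: $v_j$ is a cut vertex iff $w_{i}$ is a strict prefix-minimum of the chain (below all earlier vertices) \emph{and} the greedy-maximality condition holds; since both hold for the greedy decomposition and must hold for any decomposition, they coincide. I would be careful to invoke $r > 1$ only where genuinely needed (it ensures admissible paths can have internal vertices at all, so the notion is nontrivial), and to use Definition~\ref{def:chain}'s condition that no $\G$-admissible subpath has length exceeding $r$ precisely to certify the pieces are $(r,\G)$-admissible rather than merely $\G$-admissible.
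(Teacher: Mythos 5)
Your proof is correct. Note that the paper does not actually supply an argument here: it asserts that the proposition ``follows directly from the definitions,'' so your write-up is a fleshed-out version of the intended (and essentially only) proof, with the breakpoints being the prefix minima of the chain. One simplification would make both halves of your argument collapse into one observation and dissolve the ``main obstacle'' you anticipate: from any vertex $w_{i_j}$ of the chain there is \emph{exactly one} $\G$-admissible run starting at $w_{i_j}$, namely the one ending at the first subsequent vertex $w_m$ with $w_m <_\G w_{i_j}$ (such an $m$ exists because the chain condition gives $w_t = v <_\G w_{i_j}$; no shorter run can end below $w_{i_j}$, and no longer run is admissible because it would contain $w_m$ as an internal vertex below its start). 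So your greedy ``largest admissible $i$'' is not a choice at all, and uniqueness of the decomposition is immediate: any valid decomposition must begin its first piece at $u$, that piece is forced, and induction does the rest. This replaces the somewhat tangled discussion of record lows versus greedy-maximality in your uniqueness paragraph --- though the record-low characterization you give (cut vertices of any decomposition are prefix minima, and prefix minima cannot be internal to an admissible piece) is also a valid way to finish. Your remarks on where $(r,\G)$-chain (as opposed to mere $\G$-chain) and $r>1$ are used are accurate: the ``no admissible subpath of length $>r$'' clause is exactly what certifies each piece is $(r,\G)$-admissible, and $r>1$ plays no essential role.
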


\begin{proposition}\label{prop:subChain}
    Let $r \in \Naturals$ where $r>1$, and $\G = (G,\leq)$.
    If $uLv$ is an $(r,\G)$-\chain, or in particular an $(r,\G)$-admissible path, and $xL^*v$ be a subpath of $uLv$, then $xL^*v$ is an $(r,\G)$-\chain. 
    If $uLv$ is an $(r,\G)$-admissible path and $xL'u$ is a subpath of $uLv$, then $xL'u$ is an $(r,\G)$-\chain.
\end{proposition}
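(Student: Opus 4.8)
The plan is simply to unwind the definitions of $\G$-\chain and $(r,\G)$-\chain (Definition~\ref{def:chain}) together with that of a $\G$-admissible path (Definition~\ref{def:pathAdmissible}); no real idea is needed. It helps to first record that an $(r,\G)$-admissible path $uLv$ is in particular an $(r,\G)$-\chain: its endpoint satisfies $v <_\G u$ and every internal vertex lies above $u$, hence above $v$, so $uLv$ is a $\G$-\chain with distinguished endpoint $v$; and since its length is at most $r$ it has no subpath of length more than $r$ at all. So in both statements we may treat $uLv$ as an $(r,\G)$-\chain. To certify that a path $P$ with distinguished endpoint $w$ is an $(r,\G)$-\chain it suffices to check (a) every vertex of $P$ other than $w$ lies above $w$ in $\leq_\G$, and (b) $P$ has no $\G$-admissible subpath of length more than $r$.

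Property (b) passes to both $xL^*v$ and $xL'u$ at once: by transitivity of the subpath relation, any subpath of either of them is a subpath of $uLv$, which has no $\G$-admissible subpath of length more than $r$. It remains to check (a) in each case.

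For the first statement the distinguished endpoint $v$ is shared with $uLv$; since $V(xL^*v) \subseteq V(uLv)$, every vertex of $xL^*v$ other than $v$ is a vertex of $uLv$ other than $v$, hence lies above $v$ because $uLv$ is a $\G$-\chain. With (b), $xL^*v$ is an $(r,\G)$-\chain. For the second statement, write $uLv = y_0 y_1 \cdots y_\ell$ with $y_0 = u$ and $y_\ell = v$. Since $u$ is an endpoint of $uLv$, any subpath of $uLv$ having $u$ as an endpoint has the form $y_0 y_1 \cdots y_b$, that is, equals $xL'u$ with $x = y_b$ (the case $b = 0$ being the trivial one-vertex path). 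For $0 < b < \ell$ the vertices $y_1, \dots, y_b$ are internal vertices of the $\G$-admissible path $uLv$, hence all lie above $u$, so every vertex of $xL'u$ other than $u$ lies above $u$; with (b), $xL'u$ is an $(r,\G)$-\chain.

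The one point to be careful about is the bookkeeping in this last case: the chain condition is measured against the \emph{retained} endpoint $u$ (not $v$), and truncating the admissible path towards its root $u$ leaves us among precisely the internal vertices---exactly those that admissibility forces to lie above $u$---whereas retaining the whole path, so that $x = v$, would reintroduce the small endpoint $v$; this is why that degenerate case is implicitly excluded from the second statement.
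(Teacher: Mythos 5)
Your proof is correct and follows the same route the paper intends: Proposition~\ref{prop:subChain} is stated there without proof as following directly from Definitions~\ref{def:pathAdmissible} and~\ref{def:chain}, and your argument is exactly that definitional unwinding (condition (b) passing to subpaths by transitivity, condition (a) checked against the retained endpoint in each case). Your closing observation that the degenerate subpath $vL'u$ must be implicitly excluded from the second statement is accurate and worth having on record.
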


\noindent
We next prove a lemma for $(r,\G)$-chains  of the same nature as Proposition~\ref{prop:SmallNeighbours}. The lemma asserts that the size of every set of vertices reachable via $(r,\G)$-admissible paths from a fixed start vertex is bounded by a function of $\adm_r(\G)$ and~$r$. 

\begin{lemma}\label{lemma:chain-bundle}
    Let~$\G$ be an ordered graph with~$p := \adm_r(\G)$ and~$t \in \mathbb N$. Let $xP_1z_1,\ldots,xP_tz_t$ be a collection of $(r,\G)$-chains of length~$\leq \ell$,  where $z_i <_\G x$, for every $i\in [t]$.
    Then the set~$ \{ z_i \}_{i \in [t]}$  has size at
    most~$p^{r\ell}$.
\end{lemma}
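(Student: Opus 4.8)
The plan is to reduce the bound on $|\{z_i\}_{i\in[t]}|$ to repeated applications of Proposition~\ref{prop:SmallNeighbours}, using Proposition~\ref{prop:ChainVsAdmissible} to decompose each $(r,\G)$-chain into $(r,\G)$-admissible pieces. Concretely, fix one of the chains $xP_jz_j$. By Proposition~\ref{prop:ChainVsAdmissible} it decomposes uniquely as $xP_jz_j = v_0 L_1 v_1 \cdots v_{s-1} L_s v_s$ with $v_0 = x$, $v_s = z_j$, each $v_{i-1}L_i v_i$ an $(r,\G)$-admissible path, and (since the chain has length $\le \ell$) we have $s \le \ell$. Thus $z_j = v_s$ is reached from $x$ by a sequence of at most $\ell$ hops, where each hop moves from a vertex $v$ to some vertex in $\Target^r_\G(v)$.

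The key observation is therefore: every endpoint $z_i$ lies in the set $T_\ell$ defined recursively by $T_0 = \{x\}$ and $T_{m} = \bigcup_{v \in T_{m-1}} \big(\{v\} \cup \Target^r_\G(v)\big)$ for $m \ge 1$. I would prove by induction on $m$ that $|T_m| \le p^{rm}$ — or more crudely $\le (1 + p(p-1)^{r-1})^m$, which one then bounds by $p^{r m}$ using $p \ge 2$ and $r \ge 1$. For the inductive step: $|T_m| \le \sum_{v \in T_{m-1}} (1 + |\Target^r_\G(v)|)$, and by Proposition~\ref{prop:SmallNeighbours} each $|\Target^r_\G(v)| \le p(p-1)^{r-1} \le p^r - 1$, so $|T_m| \le |T_{m-1}| \cdot p^r \le p^{r(m-1)} \cdot p^r = p^{rm}$. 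Taking $m = \ell$ gives $|\{z_i\}_{i\in[t]}| \le |T_\ell| \le p^{r\ell}$, as required. One small point to check carefully is that $\Target^r_\G$ is the right object here: an $(r,\G)$-admissible path $v_{i-1}L_iv_i$ has length at most $r$ (this is part of being an $(r,\G)$-chain with no admissible subpath longer than $r$, combined with each piece being admissible), so indeed $v_i \in \Target^r_\G(v_{i-1})$ by definition of $\Target$.

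The main obstacle, such as it is, is bookkeeping rather than depth: one must make sure the decomposition in Proposition~\ref{prop:ChainVsAdmissible} genuinely produces at most $\ell$ admissible subpaths (immediate, since each has length $\ge 1$ and they partition the edges of a path of length $\le \ell$), and that the "$\{v\}\cup$" in the recursion is needed because a chain of length $\le \ell$ may decompose into fewer than $\ell$ pieces, so shorter chains must still be accounted for — handled automatically by including $v$ itself at each level. A secondary subtlety is the constant: the honest bound is $(1+p(p-1)^{r-1})^\ell$, and I would simply verify $1 + p(p-1)^{r-1} \le p^r$ for all $p \ge 2$, $r \ge 1$ (true since $p(p-1)^{r-1} \le p \cdot p^{r-1} = p^r$ and we have slack to absorb the $+1$ when $p\ge 2$), so that the clean statement $p^{r\ell}$ follows. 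No use of the hypothesis $z_i <_\G x$ is strictly needed for the counting beyond ensuring the chains are nontrivial, but it is consistent with the chain decomposition ending below $x$.
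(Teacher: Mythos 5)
Your argument is correct and follows essentially the same route as the paper's proof: both decompose each chain into at most $\ell$ admissible pieces and iterate Proposition~\ref{prop:SmallNeighbours} to obtain a branching factor of at most $p^r$ per level (the paper phrases this as a backward induction on the number of vertices of the path lying below $x$, you as a forward count of the reachable sets $T_m$). The only slip is that your claimed inequality $1+p(p-1)^{r-1}\le p^r$ fails at $r=1$ (it reads $1+p\le p$); this is harmless in context, since Proposition~\ref{prop:ChainVsAdmissible}, on which your decomposition relies, is only stated for $r>1$, and for $r\ge 2$ the slack $p^r-p(p-1)^{r-1}=p\bigl(p^{r-1}-(p-1)^{r-1}\bigr)\ge p$ indeed absorbs the $+1$.
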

\begin{proof}
    Let $V_{< x} := \{y \in V(G) \mid y <_\G x\}$, and $s = \max_{i\in [t]}|V(P_iz_i)\cap V_{< x}|$.
    That is, $s$ is the maximum number of vertices of $V_{< x}$ in a path~$P_iz_i$.     
    Next, we show that the set~$\{ \min_\G P_iz_i\}_{i\in [t]}$ has
    size at most~$p^{rs}$, since $s \leq \ell$ this implies the lemma.

    For~$s = 0$ the statement becomes vacuous, so consider~$s = 1$. This is the case where for every $i\in [t]$, the only vertex of~$P_iz_i$ smaller than~$x$ is $z_i$. 
    Since all paths $P_iz_i$ are $(r,\G)$-\chain s,  we have~$ \{ z_i \}_{i \in [t]} \subseteq \Target^r_\G(x)$. Therefore, by Proposition~\ref{prop:SmallNeighbours}, $|\{\min_\G P_i\}_{i=1}^{t}| \leq p (p - 1)^{r-1} < p^r$, which provides us with the inductive base.

    Assume that the statement holds for all~$s' < s$, where~$s \leq \ell$, and all collections $Q_1,\dots,Q_h$ of $(r,\G)$-chains of length~$\leq \ell$, such that for every $i\in [h]$, $Q_i$ starts at the vertex~$z \in \G$ and ends at some vertex smaller than~$z$ under~$\leq_\G$, and $ |V(Q_i)\cap V_{< z}| \leq s'$.

    Let $\mathcal P = \{P_i\}_{i=1}^{t}$ and for every $i\in [t]$, let~$y_i \in V(P_i)$ be such that $y_i <_\G x$ and $P_i=xL_iy_iR_iz_i$, where $|V(xL_iy_i) \cap V_{< y_i}| = 1$.
    According to the base of the induction $|\{y_i\}_{i=1}^{t}|\leq p^{r}$.
  
    Pick a vertex~$y \in \{y_i\}_{i=1}^{t}$ and let $\mathcal P_y$ be the collection of all the paths $y_iR_iz_i$ such that $y_i = y$.
    Since $y <_\G x$, every path in $\mathcal P_y$ is a subpath of a path in $\mathcal P$, we can conclude that for every $yRz\in \mathcal P_y$, we have $|V(yRz)\cap V_{<y}| < s$.
    Thus, by the induction assumption, $|\{ \min_\G P\}_{P\in \mathcal P_y}| \leq p^{r(s-1)}$.
    Finally, by construction, we have $|\{ \min_\G P\}_{P\in \mathcal P}| \leq p^{r}\cdot |\{ \min_\G P\}_{P\in \mathcal P_y}| \leq p^{sr} \leq p^{r\ell}$.   
\end{proof}

\newcommand{\Stratas}{Stratas\xspace}
\newcommand{\Hvy}{\operatorname{Heavy_\alpha(G)}}
\newcommand{\Crt}{\operatorname{Critical}}

\section{Notations and Overview}

In order to simplify the notation in the sequel, we fix $r, p \in \Naturals$, $G$ and $H$ to be $(p,r)$-admissible graphs. Note that if $H$ was not $(p,r)$-admissible, then it could not be a subgraph of~$G$ as the admissibility does not increase when taking subgraphs. We further assume $H$ to be connected and that $|V(G)| > 2r \geq |V(H)|$. 

We continue by first providing a short informal description of our algorithm.
We proceed with an overview of our analysis of the algorithm and the required definitions.

Given access to a random-neighbour oracle to the input graph $G$, our algorithm starts by selecting a small random subset $S \subseteq V(G)$. For each vertex $v \in S$, it queries a fixed number of random neighbours $u$ (thus discovering the edges $uv$). All discovered vertices are added to $S$. This “growth” process is repeated a fixed number of times, producing a small subgraph of $G$ whose edges are exactly those uncovered during exploration. The algorithm rejects if this subgraph contains an $H$-subgraph, and accepts otherwise.
The exact number of queries used by the algorithm depends only on $\epsilon$, $p$, and $r$.

We note that because of its rejection condition, our algorithm will never reject an $H$-free input graph.
Hence, the overall goal is to show why our algorithm rejects with sufficient probability given that the input graph $G$ is $\epsilon$-far from $H$-freeness.
So, from now on in this section, we assume that $G$ is $\epsilon$ far from $H$-freeness.
Also, from now on in this section, we let $\G$ be a $(p,r)$-admissible order of~$G$, 

The analysis of our algorithm is based on a process that constructs a subgraph $\GT$ of $\G$,
which we show to be $\epsilon/2$-far from being $H$-free and refer to as the \emph{trimmed} graph. 
It is important to note that $\GT$ includes all vertices of $\G$
and that $\GT$ has the same order as $\G$.
We formally show how $\GT$ is obtained in Section~\ref{sec:Trimming}.
In this section, we mention more attributes of $\GT$ when they are required.
Note that the algorithm does not use or compute any information on ordered graphs $\G$, $\GT$.

Our proof reduces to showing that each of the following events occurs with sufficiently high probability conditioned on the prior event happening with sufficiently high probability. 

\begin{enumerate}
    \item\label{step:hit} A vertex that lies in an $H$-subgraph $\J$ of $\GT$ is in $S$.
    \item\label{step:dig} The vertex $\min_\J V(\J)$ is discovered.
    \item\label{step:cover} The vertices of some $H$-subgraph $\J^*$ of $\GT$, that is not necessarily $\J$, are discovered one by one.
    \item\label{step:shore} All the edges of $\J^*$ are discovered. 
\end{enumerate}

\noindent
We proceed by explaining Step~(\ref{step:admissiblePaths}) and (\ref{step:shore}) together, then we explain Step~(\ref{step:dig}) and finally we explain Step~(\ref{step:cover}) which is the crucial part of our analysis and requires most of the effort.

\paragraph{(\ref{step:hit}) and (\ref{step:shore})}
When $\GT$ is constructed from $\G$, we ensure that every vertex $v$ which has a high enough degree in~$G$ has no neighbour which precedes it in the graph~$\GT$, that is, for every~$u \in N_{\GT}(v)$ it holds that~$u >_\G v$.
This means that for every edge~$uv$ in $\GT$ either $u$ or $v$ has a \emph{small} degree in $G$.
Because $H$ is connected and we can assume that it has more than one vertex, every $H$-subgraph of $\GT$ contains a vertex which has a small degree in $G$. 
Since $\GT$ will by construction be $\epsilon/2$-far from $H$-freeness, a constant portion of the vertices of $\G$ are in some $H$-subgraph of $\GT$. 
Hence, if $S$ is sufficiently large, then it is highly likely that it includes such a vertex. 

Let us now briefly argue why Step~(\ref{step:shore}) works. Assume that the algorithm has already discovered all the vertices of some $H$-subgraph of $\GT$ and consider some edge~$uv$ that
belongs to this subgraph. Since one of $u,v$ has a small degree, it is highly likely that the edge~$uv$ is discovered by the algorithm, and since this holds for every edge of the subgraph, all edges are discovered with a sufficiently high probability.

\paragraph{(\ref{step:dig})}
When we construct $\GT$ from $\G$, we ensure that for every vertex $x$ that has an $(r,\G)$-admissible path $xPy$ in $\GT$ there exist sufficiently many paths $xP'y$ in $\GT$.
Specifically, the number of such edge-disjoint paths will be a constant portion of the degree of $x$. 
This ensures that with high probability, if $x$ is in $S$, then $y$ will be added to $S$ with high probability. Let us now see how this property ensures that Step~\ref{step:dig} works with a high enough probability.

Assume that~$\J$ is an $H$-subgraph of~$\GT$ and that, by the previous step, the set~$S$ contains some vertex
$v \in \J$. We argue that then $x_1 = \min_\J V(\J)$ is added to~$S$ with high probability. Since $v$ is in $\J$ and $|V(\J)| \leq r$, we are guaranteed that there exists an $(r,\G)$-chain $vPx_1$ in $\GT$. Since~$\GT$ is a subgraph of~$\G$ with the same order, this chain also exists in $\G$. By Proposition~\ref{prop:ChainVsAdmissible}, we are therefore guaranteed that with high probability, $x_1$ will be added to $S$.

\paragraph{(\ref{step:cover})}
Assume by the previous step that~$S$ includes the vertex $x_1 = \min_\J V(\J)$ of some $H$-subgraph $\J$ of~$\GT$. We show that then with high probability $S$ will include all the vertices of an $H$-subgraph that includes $x_1$, though that subgraph might not be~$\J$.
We start by explaining why for some $H$-subgraph $\J'$, $S$ will include the vertices
$\min_\G V(\J^*)$ and $\min_\G V(\J^*)\setminus \{x_1\}$.

When we construct $\GT$ from $\G$, we ensure that for every vertex $y$ with $y= \min_{\J^*}V(\J^*)$ for some $H$- subgraph $\J^*$ there exists a sufficiently large number of edge-disjoint $H$-subgraphs $\J'$ such that $y = \min_{\J'}V(\J')$. Here, sufficiently large means $\Omega(\deg_G(y))$.

Let $x_2^1,\dots,x_2^t$, which are the second smallest in one of the edge-disjoint $H$-subgraphs above.
By definition, for every $i\in [t]$, there exists an $(r,\GT)$-admissible path $x_2^iP_iy_ix_1$ (if the length is $1$, then the path is simple $x_2^i$).
By Proposition~\ref{prop:subChain}, for every $i\in [t]$, the subpath $y_i{P_i}'x_2^i$ of $x_2^iP_iy_ix_1$ is a $(r,\GT)$-chain.
So, if one of the vertices $y_1,\dots,y_t$ is in $S$, 
by the same reasoning as in Step~(\ref{step:dig}), with a high probability one of the vertices $x_2^1,\dots,x_2^t$ will be in $S$.
Now, since the value of $t$ is $\Omega(\deg_G(x_1))$, the probability of this is very high.

The above already looks promising, but it is not quite sufficient as there are some subtle complications which require additional machinery. Consider the following case: Let $x_1\dots,x_{|V(H)|}$ be the vertices of the $H$-subgraph $\J$, numbered as they appear in the ordering of~$\J$.
Suppose that~$S$ already includes the vertices $x_1,\dots,x_{s-1}$.
Then it may be the case that $x_s$ is in a connected induced subgraph $\W$ of $\J$ such that $V(\W) \cap \{x_1,\dots,x_{s-1}\} = \{x_1,x_2\}$ and $\{x_1,x_2\}$ is a vertex-separator in $\J$.
Now, if we find some ${x_s}' \neq x_s$ with the same role in some $H$-subgraph $\J'$ of $\GT$, which includes $\{x_1,x_2\}$, we have no guarantee that $\J'$ includes all vertices $x_3,\dots,x_{s-1},x_{s+1},\dots,x_{|V(H)|}$.

To resolve this issue, we ensure that ${x_s}'$ and $\J'$ are such that the induced subgraph $\W'$ of $\J'$ that is mapped to $\W$ by an order isomorphism from $\J'$ to $\J$ is such that $V(\W')\cap \{x_1,\dots,x_{s-1},x_s,x_{s+1},\dots,x_{|V(H)|}\} = \{x_1,x_2\}$.
Such an order isomorphism exists because $\J$ and $\J'$ are both $H$-subgraphs.
We note that the subgraph we get by taking the vertices
$V(\W') \cup (V(\J)\setminus V(\W'))$, the edges of $\W'$ and the edges of $\J$ induced on $\{x_1,x_2\} \cup (V(\J)\setminus V(\W'))$ is an $H$-subgraph of $\GT$.

For our construction, the fact that the vertices $x_1,x_2$ are smaller according to the $\W'$ of any other vertex in $\W'$ is crucial for the above to work.
The importance of such a set of vertices is captured by the following definition.

\begin{definition}\label{def:prefix} \marginnote{Prefix}
    Let~$\J$ be an $H$-subgraph of~$\G$ and let~$\G'$ be an ordered subgraph of~$\J$.
    A \emph{prefix} of $\G'$ is a \textbf{non-empty} subset $D\subseteq V(\G')$ that satisfies the following:
    \begin{enumerate}
        \item for every $u\in D$ there exists $v\in V(\G') \setminus D$ such that $uv\in E(\G')$,
        \item\label{item:Prefix2} for every $u\in D$ and $v\in V(\G') \setminus D$, we have $v>_{\G'} u$, and
        \item if $v \in V(\G')$ has a neighbour in $V(\J)\setminus V(\G')$, then $v\in D$.
    \end{enumerate}
\end{definition}
\noindent 
Note that if $\G' \neq \J$ in the above definition, then a prefix is a separator in~$\J$.
We always use a prefix with its ordered subgraph and hence the following definition.

\begin{definition}\label{def:Huseful} \marginnote{Useful pair}
    Let~$\J$ be an $H$-subgraph of~$\G$ and let~$\G'$ be an induced ordered subgraph of~$\J$.
    Let~$U_{\G'}$ be a prefix of~$\G'$. Then~$(\G', U_{\G'})$ is a \emph{useful pair}
    if~$\J[V(\G')\setminus U_{\G'}]$ is connected.
\end{definition}

\noindent
In the above overview, we actually replace one useful pair with another during the progressive discovery. In order for this to work it need to satisfy the following condition:

\begin{definition}\label{def:HSimilar} \marginnote{Similar}
    Two useful pairs $(\G_1,U_{\G_1})$ and $(\G_2,U_{\G_2})$ are \emph{similar} 
    if $U_{\G_1} = U_{\G_2}$ and there exists an order-isomorphism $\phi\colon V(\G_1) \longrightarrow V(\G_2)$ such that~$\phi(U_{\G_1}) = U_{\G_2}$.
\end{definition}

\begin{GrayBox}{}\small
    \begin{tightcenter}
    \includegraphics[width=0.95\textwidth]{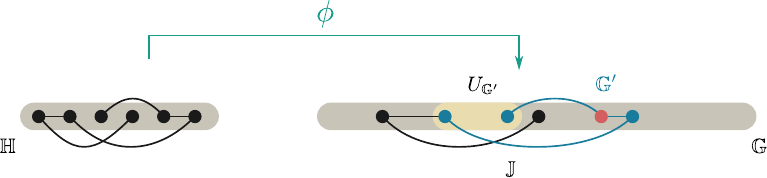}    
    \end{tightcenter}
    An $H$-subgraph $\J$ with order-isomorphism $\phi$. Inside of~$\J$, a subgraph $\G'$
    with prefix~$U_{\G'}$. Since~$\J[V(\G') \setminus U_{\G'}]$ is connected, the tuple $(\G', U_{\G'})$ is a useful pair. The red vertex in~$\G'$ is the \emph{nadir} of $(\G', U_{\G'})$ defined further below.
\end{GrayBox}
The following is a simple proposition that we need need further on.

\begin{proposition}\label{prop:NoEmptyUseful}
    Let $(\G', U_{\G'})$ be a useful pair.
    If $\G'$ is not the empty graph, then $U_{\G'} \subsetneq V(\G')$.
\end{proposition}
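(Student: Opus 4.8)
The plan is to argue by contradiction, so suppose that $\G'$ is not the empty graph and yet $U_{\G'} = V(\G')$. The idea is to derive a contradiction with condition~\ref{item:Prefix2} of Definition~\ref{def:prefix} together with the non-emptiness of the prefix and the first condition of that definition. First I would note that, since $\G'$ is non-empty, $U_{\G'} = V(\G') \neq \emptyset$, so we may pick some $u \in U_{\G'}$. By the first defining property of a prefix, there must exist a vertex $v \in V(\G') \setminus U_{\G'}$ with $uv \in E(\G')$; but $V(\G') \setminus U_{\G'} = \emptyset$ under our assumption, so no such $v$ exists. This contradiction shows $U_{\G'} \neq V(\G')$, and combined with $U_{\G'} \subseteq V(\G')$ (which holds by definition of a prefix as a subset of $V(\G')$) we conclude $U_{\G'} \subsetneq V(\G')$.

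Since a prefix is by definition a subset of $V(\G')$, the only thing that genuinely needs checking is that the inclusion is strict, which is exactly what the argument above delivers. There is essentially no obstacle here: the statement is immediate from unwinding the definitions, and the only mild subtlety is remembering that the prefix is required to be non-empty, so that the witness $u$ actually exists and triggers the contradiction via condition~1 of Definition~\ref{def:prefix}. I would present this as a two-line proof: pick $u \in U_{\G'} = V(\G')$, invoke condition~1 to get a neighbour outside $U_{\G'}$, observe that this neighbour would have to lie in $V(\G') \setminus U_{\G'} = \emptyset$, and conclude.

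The one place to be slightly careful is to make sure we are using the right hypothesis: we need $\G'$ non-empty (given), we need $U_{\G'}$ non-empty (part of the definition of prefix), and we need condition~1 of the prefix definition (every vertex of $D$ has a neighbour outside $D$). With $U_{\G'}=V(\G')$ these three facts are jointly inconsistent, so no further work is required.
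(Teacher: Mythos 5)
Your proof is correct and is essentially the same as the paper's: both pick a vertex $u$ in the non-empty prefix and use condition~1 of Definition~\ref{def:prefix} to produce a neighbour in $V(\G')\setminus U_{\G'}$, forcing that set to be non-empty. You merely phrase it as a contradiction where the paper argues directly; the content is identical.
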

\begin{proof}
    By definition, the prefix $U_{\G'}$ is not empty and hence $\G'$ is not the empty graph.
    Let $u \in U_{\G'}$.
    By the definition of a prefix, there exists $v\in V(\G') \setminus U_{\G'}$ such that $uv\in E(\G')$, and hence $U_{\G'} \subsetneq V(\G')$.
\end{proof}

\newcommand{\MD}{\mathcal M}
\noindent
To prove that the vertices of some $H$-subgraph of $\GT$ are found one by one, we need a plurality of similar useful pairs all sharing the same prefix. We formalize this notion as follows:

\begin{definition}\label{def:strata} \marginnote{Strata}
The tuple $(\MD,U)$ is a \emph{\Strata} if  $\MD$ is a family of pairwise similar useful pairs with the following properties:
\begin{enumerate}
    \item For every $(\G^*,U^*)\in \MD$, we have $U^*=U$, and
    \item for every distinct $(\G_1^*,U),(\G_2^*,U)\in \MD$, the only edges shared by $\G_1$ and $\G_2$ are the ones between the vertices of $U$.
\end{enumerate}
\end{definition}

\begin{GrayBox}{}\small
    \begin{tightcenter}
    \includegraphics[width=0.65\textwidth]{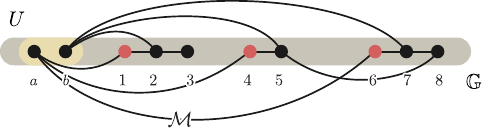}    
    \end{tightcenter}
    \smallskip

    An example strata $(\mathcal M, \{a,b\}$, where $\mathcal M$
    contains the useful pairs $(\G[\{a,b, 1,2,3\}], \{a,b\})$, $(\G[\{a,b, 4,5,8\}], \{a,b\})$,
    and $(\G[\{a,b,6,7,8\}], \{a,b\})$ which are all pairwise similar. The red vertices are the nadirs of each useful pair, as defined further below. Note that the useful pairs are edge-disjoint but not necessarily vertex-disjoint; here, vertex 8 is shared between two members.
\end{GrayBox}


\noindent
In the analysis, we show that if our algorithm discovered all the vertices in a set $U$ of a \Strata~$(\MD,U)$, then, with high probability, it will discover a special vertex in some in an ordered subgraph $\G'$, where $(\G',U)$ is some member of $\MD$.
The following definition captures its specific attributes, which we described informally in Step~(\ref{step:cover}) above.

\begin{definition}\label{def:Nadir}\marginnote{Nadir}
 The $\Nadir$ of a useful pair $(\G', U_{\G'})$, denoted $\Nadir(\G', U_{\G'})$, is the vertex $v\in V(\G')\setminus U_{\G'}$ such that for every $u\in V(\G')\setminus U_{\G'}$ we have $v \leq_\G u$.
 The $\Nadir$ of a \Strata~$(\MD,U)$, denoted $\Nadir(\MD,U)$ is the set $\{\Nadir(\G^*, U)\}_{(\G^*, U)\in \MD}$.
\end{definition}

\noindent
The next proposition follows directly from the above definition and Proposition~\ref{prop:NoEmptyUseful} which states that if a useful pair is not empty, then its prefix does not include all the vertices of the useful pair.
\begin{proposition}\label{prop:Nadir}
    For every non-empty useful pair $(\G', U)$, $\Nadir(\G', U)$ exists.
\end{proposition}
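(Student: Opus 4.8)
The statement to prove is Proposition~\ref{prop:Nadir}: every non-empty useful pair $(\G', U)$ has a well-defined $\Nadir$. The plan is to invoke Proposition~\ref{prop:NoEmptyUseful} to ensure the set over which the minimum is taken is non-empty, and then use the fact that $\leq_\G$ is a total order to conclude that a minimum element exists. This is a short argument with no real obstacle; the only thing to be careful about is correctly identifying which set must be shown non-empty.

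First I would recall the definition of $\Nadir(\G', U)$: it is the vertex $v \in V(\G') \setminus U$ that is $\leq_\G$-minimal among all vertices of $V(\G') \setminus U$. For this to be well-defined we need two things: (1) the set $V(\G') \setminus U$ is non-empty, and (2) any non-empty finite subset of a totally ordered set has a (unique) minimum element. For (1), since $(\G', U)$ is a non-empty useful pair, Proposition~\ref{prop:NoEmptyUseful} gives $U \subsetneq V(\G')$, so there is at least one vertex in $V(\G') \setminus U$. For (2), $\leq_\G$ is by definition a total order on $V(G) \supseteq V(\G')$, hence restricts to a total order on the finite set $V(\G') \setminus U$, which therefore has a unique minimum.

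Combining these, the minimal vertex $v \in V(\G') \setminus U$ with respect to $\leq_\G$ exists and is unique, so $\Nadir(\G', U)$ is well-defined. I expect the entire proof to be two or three sentences, essentially just citing Proposition~\ref{prop:NoEmptyUseful} and the finiteness/totality of the order; there is no genuine difficulty here, and the proposition is stated only because it is used repeatedly in the sequel (for instance when forming $\Nadir(\MD, U)$, where one needs every member useful pair to contribute a nadir).
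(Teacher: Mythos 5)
Your proposal is correct and matches the paper's own (very brief) justification: the paper derives the proposition directly from Proposition~\ref{prop:NoEmptyUseful}, which gives $U \subsetneq V(\G')$, so that $V(\G')\setminus U$ is a non-empty finite set and hence has a $\leq_\G$-minimum. Your additional remark about totality and finiteness of the order is implicit in the paper but entirely consistent with it.
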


\noindent
The existence of sufficiently large \Stratas~in $\GT$ (when the input graph is far from being $H$-free) is ensured by the trimming.
Now there can be two cases for a \Strata~$(\MD,U)$, 
a vertex like $x_s$, described in Step~(\ref{step:cover}) appears as a nadir in a significant number of useful pairs in $(\MD,U)$ or none of the vertices in $\Nadir(\MD,U)$ appear as a nadir in a significant number of useful pairs in $(\MD,U)$.
In the first case we show that, with sufficiently high probability, the algorithm discovers this vertex. In the second case we show that, with a sufficiently high probability, the algorithm finds a replacement as described in Step~(\ref{step:cover}).
The following two definitions are used to capture the above.

\begin{definition}\marginnote{$\delta$-weak, $\delta$-strong}
    Let $\delta \in \Naturals$ and $(\MD,U)$ be a \Strata. A vertex $v\in \Nadir(\MD,U)$ is \emph{$\delta$-weak} for $(\MD,U)$
    if the maximum size of a \Strata~$(\MD',U)$, such that $\Nadir(\MD',U) = \{v\}$, is at most $\deg_{G}(\max_\G U)/\delta$.
    Otherwise we call~$v$ \emph{$\delta$-strong} for $(\MD,U)$.
\end{definition}

\begin{definition}\marginnote{\dStrata{$\delta$}}
    For $\delta \in \Naturals$ 
    a \Strata~$(\MD,U)$ is a \emph{\dStrata{$\delta$}} if every $v\in \Nadir(\MD,U)$ is $\delta$-weak.
\end{definition}

\begin{GrayBox}{}\small
    \begin{tightcenter}
    \includegraphics[width=0.65\textwidth]{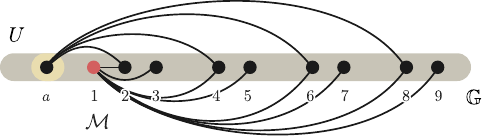}    
    \end{tightcenter}
    \smallskip

    An example strata $(\mathcal M, \{a\})$, where $\mathcal M$
    contains the useful pairs $(\G[\{a,1,2,3\}], \{a\})$, 
    $(\G[\{a,1,4,5\}], \{a\})$,
    $(\G[\{a,1,6,7\}], \{a\})$, and
    $(\G[\{a,1,8,9\}], \{a\})$.
    All useful pairs share the red vertex~1 as their nadir.

    If we assume that $\mathcal M$ is maximum for this nadir and that \eg $\deg_G(a) = 16$, then the nadir~1 is $4$-weak since $|\mathcal M| \leq \deg_G(a) / 4$, but it would be $5$-strong since $|\mathcal M| > \deg_G(a) / 5$.
\end{GrayBox}

\noindent
To prove that the algorithm discovers the vertices of some $H$-subgraph of $\GT$, we need  to reason about the useful subgraphs `attached' to some subset of vertices, motivating the following definition:

\newcommand{\MSpine}{\mbox{Spine}}
\newcommand{\Spine}{Spine}
\begin{definition}\label{def:Spine}\marginnote{\Spine}
    Let~$\J$ be an $H$-subgraph of~$\G$ and~$U \subseteq V(\J)$.
    The \emph{\Spine}~of a subset $U$ of~$V(\J)$, denoted $\MSpine_\J(U)$,
    is the family of all useful pairs $\{(\G_i,U_i)\}_{i=1}^s$ such that for every $i \in [s]$, $\G_i$ is an ordered subgraph of~$\J$ and $V(\G_i)\cap U = U_i$.
\end{definition}

\begin{lemma}\label{lem:Spine}
    Let~$\J$ be an~$H$-subgraph in~$\G$. Let $U\subseteq V(\J)$ and $\MSpine_\J(U) = \{(\G_i,U_i)\}_{i=1}^s$ and let $i$ and $j$ be distinct elements of $[s]$.
     Then no edge of~$\J$ has one endpoint in $V(\G_i)\setminus U_i$ and the other in $V(\G_j)\setminus U_j$. Furthermore, $(V(\G_i)\setminus U_i) \cap (V(\G_j)\setminus U_j) = \emptyset$.
\end{lemma}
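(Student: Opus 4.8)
The plan is to show that for each $i \in [s]$ the set $A_i \defeq V(\G_i)\setminus U_i$ is a connected component of $\J[V(\J)\setminus U]$, and that distinct indices yield distinct such components; both conclusions of the lemma then follow at once, since distinct connected components of a graph are vertex-disjoint and are joined by no edge. Two preliminary observations will be used repeatedly. Since $(\G_i,U_i) \in \MSpine_\J(U)$ we have $V(\G_i)\cap U = U_i$, hence $A_i \cap U = \emptyset$, so $A_i \subseteq V(\J)\setminus U$; and $A_i \neq \emptyset$ because the prefix $U_i$ is non-empty by definition (so $\G_i$ is non-empty) and Proposition~\ref{prop:NoEmptyUseful} gives $U_i \subsetneq V(\G_i)$.

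First I would prove that $A_i$ is a connected component of $\J[V(\J)\setminus U]$. Because $(\G_i,U_i)$ is a useful pair, $\J[A_i] = \J[V(\G_i)\setminus U_i]$ is connected, so it remains to check that no edge of $\J$ joins a vertex of $A_i$ to a vertex of $(V(\J)\setminus U)\setminus A_i$. If a vertex $v \in A_i$ had a $\J$-neighbour $w \notin A_i \cup U$, then, using $U_i \subseteq U$, we would get $w \notin V(\G_i) = A_i \cup U_i$, so $v$ has a neighbour in $V(\J)\setminus V(\G_i)$; property~3 in Definition~\ref{def:prefix} then forces $v \in U_i$, contradicting $v \in A_i$. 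Hence $A_i$ is non-empty, connected, and has no $\J$-edge leaving it within $\J[V(\J)\setminus U]$, that is, $A_i$ is exactly one connected component of $\J[V(\J)\setminus U]$.

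The delicate step, which I expect to be the main obstacle, is to rule out $A_i = A_j$ for $i \neq j$. Suppose $C \defeq A_i = A_j$, and let $N$ denote the set of vertices of $U$ having a $\J$-neighbour in $C$. I would show $U_i = N = U_j$. For $U_i \subseteq N$: by property~1 in Definition~\ref{def:prefix} every vertex of $U_i$ has a neighbour in $V(\G_i)\setminus U_i = C$, and $U_i = V(\G_i)\cap U \subseteq U$. For $N \subseteq U_i$: take $u \in U$ adjacent in $\J$ to some $c \in C = A_i \subseteq V(\G_i)$; if $u \notin V(\G_i)$, then property~3 in Definition~\ref{def:prefix} applied to $c$ would give $c \in U_i$, which is impossible, so $u \in V(\G_i)\cap U = U_i$. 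The symmetric argument gives $U_j = N$. Thus $U_i = U_j$, which forces $V(\G_i) = C \cup U_i = C \cup U_j = V(\G_j)$ and hence $\G_i = \G_j$ (both are induced ordered subgraphs of $\J$ on the same vertex set), contradicting that they are distinct elements of $\MSpine_\J(U)$.

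Assembling the pieces: for distinct $i,j$ the sets $A_i = V(\G_i)\setminus U_i$ and $A_j = V(\G_j)\setminus U_j$ are connected components of $\J[V(\J)\setminus U]$ that are not equal, hence disjoint, which is the second assertion of the lemma; and no edge of $\J$ has one endpoint in $A_i$ and the other in $A_j$, since $A_j \subseteq (V(\J)\setminus U)\setminus A_i$ while $A_i$ has no $\J$-edge leaving it within $\J[V(\J)\setminus U]$. Apart from the uniqueness argument, the entire proof is a routine unwinding of the three defining properties of a prefix.
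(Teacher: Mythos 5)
Your proof is correct and rests on the same essential ingredients as the paper's: prefix condition~(3) of Definition~\ref{def:prefix} to show that no edge of $\J$ can leave $V(\G_i)$ from a vertex of $V(\G_i)\setminus U_i$, and the fact that the $\G_i$ are induced ordered subgraphs to derive a contradiction from $V(\G_i)=V(\G_j)$. Packaging this as ``each $V(\G_i)\setminus U_i$ is a connected component of $\J[V(\J)\setminus U]$, and distinct indices give distinct components'' is a clean reorganization of the paper's direct case analysis rather than a genuinely different route.
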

\begin{proof}
    Let $i$ and $j$ be distinct elements of $[s]$.
    Suppose for the sake of contradiction that $V(\G_i) = V(\G_j)$.
    Since $U_i = V(\G_i)\cap U = V(G_j)\cap U = U_j$ and, by definition, both $\G_i$ and $\G_j$ are induced ordered subgraphs of $\J$, we have $(\G_i,U_i) = (\G_j,U_i)$ in contradiction to $(\G_i,U_i) = (\G_j,U_i)$ being distinct members of $\MSpine_\J(U)$.
    Thus, $V(\G_i) \neq V(\G_j)$.

    Assume for the sake of contradiction that $V(\G_i)\setminus U_i = V(\G_j)\setminus U_j$.
    Since $V(\G_i) \neq V(\G_j)$, we may assume without loss of generality that there exists a vertex $u \in U_i\setminus U_j$ and otherwise we rename the sets accordingly.
    By the definition of a useful pair, this means that $u$ is adjacent to a vertex in $V(\G_i)\setminus U_i$, thus also to a vertex in $w\in V(\G_j)\setminus U_j$.
    By the definition of a useful pair, this means that $w\in U_j$ which contradicts the above.
    So, $V(\G_i)\setminus U_i \neq V(\G_j)\setminus U_j$.

    There cannot exist an edge $uv$ such that $u\in (V(\G_i)\setminus U_i)$ and possibly in $(V(\G_j)\setminus U_j)$ and $v$ is only in $(V(\G_j)\setminus U_j)$, since this would imply that $u \in U_i$ in contradiction to $u\in (V(\G_i)\setminus U_i)$.
    In the same manner there cannot exist an edge $uv$ such that $u\in (V(\G_j)\setminus U_j)$ and possibly in $(V(\G_i)\setminus U_i)$ and $v$ is only in $(V(\G_i)\setminus U_i)$.
    So, there no edge of~$\J$ has one endpoint in $V(\G_i)\setminus U_i$ and the other in $V(\G_j)\setminus U_j$.
    
    Finally, we notice that it cannot be the case that $(V(\G_i)\setminus U_i) \cap (V(\G_j)\setminus U_j) \neq \emptyset$, since by definition both the induced subgraph of $\J$ on $V(\G_i)\setminus U_i$ and the induced subgraph of $\J$ on $V(\G_j)\setminus U_j$ are connected, so this would imply the existence of an edge of the sort we just proved does not exist.
\end{proof}

\begin{GrayBox}{}\small
    \begin{tightcenter}
    \includegraphics[width=0.8\textwidth]{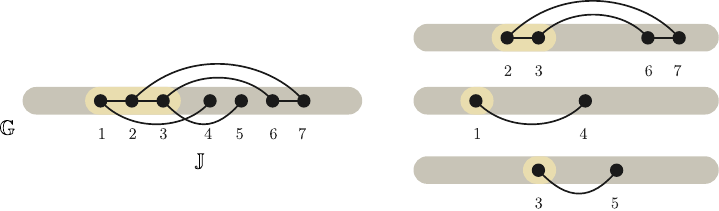}    
    \end{tightcenter}

    An example $H$-subgraph $\J$ on the left. On the right, the members $(\G[\{2,3,6,7\}], \{2,3\})$, 
    $(\G[\{1,4\}], \{1\})$, and
    $(\G[\{3,5\}], \{3\})$
    of $\MSpine_\J(\{1,2,3\})$.
\end{GrayBox}

\noindent
Finally, the following definition and its accompanying lemma enable us to prove that indeed some of the vertices that our algorithm discovers are all contained in some $H$-subgraph.

\begin{definition}\label{def:Hstable}\marginnote{$\J$-stable}
\sloppy
    Let~$\J$ be an $H$-subgraph in~$\G$. Let $U\subseteq V(\J)$ and $\MSpine_\J(U) = \{(\G_i,U_i)\}_{i=1}^s$.
    The set $U$ is \emph{$\J$-stable} if and only if
    for every $w\in V(\J)\setminus U$, there exists $i\in [s]$ such that $w \in V(\G_i)\setminus U_i$.
\end{definition}

\noindent
The following lemma implies that once our algorithm discovers a vertex $\min_\G V(\J)$, where $\J$ is some $H$-subgraph, the discovered vertices form a $\J$-stable set until it discovers all vertices of some (potentially different to~$\J$) $H$-subgraph.

\begin{lemma}\label{lem:HStable}
    Let~$\J$ be an $H$-subgraph in~$\G$ and let $U\subseteq V(\J)$ be $\J$-stable. Let $\MSpine (U) = \{(\G_i,U_i)\}_{i=1}^s$. Then
   \begin{enumerate}
       \item\label{item:AllofH} $\MSpine_\J(U) = \emptyset$ if and only if $U= V(\J)$,
       \item\label{item:PlusNadir} if $U\neq V(\J)$, then for every $i\in [s]$, the set $U\cup \Nadir(\G_i,U_i)$ is $\J$-stable, and
       \item\label{item:MinimumStable} the set $\{\min_\J V(\J)\}$ is $\J$-stable.
   \end{enumerate}
\end{lemma}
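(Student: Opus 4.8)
The plan is to prove the three items in order, using the definitions of \Spine, useful pair, prefix, and $\J$-stability, together with Lemma~\ref{lem:Spine}, which guarantees that the "tails" $V(\G_i)\setminus U_i$ of distinct members of a \Spine\ are vertex-disjoint and have no edges between them in~$\J$.

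For item~(\ref{item:AllofH}), I would argue both directions directly. If $U = V(\J)$, then any useful pair $(\G_i,U_i)$ with $V(\G_i)\cap U = U_i$ would need $U_i = V(\G_i)$, contradicting Proposition~\ref{prop:NoEmptyUseful} (a nonempty useful pair has $U_i \subsetneq V(\G_i)$); hence $\MSpine_\J(U)=\emptyset$. Conversely, if $U \neq V(\J)$, pick $w \in V(\J)\setminus U$; since $U$ is $\J$-stable there is some $i$ with $w \in V(\G_i)\setminus U_i$, so the \Spine\ is nonempty. This contrapositive dispatches the "only if" direction.

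For item~(\ref{item:MinimumStable}), set $U = \{x_1\}$ where $x_1 = \min_\J V(\J)$, and let $w \in V(\J)\setminus U$ be arbitrary. I need a single useful pair $(\G',U')$ that is an ordered subgraph of~$\J$, has $V(\G')\cap U = U'$, and has $w \in V(\G')\setminus U'$. The natural candidate is $\G' = \J$ itself with $U' = \{x_1\}$: since $x_1$ is the global minimum of $V(\J)$, every other vertex exceeds it, so condition~\ref{item:Prefix2} of a prefix holds; condition~3 is vacuous because $V(\J)\setminus V(\G') = \emptyset$; condition~1 holds since $\J$ is connected with $|V(\J)|\ge 2$, so $x_1$ has a neighbour, necessarily in $V(\G')\setminus U'$. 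Finally $\J[V(\G')\setminus U'] = \J - x_1$ must be connected — here I would invoke that $x_1$ is a \emph{leaf} in some spanning tree of~$\J$? That is not automatic, so instead I think the intended construction is different: one should take, for each component $C$ of $\J - x_1$, the pair $(\J[\{x_1\}\cup V(C)], \{x_1\})$. Each such pair is useful (the relevant induced subgraph is $C$, connected), has prefix $\{x_1\}$ since $x_1$ is minimum and is adjacent into every component, and satisfies $V(\G_i)\cap U = \{x_1\} = U_i$ since the only shared vertex is $x_1$. Every $w \neq x_1$ lies in some component $C$, hence in the corresponding $V(\G')\setminus U'$, so $\{x_1\}$ is $\J$-stable.

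Item~(\ref{item:PlusNadir}) is the main obstacle. Fix $i \in [s]$, let $v = \Nadir(\G_i,U_i)$, and set $U^+ = U \cup \{v\}$; I must show every $w \in V(\J)\setminus U^+$ lies in $V(\G_j^+)\setminus U_j^+$ for some member of $\MSpine_\J(U^+)$. The idea is: for $w$ already "covered" by some $\G_j$ with $j \neq i$, Lemma~\ref{lem:Spine} tells us $v \notin V(\G_j)\setminus U_j$, and one checks $V(\G_j) \cap U^+ = U_j$ still (adding $v$ doesn't intersect the tail), so the same pair works for $U^+$ — though one must re-verify that $(\G_j,U_j)$ is still a valid \Spine\ member for $U^+$, i.e.\ re-check the prefix condition~3, which could fail if $w$'s neighbours in $V(\J)\setminus V(\G_j)$ relationship changes; it doesn't, since $V(\G_j)$ and $U_j$ are unchanged. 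For $w \in V(\G_i)\setminus U_i$ with $w \neq v$: here I must split $\G_i$ along $v$. Since $v$ is the minimum of the connected graph $\J[V(\G_i)\setminus U_i]$, for each component $C$ of $\J[V(\G_i)\setminus(U_i\cup\{v\})]$ the pair $(\J[U_i \cup \{v\} \cup V(C)], U_i\cup\{v\})$ should be useful, have prefix $U_i \cup \{v\}$ (all of $U_i$ exceeds nothing relevant, $v$ exceeds... wait — I need $U_i \cup \{v\}$ to be a prefix, which requires every vertex of the tail $C$ to be $>$ every vertex of $U_i\cup\{v\}$; vertices of $C$ exceed $v$ by minimality, and exceed $U_i$ because $U_i$ was a prefix of $\G_i$), and satisfy the $V(\cdot)\cap U^+ = U_i\cup\{v\}$ condition. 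The delicate point is verifying prefix condition~3 relative to the \emph{whole} $\J$ (not just $\G_i$): a vertex of $C$ could have a neighbour in $V(\J)\setminus (U_i\cup\{v\}\cup V(C))$. But such a neighbour is either in another component of $\J[V(\G_i)\setminus(U_i\cup\{v\})]$ — impossible, as components have no edges between them — or outside $V(\G_i)$, which is impossible because $(\G_i,U_i)$ was a useful pair so only vertices of $U_i$ have neighbours outside $V(\G_i)$. Hence condition~3 holds, the new pairs are legitimate \Spine\ members for $U^+$, and they cover every $w \in V(\G_i)\setminus U^+$. I expect the bookkeeping around condition~3 of the prefix definition, and the careful use of Lemma~\ref{lem:Spine} to keep the old members valid, to be where essentially all the work lies.
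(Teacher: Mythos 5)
Your proposal follows essentially the same route as the paper's proof for all three items: the component-wise construction $(\J[V_i\cup\{x_1\}],\{x_1\})$ for item~(\ref{item:MinimumStable}) (after you correctly discard the first attempt with $\G'=\J$), the Proposition~\ref{prop:NoEmptyUseful} argument for item~(\ref{item:AllofH}), and, for item~(\ref{item:PlusNadir}), keeping the old members $(\G_j,U_j)$, $j\neq i$, via Lemma~\ref{lem:Spine} and splitting $\G_i$ at the nadir~$v$. There is one concrete slip in item~(\ref{item:PlusNadir}): you declare $U_i\cup\{v\}$ to be the prefix of $\J[U_i\cup\{v\}\cup V(C)]$ for each component~$C$, but condition~1 of Definition~\ref{def:prefix} requires \emph{every} prefix vertex to have a neighbour in the tail, and a vertex of $U_i$ (or even $v$ itself) may have no neighbour in that particular component~$C$, so the object you build need not be a useful pair. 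The paper repairs exactly this by passing to the minimal subset $U_k'\subseteq U_k\cup\{v\}$ of vertices adjacent to the tail and taking the useful pair on $V(C)\cup U_k'$; one then checks, much as you do for condition~3, that the intersection with $U\cup\{v\}$ still equals the new prefix and that the order condition holds because $U_k$ was a prefix of $\G_k$ and $v$ is the nadir of $(\G_k,U_k)$. With that adjustment your argument is complete and coincides with the paper's.
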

\begin{proof}
We begin with (\ref{item:MinimumStable}). 
Let $m = \min_\J V(\J)$, and $V_1,\dots,V_s$ be the sets of vertices of the connected components of the graph $\J[V(\J)\setminus \{m\}]$.
For every $i\in [s]$, let $\J_i = \J[V_i \cup \{m\}]$.
We note that for every $i\in [s]$, $(\J_i,\{m\})$ is an useful pair since it satisfies the conditions of Definition~\ref{def:Huseful}.
Thus, by the definition of \Spine, we have $\MSpine_\J(\{m\}) = \{(\J_i,\{m\})\}_{i=1}^s$.
Now, notice that by construction, $\bigcup_{i=1}^s V_i = V(\J)\setminus \{m\}$ and therefore, the set $\{m\}$ satisfies the conditions for being $\J$-stable. 

We now prove (\ref{item:AllofH}).
Suppose $U= V(\J)$ and let $(\G',U_{\G'})$ be an arbitrary non-empty useful pair with~$\G'$ an induced ordered subgraph of~$\J$. By Proposition~\ref{prop:NoEmptyUseful}, $U_{\G'}\subsetneq V(\G')$ and thus $V(\G') \cap U  = V(\G') \neq U_{\G'}$. So, by the definition of \Spine, $(\G',U_{\G'})\not\in \MSpine_\J(U)$. That is, $\MSpine_\J(U) = \emptyset$.
Suppose $U\neq V(\J)$ and let $w \in V(\J)\setminus U$.
By the definition of $\J$-stable, there exists $(\G',U_{\G'})\in \MSpine(U)$ such that $w \in V(\G')\setminus U_{\G'}$ and hence $\MSpine_\J(U) \neq \emptyset$.

Finally, we prove (\ref{item:PlusNadir}).
Let $k\in [s]$ and  $v = \Nadir(\G_k,U_k)$.
If $U \cup \{v\} = V(\J)$, then by (\ref{item:AllofH}), $\MSpine_\J(U\cup \{v\})=\MSpine_\J(V(\J))=\emptyset$, which by definition implies that $U \cup \{v\}$ is $\J$-stable. 
So, assume that $U \cup \{v\} \subsetneq V(\J)$.

By Lemma~\ref{lem:Spine}, for every distinct $i,j\in [s]$, we have $(V(\G_i)\setminus U_i) \cap (V(\G_j)\setminus U_j) = \emptyset$.
Thus, we may conclude that for every $i\in [s]\setminus\{k\}$, we have $(\G_i,U_i) \in \MSpine(U\cup \{v\})$.
Then in order to show that $U\cup \{v\}$ is $\J$-stable, we only need to show that for every $x \in V(\G_k)\setminus (U_k \cup \{v\})$, there exists
$(\G^*,U^*) \in \MSpine_\J(U \cup \{v\})$ such that $x \in V(\G^*)\setminus U^*$.

If $V(\G_k) = U_k\cup \{v\}$, then the above is trivially valid.
So, suppose that $U_k\cup \{v\} \subsetneq V(\G_k)$ and define $\J_k := \J[V(\G_k) \setminus(U_k \cup \{v\})]$. Suppose first that $\J_k$ is a connected subgraph.

By the construction of $\J_k$, there exists a minimal subset $U'_k \subseteq U_k\cup \{v\}$, such that every vertex in $U'_k$ is adjacent to a vertex in $V(\J_k)$.
We note that also, as $\J_k = \J[V(\G_k) \setminus(U_k \cup \{v\})]$, $U'_k \subseteq U_k\cup \{v\}$ and $v=\Nadir(\G_k,U_k)$, we have that for every $x\in U'_k$ and $y\in V(\J_k)$ it holds that $x <_\J y$.
Finally, by construction, every vertex in $V(\J_k)\cup U'_k$ that is connected to a vertex not in $V(\J_k)\cup U'_k$ is in $U'_k$, and hence
$U'_k$ is a prefix of $\J[V(\J_k)\cup U'_k]$. Since~$\J_k$ is connected,
$(\J[V(\J_k)\cup U'_k],U'_k)$ is a useful pair and since $U'_k\cup \{v\} \subseteq U\cup \{v\}$, we know that $(\J[V(\J_k)\cup U'_k],U'_k) \in \MSpine(U \cup \{v\})$.

If $\J_k$ is not connected then each of its connected components must have a vertex that is adjacent to a vertex in $\J_k$, otherwise we have a contradiction since $\J$ must be connected (since $H$ is connected).
With this in mind, the proof can be concluded by applying the same reasoning as we applied to $\J_k$ to each connected component of $\J_k$.
\end{proof}

\newcommand{\NadirNumbers}{2r^2\cdot p^{r^2}}
\newcommand{\NadirNumbersTT}{4r^2\cdot p^{r}}

\noindent
The following lemma is required for proving that our algorithm finds enough useful pairs that share only their prefix. This is crucial for applying the replacement idea we described in Step~(\ref{step:cover}) above.

\begin{lemma}~\label{lem:ParallelHSubgraphs}
\begin{sloppy}
   Let $(\MD,U)$ be a \Strata.
    If $|\Nadir(\MD,U)| \geq \NadirNumbers$, then there exists a \Strata~$(\MD',U)$ 
    such that $\MD' \subseteq \MD$, $|\MD'| = |\Nadir(\MD',U)| \geq 2r$
    and for every distinct $(\G_1,U),(\G_2,U)\in \MD'$, we have $V(\G_1)\cap V(\G_2) = U$. 
\end{sloppy}
\end{lemma}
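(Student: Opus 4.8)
The plan is to build $\mathcal M'$ greedily: repeatedly pick a useful pair from $\mathcal M$ whose nadir has not been used and whose "non-prefix part" is disjoint from everything chosen so far, add it to $\mathcal M'$, and then argue that each such choice "blocks" only a bounded number of the remaining pairs. Concretely, start with $\mathcal M' = \emptyset$ and a working copy of $\mathcal M$. At each step, among the useful pairs $(\G^*,U) \in \mathcal M$ still available, select one, say $(\G_1,U)$, and remove from further consideration every pair $(\G_2,U) \in \mathcal M$ with $V(\G_1) \cap (V(\G_2)\setminus U) \neq \emptyset$. Since the members of a \Strata\ are edge-disjoint outside $U$ but may share vertices, the key is to bound how many such "colliding" pairs a single $(\G_1,U)$ can kill. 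This is exactly the situation Lemma~\ref{lemma:chain-bundle} (and Proposition~\ref{prop:SmallNeighbours}) is designed for, once we observe that $|V(\G_1)| \leq r$ and that each vertex of $\G_1$ lies in at most a bounded number of members of $\mathcal M$.

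The heart of the argument is this bounding step. Fix a vertex $w \in V(\G_1) \setminus U$. I claim $w$ belongs to $V(\G_2)\setminus U$ for at most $p^{r^2}$ (or a similar polynomial-in-$p$) many pairs $(\G_2,U)\in\mathcal M$. To see this, note that in each such $\G_2$ there is a path inside $\J_2$ (the $H$-subgraph containing $\G_2$) from $w$ down to the nadir $\Nadir(\G_2,U)$ or across to $\max_\G U$; since $|V(\J_2)| \leq r$, this path has length $\leq r$, and because all members of $\mathcal M$ are order-isomorphic and share the prefix $U$, these paths are $(r,\G)$-chains emanating from a bounded set of "anchor" vertices. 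Distinct pairs give edge-disjoint such chains outside $U$, so by Lemma~\ref{lemma:chain-bundle} applied with $\ell \leq r$, the number of distinct endpoints — hence the number of distinct nadirs reachable, hence (by a counting argument using that $\mathcal M'$ will have one pair per nadir) the number of colliding pairs through $w$ — is at most $p^{r\cdot r} = p^{r^2}$. Ranging over the at most $r$ vertices of $V(\G_1)\setminus U$, a single chosen pair blocks at most $r \cdot p^{r^2} \leq 2r^2 \cdot p^{r^2}$ others (being slightly generous to absorb the $U$-vertices and the identification between "pairs through $w$" and "nadirs"). Hence if $|\Nadir(\mathcal M,U)| = |\mathcal M'|$ after exhausting nadirs... more carefully: starting from $|\Nadir(\mathcal M,U)| \geq \NadirNumbers = 2r^2\cdot p^{r^2}$ available nadirs and losing at most $2r^2\cdot p^{r^2}$ per greedy pick is not quite enough for a single pick, so the correct accounting is that we pick pairs one at a time, each pick removes at most $\NadirNumbers$ pairs (equivalently nadirs, after passing to the sub-\Strata), and we continue while the pool is nonempty; this yields $|\mathcal M'| \geq |\Nadir(\mathcal M,U)| / \NadirNumbers \geq 1$, which is too weak. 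So the threshold $\NadirNumbers$ must instead be read as: each pick kills at most $\NadirNumbers/(2r)$-ish pairs, giving $|\mathcal M'| \geq 2r$ — I would recheck the constant and simply set the blocking bound so that $|\Nadir(\mathcal M,U)|/(\text{blocking bound}) \geq 2r$, i.e. blocking bound $\leq p^{r^2}$ and $\NadirNumbers = 2r \cdot p^{r^2}$; the stated $2r^2 \cdot p^{r^2}$ leaves room for the factor $r$ from $|V(\G_1)\setminus U| \leq r$.

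Once $\mathcal M'$ is produced, I must verify it is genuinely a \Strata: property~(1) holds since every member still has prefix $U$; property~(2) (edge-disjointness outside $U$) is inherited from $\mathcal M$. The disjointness $V(\G_1)\cap V(\G_2) = U$ for distinct members is exactly what the greedy blocking enforced (we removed any pair sharing a non-$U$ vertex with an earlier pick). The equality $|\mathcal M'| = |\Nadir(\mathcal M',U)|$ follows because distinct members of $\mathcal M'$ have disjoint non-prefix parts, so their nadirs — which lie in those parts by Definition~\ref{def:Nadir} and Proposition~\ref{prop:Nadir} — are distinct. Finally $|\mathcal M'| \geq 2r$ from the counting above.

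**Main obstacle.** The delicate point is the quantitative blocking bound: translating "$w \in V(\G_2)\setminus U$" into a bounded collection of $(r,\G)$-chains with a common (small) anchor set so that Lemma~\ref{lemma:chain-bundle} applies, and making sure the edge-disjointness of the members of $\mathcal M$ really gives edge-disjoint (hence countably few) chains rather than just vertex overlaps that could blow up the count. Getting the exponent right ($p^{r^2}$ versus $p^r$) and confirming it matches the threshold $\NadirNumbers = \NadirNumbers$ is where I expect to spend the most care.
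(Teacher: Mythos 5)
Your proposal is correct and follows essentially the same route as the paper: bound the number of members of the strata containing any fixed vertex by routing $(r,\G)$-chains from that vertex to the (distinct) nadirs and invoking Lemma~\ref{lemma:chain-bundle}, then greedily select members whose non-prefix parts are disjoint, with the threshold $2r^2\cdot p^{r^2}$ absorbing the factor $r$ from $|V(\G_1)\setminus U|\le r$. The only tidying needed is to perform the reduction to one useful pair per nadir \emph{before} the greedy phase (as the paper does with its subfamily $\tilde{\mathcal M}$), so that ``at most $p^{r^2}$ distinct nadirs reachable from $w$'' cleanly translates into ``at most $p^{r^2}$ members containing $w$.''
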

\begin{proof}
    Let~$(\tilde{\mathcal M},U)$ be a \Strata~such that $\tilde{\mathcal M} \subseteq \MD$ and for every $x \in \Nadir(\MD,U)$, there exists exactly one useful pair $(\G_x,U) \in \tilde{\mathcal{M}}$
    such that $x = \Nadir(\G_x,U)$. By assumption, $|\tilde{\mathcal M}| \geq 2r \cdot \ell p^{r\ell}$.
    
    We first argue that every vertex~$v \in V(\G) \setminus U$ can be contained in at most~$p^{r\ell}$ members of~$\tilde{\mathcal M}$. To that end, assume that~$v$ is contained in members~$(\G_1, U),\ldots,(\G_t,U) \in \tilde{\mathcal M}$. 
    Recall that by the definition of a \Strata~and useful pairs, the graphs~$\G[V(\G_i)\setminus U]$ are connected and $\Nadir(\G_i, U)\leq_\G y$, for every vertex $y\in V(\G_i)\setminus U$. 
    Thus, for every $i\in [t]$, there exists an $(r,\G)$-\chain~$P_i$ whose vertices are in $V(\G_i)\setminus U$, with $v$ as its start vertex and $\Nadir(\G_i, U)$ as its end vertex, $\length(P_i) \leq \ell-1 \leq r$.
    Since, by the definition of \Strata, for every $i\in [t]$, the only edges common to any two graphs in useful pairs of $\tilde{\mathcal M}$ are between the vertices of $U$, we conclude that all the $(r,\G)$-\chain s $P_1,\ldots,P_t$ are edge-disjoint.
     
    Thus, by Lemma~\ref{lemma:chain-bundle}, $t \leq p^{r\ell}$ and hence $v$ can be contained in at most $p^{r\ell}$ members of~$\tilde {\mathcal{M}}$.
    Consequently, a simple greedy selection process can construct the claimed \Strata~$(\MD',U)$.
\end{proof}

\noindent
Finally, the following lemma is used to show that the trimming process in the next section does not remove too many edges from $G$ when constructing $\GT$.

\newcommand{\numOfStrataPerVertex}{p^{3r^2\log{r}}}
\begin{lemma}\label{lem:StrataPerVertexCounting}
    For every $v\in \G$, every set of pairwise non-similar useful pairs $(\G^*,U^*)$ such that $v = \max_\G U^*$ has a size of at most $\numOfStrataPerVertex$.
\end{lemma}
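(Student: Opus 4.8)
The plan is to bound, for a fixed vertex $v \in \G$, the number of pairwise non-similar useful pairs $(\G^*, U^*)$ whose prefix has maximum element $v$, i.e.\ $\max_\G U^* = v$. A useful pair $(\G^*, U^*)$ is determined, up to similarity, by three pieces of combinatorial data: (a) the order type of $(\G^*, U^*)$ as an ordered graph together with the distinguished prefix $U^*$, and (b) the actual vertex set $V(\G^*)$ sitting inside $\G$. Since $\G^*$ is an induced ordered subgraph of some $H$-subgraph $\J$ and $|V(\J)| \le r$, we have $|V(\G^*)| \le r$, so the number of distinct order types with a distinguished subset is bounded by a function of $r$ alone (crudely, at most $2^{\binom{r}{2}} \cdot r! \cdot 2^r$, which is at most $p^{c r^2}$ for a suitable constant since $p \geq 2$). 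The real work is therefore to bound the number of distinct vertex sets $V(\G^*)$ that can arise.

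For the vertex-set count, I would argue as follows. Fix a useful pair $(\G^*, U^*)$ with $\max_\G U^* = v$. By Definition~\ref{def:prefix}(\ref{item:Prefix2}), every vertex of $V(\G^*) \setminus U^*$ lies strictly above every vertex of $U^*$ in $\leq_\G$, and in particular above $v$. Moreover, since $(\G^*, U^*)$ is useful, $\J[V(\G^*) \setminus U^*]$ is connected, so $V(\G^*) \setminus U^*$ is a connected subset of size at most $r$ all of whose vertices exceed $v$; more precisely, each such vertex is reachable from the nadir $\Nadir(\G^*, U^*)$ by a path inside $V(\G^*)\setminus U^*$ of length at most $r-1$, and this whole path lies above $v$. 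Every vertex of $U^*$ is adjacent (in $\J \subseteq \G$) to a vertex of $V(\G^*) \setminus U^*$. So the entire vertex set $V(\G^*)$ is contained in the ball of radius $r$ around any fixed vertex of $V(\G^*)\setminus U^*$ where, crucially, the radius-$r$ reachability used to enumerate $V(\G^*)\setminus U^*$ only ever traverses vertices $>_\G v$, which is exactly the regime controlled by $(r,\G)$-admissibility. Concretely: pick the nadir $z := \Nadir(\G^*, U^*)$; then $z$ is connected to $v$ by an $(r,\G)$-chain of length $\le r$ (going down from $z$ through vertices $\ge v$ to $v$), and all the other vertices of $V(\G^*)\setminus U^*$ are reached from $z$ by chains of length $\le r$ staying above $v$, and finally the $\le r$ vertices of $U^*$ are each a neighbour of one of these.

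Now I would count. First, how many candidates for the nadir $z$ are there? Since $z$ sits at the end of an $(r,\G)$-chain starting at $v$ of length $\le r$ — wait, the chain goes the other way; $z >_\G v$ and there is a path from $z$ down to $v$ through vertices $\ge_\G v$. Reversing, this is an $(r,\G)$-chain from... hmm, $v$ is the minimum. Rather than a chain, observe $z \in V(\G^*)\setminus U^*$ and some vertex of $U^*$ is $\le v$; the standard bound is that the number of vertices reachable from $v$ "upward then down to a fixed lower vertex" is polynomial in $p$ — this is precisely the content available from Lemma~\ref{lemma:chain-bundle} and Proposition~\ref{prop:SmallNeighbours} applied with start vertex $v$: the number of possible nadirs is at most $p^{r^2}$. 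Having fixed $z$, the remaining $< r$ vertices of $V(\G^*)\setminus U^*$ each lie on a chain of length $\le r$ from $z$ using only vertices $>_\G v$, so by Lemma~\ref{lemma:chain-bundle} (with $\ell = r$) each has at most $p^{r^2}$ choices, giving at most $(p^{r^2})^{r}$ choices for the set $V(\G^*)\setminus U^*$. Finally each of the $\le r$ vertices of $U^*$ is a neighbour of one of the (now fixed) $\le r$ vertices of $V(\G^*)\setminus U^*$; since $\adm_1(\G) \le \adm_r(\G) \le p$ bounds the left-degree, and any vertex has at most... actually $U^*$-vertices can have large degree, but they are neighbours of the fixed small set $W := V(\G^*)\setminus U^*$, and each vertex $w \in W$ has bounded \emph{left}-degree $\le p$; a vertex $u \in U^*$ with $u <_\G w$ contributes to $w$'s left-degree, so there are $\le p$ such per $w$, hence $\le rp$ choices total for $U^*$ once $W$ is fixed (using that $u \le v <_\G w$ for all $u \in U^*$, $w \in W$). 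Multiplying the order-type count, the nadir count, the $W$-count, and the $U^*$-count yields a bound of the form $p^{O(r^2 \log r)}$, which is at most $\numOfStrataPerVertex = p^{3r^2 \log r}$ for the constant chosen in the statement. The main obstacle is getting the bookkeeping of the three independent "degrees of freedom" (order type, the connected upper part $W$, the prefix $U^*$ hanging below) cleanly separated so that Lemma~\ref{lemma:chain-bundle} can be invoked with the right start vertex $v$ and the right length bound $\ell = r$; once that separation is in place, each factor is a direct application of an already-proved bound.
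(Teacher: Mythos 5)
There is a genuine gap, and it stems from a misreading of what needs to be counted. You set out to bound the number of distinct \emph{vertex sets} $V(\G^*)$ that a useful pair with $\max_\G U^* = v$ can occupy, but (i) this quantity is not bounded by any function of $p$ and $r$, and (ii) bounding it is unnecessary. On (i): every vertex of $V(\G^*)\setminus U^*$ lies strictly \emph{above} $v$ in $\leq_\G$, and neither Proposition~\ref{prop:SmallNeighbours} nor Lemma~\ref{lemma:chain-bundle} controls upward reachability --- both bound the number of distinct \emph{lower} endpoints of admissible paths or chains starting at a fixed vertex. Your claim that ``the number of possible nadirs is at most $p^{r^2}$'' is exactly the point where this breaks: the nadir sits above $v$, and a vertex $v$ of degree $\Theta(n)$ can lie in $\Theta(n)$ edge-disjoint triangles, giving $\Theta(n)$ distinct useful pairs $(\G[\{v,w_i,x_i\}],\{v\})$ with $\Theta(n)$ distinct nadirs. (Indeed, the whole point of a \Strata\ elsewhere in the paper is that a single similarity class can contain unboundedly many useful pairs.) You half-notice the direction problem in the passage beginning ``wait, the chain goes the other way,'' but then wave it away by appeal to a ``standard bound'' that does not exist in the paper. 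On (ii): similarity (Definition~\ref{def:HSimilar}) requires only that the two prefixes coincide as sets and that an order-isomorphism maps prefix to prefix; it places no constraint on where the non-prefix vertices sit in $\G$. So a similarity class is already determined by the prefix set $U^*$ together with the abstract order type of $(\G^*,U^*)$, and the entire enumeration of $W = V(\G^*)\setminus U^*$ should be dropped.

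The paper's proof does exactly this. Since $v=\max_\G U^*$ and every $u\in U^*\setminus\{v\}$ is joined to $v$ by a path whose internal vertices all lie in $V(\G^*)\setminus U^*$ (hence above $v$) and whose length is at most $r$, we get $U^*\setminus\{v\}\subseteq \Target^r_\G(v)$, a set of size at most $p(p-1)^{r}$ by Proposition~\ref{prop:SmallNeighbours} --- this is the one place where admissibility enters, and it applies precisely because the prefix lies \emph{below} $v$. Multiplying the at most $p^{2r|V(H)|}$ choices of prefix by the at most $2^{2|V(H)|\log|V(H)|}$ order types yields $\numOfStrataPerVertex$. Your order-type count and your observation that the prefix is reachable from $v$ via paths through the upper part are both correct and are the two ingredients actually needed; the middle of your argument (enumerating $z$ and $W$) should be deleted rather than repaired.
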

\begin{proof}
By definition, for every pair $(\G^*,U^*)$ such that $v = \max_\G U^*$ we have $U^*\setminus \{v\} \subseteq \Target^r_\G(v)$.
Since $\G$ is ordered, the number of options for the set $U$ is $\binom{|\Target^r_\G(v)|}{|V(H)|-1} \leq p^{2r|V(H)|}$, where the inequality follows because by Proposition~\ref{prop:SmallNeighbours}, $|\Target^r_\G(v)| \leq p(p-1)^r$.

The number of options for $\G^*$ is at most the number of options to choose an induced subgraph of $H$ times the number of different orders of $H$. This is at most $2^{2|V(H)|\log{|V(H)|}}$.
Thus, as $|V(H)|\leq r$, every set of pairwise non-similar useful pairs $(G^*,U^*)$ such that $v = \max_\G U^*$ has a size of at most $\numOfStrataPerVertex$.
\end{proof}

\section{Trimming}\marginnote{Trimming}\label{sec:Trimming}
In this section, we present the procedure that we refer to as \emph{trimming}.
We stress again that the trimming procedure is used only for the analysis of our algorithm; as such, we do not care about any aspect of the
complexity of the trimming. 

Trimming receives as input a graph $G$ with a $(p,r)$-admissible ordering $\G$,
where $p > 1$,
a $(p,r)$-admissible graph $H$ on at most $r$ vertices
and parameters $\alpha,\beta,\delta \in \Naturals$.
From now on, in this section, assume that these parameters are fixed.
Trimming creates a subgraph $\GT$ of $G$ that has a number of properties that are essential for the analysis of our algorithm.

\paragraph*{The trimming procedure}
$\GT$ is initially set to $\G$ and
the following steps iterate over all vertices $v\in V(\G)$ repeatedly, until none of the steps results in the removal of the edges from $\GT$:
\begin{enumerate}
    \item\label{step:Neighbours} If $\deg_\G(v)> \alpha$, then for every $u\in \Target^1_{\G}(v)$, the edge $vu$ is removed from $\GT$.
    \item\label{step:admissiblePaths} If for some $i\in [r]$ and $u\in \Target^i_{\G}(v)$,
     $\GT$ has a maximum set $\mathcal S$ of edge-disjoint, $i$-admissible paths $uPv$, each of length $i$ and $|\mathcal S|\leq\deg_\G(v)/\beta$, then all the edges of the paths of $\mathcal S$ are removed from $E(\GT)$.
    \item\label{step:TrimStrata} If there exists a \textbf{maximum} \Strata $(\MD,U)$ such that $v = \max_\G U$, and $|\MD| < \deg_\G(v)/\beta$, then all edges participating in $\MD$ are removed from $E(\GT)$.
    \item\label{step:TrimDeltaStrata} If there exists a maximum \dStrata{$\delta$} $(\MD,U)$ such that $v = \max_\G U$, and $|\MD| < \deg_\G(v)/\beta$,  then all the edges participating in $\MD$ are removed from $E(\GT)$.
\end{enumerate}

\noindent
We first show that if~$G$ is far from~$H$-freeness, then so is the graph $\GT$ created by the trimming process.

\newcommand{\betaLB}{8p^{16r^2\log{r}}/\epsilon'}
\newcommand{\alphaLB}{8p^2/\epsilon'}
\newcommand{\deltaVal}{2^{12}\beta^{3}r}

\begin{lemma}\label{lem:TrimDist}
Let $p,r \in \Naturals$, $\epsilon' > 0$, $\beta = \betaLB$, $\alpha \geq  \alphaLB$ and
$G$ be a $(p,r)$-admissible graph that is $2\epsilon'$-far from $H$-freeness.
The graph $\GT$ created by trimming $\G$ with parameters $\alpha$, $\beta$ and $\delta > 0$, is $\epsilon'$-far from $H$-freeness.
\end{lemma}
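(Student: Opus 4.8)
The plan is to bound, for each trimming step separately, the total number of edges it can ever remove from $\G$, and show that the grand total is at most $\epsilon' |V(G)|$; since $G$ is $2\epsilon'$-far from $H$-freeness, the trimmed graph $\GT$ will still be $\epsilon'$-far. I will charge every removed edge to a vertex — specifically, for an edge removed while processing a vertex $v$, I charge it to $v$ — and then argue that the total charge to any fixed vertex $v$, summed over all iterations of the whole procedure, is at most $\epsilon' / $ (something) times $\deg_G(v)$, so that summing over all $v$ and using Fact~\ref{fact:num-edges} ($\sum_v \deg_G(v) = 2|E(G)| \leq 2p|V(G)|$) yields the claim. The key point making this work is that each of Steps~\ref{step:admissiblePaths}, \ref{step:TrimStrata}, \ref{step:TrimDeltaStrata} only fires for $v$ when the relevant structure (edge-disjoint admissible paths, a \Strata, a \dStrata{$\delta$}) has size less than $\deg_\G(v)/\beta$, and each such firing removes at most $O(r)$ edges per structure element; and Step~\ref{step:Neighbours} only fires for high-degree $v$ and removes at most $|\Target^1_\G(v)| \leq p$ edges — but crucially, here I must be careful that $\deg_\G(v)$ refers to the current (shrinking) degree in $\GT$, not the original degree in $G$.

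First I would handle the easy steps. Step~\ref{step:Neighbours}: each time it fires at $v$ it removes at most $|\Target^1_\G(v)| \le p$ edges (Proposition~\ref{prop:SmallNeighbours} with $h=1$), and each such edge $vu$ has $u <_\G v$, so I can charge it to $u$; a vertex $u$ receives such a charge only from each larger neighbour that was ever high-degree, but more simply I charge the edge to $v$ and note $\deg_G(v) > \alpha \geq 8p^2/\epsilon'$ at the moment it first fires, so the $\le p$ edges removed are at most $(\epsilon'/(8p)) \deg_G(v)$; summing the geometric-type series over repeated firings (each firing that removes an edge strictly decreases $\deg_{\GT}(v)$) the total for Step~\ref{step:Neighbours} charged to $v$ is $O(\epsilon' \deg_G(v)/p)$. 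For Steps~\ref{step:admissiblePaths}, \ref{step:TrimStrata}, \ref{step:TrimDeltaStrata}: when the step fires at $v$, the offending structure $\mathcal S$ (resp.\ $\MD$) has size $\le \deg_{\GT}(v)/\beta$, and it contributes at most $r$ edges per path (length $\le i \le r$) or at most $O(|V(H)|) = O(r)$ edges per useful pair in $\MD$; so a single firing removes $O(r \cdot \deg_{\GT}(v)/\beta) = O(r\deg_{\GT}(v)/\beta)$ edges. Since $\beta = 8p^{16r^2\log r}/\epsilon'$ is astronomically larger than $r$, a single firing removes at most, say, $(\epsilon'/(16\cdot 3))\deg_{\GT}(v)$ edges; and because each such firing that actually removes an edge decreases $\deg_{\GT}(v)$, the total across all firings of these three steps at $v$ is at most a convergent series dominated by $O(\epsilon' \deg_G(v))$ — here I need the constants chosen so the four bucket-totals add to at most $\epsilon' \deg_G(v) / (2p)$ so that $\sum_v$ gives $\le \epsilon'|V(G)|$.

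The main obstacle, and the place requiring the most care, is the interplay between the steps and the fact that ``$\deg_\G(v)$'' in the trimming rules must really mean the current degree in $\GT$, which other vertices' trimming steps can also decrease. I would resolve this by a potential/amortization argument ordered by the total number of edges removed so far: define $\Phi = |E(\GT)|$, which only decreases; whenever any step removes a batch of $m \ge 1$ edges while processing $v$, I show $m \le c\,\epsilon'\, (\text{current } \deg_{\GT}(v))$ for a small constant $c$, and then charge these $m$ edges to the $m$ that were removed — no, rather, I bound the sum $\sum (\text{batch size at } v)$ over all batches at $v$ by observing each batch strictly shrinks $\deg_{\GT}(v)$ by at least one (it removes an edge incident to $v$, except possibly Step~\ref{step:admissiblePaths}/\ref{step:TrimStrata}/\ref{step:TrimDeltaStrata} where a path or useful-pair edge need not touch $v$ — this is a subtlety: I must verify every such structure contains an edge incident to $v$, which holds because an $i$-admissible path $uPv$ ends at $v$ hence has an edge at $v$, and a \Strata\ with $v = \max_\G U$ has $v \in U$ adjacent into every useful pair by the prefix condition of Definition~\ref{def:prefix}(1)). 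Given that, each batch at $v$ removes $\ge 1$ edge at $v$ and $\le c\epsilon'\deg_{\GT}(v)$ edges total, so writing $d_0 = \deg_G(v)$ the batch sizes are bounded by $c\epsilon' d_0, c\epsilon'(d_0{-}1),\dots$ summing to $\le c\epsilon' d_0^2$ — too weak. The correct fix is to only count edges at $v$ against $v$'s degree and count the remaining edges of each path/useful pair against their own other endpoints, so each removed edge is charged $O(1)$ times and each vertex $v$ is charged $O(\text{number of edges ever incident to }v) = O(\deg_G(v))$ directly, with the step-firing condition only used to guarantee the relevant structure is small enough that the batch doesn't ``over-charge'' — I would then simply conclude the total number of edges removed is $O(|E(G)|) \cdot \max(\epsilon'/\beta\cdot r, \epsilon'/\alpha\cdot p)$ which by the choices of $\alpha,\beta$ is $\le \epsilon' |V(G)|$. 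I would write the bookkeeping carefully so that the four contributions sum to strictly less than $\epsilon'|V(G)|$, completing the proof that $\GT$ remains $\epsilon'$-far from $H$-freeness.
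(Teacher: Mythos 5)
Your overall strategy---bounding the edges removed by each trimming step separately and showing the four totals sum to at most $\epsilon'|V(G)|$---is the same as the paper's, and your treatment of Step~(\ref{step:Neighbours}) is essentially fine. The genuine gap is in Steps~(\ref{step:admissiblePaths})--(\ref{step:TrimDeltaStrata}): you never establish a sound bound on how many times these steps can fire at a fixed vertex $v$. Your first attempt (each firing strictly decreases the degree, so sum a series) gives, as you yourself concede, a bound of order $\epsilon'\deg_G(v)^2$, which is useless; and your proposed ``fix''---charging each removed edge $O(1)$ times to its own endpoints---is never carried out and cannot by itself supply the crucial multiplicative factor of order $r/\beta$: if every edge is merely charged $O(1)$ times, you only bound the total removal by $O(|E(G)|)$, which proves nothing. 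The paper closes this gap by counting \emph{distinct triggers} per vertex: each firing of Step~(\ref{step:admissiblePaths}) is indexed by a triple $(v,i,u)$ with $u\in\Target^i_\G(v)$, and each firing of Steps~(\ref{step:TrimStrata})--(\ref{step:TrimDeltaStrata}) by a similarity class of useful pairs with $v=\max_\G U$. Since edges are only ever removed, a \emph{maximum} structure, once deleted, cannot reappear, so each index fires essentially once; and the number of indices per vertex is bounded by Proposition~\ref{prop:SmallNeighbours} (at most $p(p-1)^{h-1}$ targets per radius) and by Lemma~\ref{lem:StrataPerVertexCounting} (at most $p^{3r^2\log r}$ pairwise non-similar useful-pair types). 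Multiplying the number of indices by the at most $r\deg_\G(v)/\beta$ edges removed per firing and summing $\deg_\G(v)$ over all $v$ via Fact~\ref{fact:num-edges} yields $\epsilon'|V(G)|/4$ per step. This trigger-counting is the missing idea in your write-up, and without it the proof does not go through.

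A secondary point: the threshold $\deg_\G(v)$ appearing in the trimming rules is the degree in the \emph{original} ordered graph $\G$, a fixed quantity, not the shrinking degree in $\GT$. Misreading it as the current degree is what drives you into the amortization morass in the first place; once the thresholds are recognized as static, the per-trigger bound of $r\deg_\G(v)/\beta$ edges is immediate and no potential-function argument is needed.
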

\begin{proof}
We show that in every stage of trimming, at most $\epsilon'|V(G)|/4$ edges are removed. 
Since there are $4$ stages in trimming, at most $\epsilon'|V(G)|$ edges are removed during trimming to obtain $\GT$ from $\G$.
Since $\G$ is $\epsilon'$-far from $H$-freeness, the above implies that $\GT$ is $\epsilon'$-far from $H$-freeness.

\textbf{Trimming step (\ref{step:Neighbours})}.
For every very $v\in V(\G)$ such that $\deg_{\G}(v) > \alpha$, at most $|\Target^{1}_{\G}(v)| \leq p$, where the inequality follows from Proposition~\ref{prop:SmallNeighbours}.
Let $W$ be the set of all vertices $v$ such that $\deg_{\G}(v) > \alpha$.
We note that $|W| \leq 2|E(G)|/\alpha \leq 2p|V(G)|/\alpha$, where the inequality follows from Fact~\ref{fact:num-edges}.
Thus, by the above, the total number of edges removed in (\ref{step:Neighbours}), is at most $|W|p \leq 2p^2|V(\G)|/\alpha \leq \epsilon'|V(\G)|/4$, where the inequality follows since $\alpha \geq \alphaLB$.

\textbf{Trimming step (\ref{step:admissiblePaths})}.
For every $v\in V(G)$, $i\in [r]$ and $u\in \Target^i_\G(v)$, the maximum number of edges removed in this step is $i\deg_\G(v)/\beta$.
Hence, the total number of edges removed in this step is at most
$$
    \sum_{v\in V(\G)}\sum_{i\in [r]}\sum_{u\in \Target^i_\G(v)} \!\!\! i\cdot\deg_\G(v)/\beta \leq
(r \cdot p(p-1)^r/\beta)\cdot\sum_{v\in V(\G)} \! \deg_\G(v) \leq \epsilon'|V(\G)|/4,
$$
where the first inequality follows since, by Proposition~\ref{prop:SmallNeighbours}, for every $i\in [r]$, $|\Target^i_\G(v)|\leq p(p-1)^r$ and the second because $\beta \geq \betaLB \geq 8r \cdot p(p-1)^r/\epsilon'$.

\textbf{Trimming step (\ref{step:TrimStrata})}.
By Lemma~\ref{lem:StrataPerVertexCounting}, for every $v\in \G$, every set of pairwise non-similar useful pairs $(G^*,U^*)$ such that $v = \max_\G U^*$ has a size of at most $\numOfStrataPerVertex$.
According to step (\ref{step:TrimStrata}) for each strata we removed at most $|E(H)|\deg_{\G}(v)/\beta$ edges.
Hence, the total number of edges removed in step (\ref{step:TrimStrata}),
is at most 
$$\sum_{v\in V(\G)}\numOfStrataPerVertex |E(H)|\deg_{\G}(v)/\beta \leq (\numOfStrataPerVertex/\beta)\sum_{v\in V(\G)}\deg_{\G}(v) \leq \epsilon'|V(\G)|/4,
$$
where the inequalities follow because $p \geq 2$, $|V(H)| \leq r$, $|E(H)| \leq r(r - 1)/2$, $\beta \geq \betaLB$, and by Fact~\ref{fact:num-edges}, $\sum_{v\in V(\G)}\deg_{\G}(v) \leq p|V(\G)|$. 

\textbf{Trimming step (\ref{step:TrimDeltaStrata})}.
The computation for this step is the same as the computation for step (\ref{step:TrimStrata}).

\end{proof}

\begin{proposition}\label{prop:Light}
    If $uv \in E(\GT)$, where $u>_{\GT} v$, then $\deg_\G(u) \leq \alpha$.
\end{proposition}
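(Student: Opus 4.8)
The statement is a direct consequence of the first step of the trimming procedure. The plan is to argue by contradiction: suppose there is an edge $uv \in E(\GT)$ with $u >_{\GT} v$ (equivalently $u >_\G v$, since $\GT$ has the same order as $\G$) but $\deg_\G(u) > \alpha$. Since $v <_\G u$ and $uv \in E(\GT) \subseteq E(\G)$, the single-edge path $uv$ witnesses $v \in \Target^1_\G(u)$: indeed $uv$ is a $\G$-admissible path of length $1$ from $u$ to $v$ because $v <_\G u$ and there are no internal vertices to violate the `all internal vertices above $u$' condition. Hence trimming step~(\ref{step:Neighbours}), applied to the vertex $u$, would have removed the edge $uv$ from $\GT$.

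First I would note that the trimming procedure iterates its steps over all vertices repeatedly until no step removes any edge from $\GT$, so at termination no step is applicable to any vertex. In particular, step~(\ref{step:Neighbours}) is not applicable to $u$: either $\deg_\G(u) \leq \alpha$, or every edge $uw$ with $w \in \Target^1_\G(u)$ has already been removed from $\GT$. Since we assumed $\deg_\G(u) > \alpha$, the latter must hold, but then $uv \notin E(\GT)$, contradicting our assumption. (One small point to be careful about: the degree threshold in step~(\ref{step:Neighbours}) is measured with respect to $\deg_\G$, the degree in the original graph, not $\deg_{\GT}$, which is consistent with the statement we are proving and is exactly what makes the argument go through — removing edges in earlier iterations does not change whether the condition $\deg_\G(u) > \alpha$ holds.)

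I do not anticipate a real obstacle here; the only thing requiring a moment's care is confirming that a single edge $uv$ with $v <_\G u$ genuinely places $v$ in $\Target^1_\G(u)$ under Definition~\ref{def:pathAdmissible} and the definition of $\Target$, i.e.\ that the length-$1$ path trivially satisfies the admissibility condition (the $\min$ over internal vertices is over the empty set and hence the constraint is vacuous, while the endpoint condition $v <_\G u$ is exactly our hypothesis). Once that is in place, the contradiction with the termination condition of trimming is immediate, so the proof is a short paragraph.
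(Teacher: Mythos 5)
Your proof is correct and takes exactly the same route as the paper: the paper's one-line argument is that $\deg_\G(u) > \alpha$ would force trimming step~(\ref{step:Neighbours}) to remove the edge $uv$, contradicting $uv \in E(\GT)$. You simply spell out the details (that $v \in \Target^1_\G(u)$ and that the termination condition of trimming makes step~(\ref{step:Neighbours}) inapplicable at the end), which the paper leaves implicit.
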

\begin{proof}
$\deg_\G(u) \leq \alpha$ must hold because otherwise, by trimming step (\ref{step:Neighbours}), $uv \not\in E(\GT)$, a contradiction.
\end{proof}


\section{The graph querying algorithm}\label{sec:PBFS}
\newcommand{\xiOneVal}{\lceil 20\alpha/\epsilon' \rceil}
\newcommand{\xiTwoVal}{r^2+3r+1}
\newcommand{\xiThreeVal}{\lceil 20r\delta \rceil}

\newcommand{\OuterLoop}{Outer Loop}
\newcommand{\MiddleLoop}{Middle Loop}
\newcommand{\InnerLoop}{Inner Loop}

We present next the algorithm for testing $H$-freeness in $(p,r)$-admissible graphs, where $|V(H)|\leq r$.
We assume that the parameters $r,p$ and $H$ are hard-coded in the algorithm.
We also assume that both the input graph~$G$ and the graph $H$ are $(p,r)$-admissible, since if~$H$ is not then the algorithm may reject, because every subgraph of a $(p,r)$-admissible graph is itself $(p,r)$-admissible.
In the algorithm, we use $\alpha$, $\beta$ and $\delta$ without providing their exact values. The values are provided in the final theorem as the end of this section.

\newcommand{\epsilonPVal}{\epsilon/2}

\smallskip
\begin{GrayBox}{\textbf{Algorithm~\ref{alg:PBFS}}}
   \begin{algorithm}[H]\label{alg:PBFS}
    \DontPrintSemicolon
    \SetNoFillComment
      
    \KwInput{$\epsilon >0$, random neighbour oracle access to a graph $G$ and $|V(G)|$} 
    Set $\epsilon'= \epsilonPVal$, $\alpha = \alphaLB$, $\beta = \betaLB$ and $\delta = \deltaVal$\;
    Set $\xi_1 = \xiOneVal$, $\xi_2 = \xiTwoVal$ and $\xi_3 =\xiThreeVal$\;\label{line:XiValues}
    Set $s = 0$\;
    Set $G_s$ to be the empty graph\;
    \RepTimes{$\xi_1$ \label{line:Initials}}{
            Add to $V(G_s)$ an independently and u.a.r selected vertex from $V(G)$\;
    }  
    \RepTimes(\tcp*[f]{\OuterLoop}){$\xi_2$ }{\label{line:Outer}
        Set $s = s + 1$\;
        Set $V(G_{s}) = V(G_{s-1})$, $V(E_{s}) = V(E_{s-1})$\;
        \For(\tcp*[f]{\MiddleLoop}){$v\in V(G_{s-1})$}{\label{line:Middle}
            \RepTimes(\tcp*[f]{\InnerLoop}){$\xi_3$ }{\label{line:Inner}
                Query the oracle for a random neighbour $u$ of $v$\;
                    $V(G_s) = V(G_s) \cup \{u\}$\;
                    $E(G_s) = E(G_s) \cup \{vu\}$\;
                }
        }
    }
    \If{if $G_{\xi_2}$ has an $H$-subgraph}{
        Reject\\}\Else{Accept}        
   \end{algorithm}
\end{GrayBox}

\noindent
We refer to the loop on line~\ref{line:Outer} of Algorithm~\ref{alg:PBFS}, as the \emph{\OuterLoop}, to the loop on line~\ref{line:Middle} as the \emph{Middle Loop}~ and to the loop on line~\ref{line:Inner} as the \emph{\InnerLoop}.
In all this section, $G$ with a subscript refers to the relevant value once after Algorithm~\ref{alg:PBFS} completed its run.
In all of this section, except for the final theorem, $G$ and $H$ are 
a $(p,r)$-admissible graphs, $|V(G)|= n$, $V(H) \leq r$,
 $\GT$ is a subgraph of $G$ obtained by trimming with the parameters $r,\alpha,\beta,\delta\in \Naturals$, and
\marginnote{Assumptions $\xi_3$}
when we use the parameter $\xi_3$ we assume that it is a multiple of $r$.

\begin{lemma}\label{lem:PBFSQC}
The query complexity of Algorithm~\ref{alg:PBFS} is independent of the size of the input graph $G$.
\end{lemma}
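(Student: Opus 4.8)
The plan is to bound the total number of oracle queries by a closed-form expression in $\epsilon$, $p$, and $r$ alone, with no dependence on $n = |V(G)|$. First I would observe that the only place Algorithm~\ref{alg:PBFS} issues a query is on line~\ref{line:Inner}, inside the \InnerLoop, and that each execution of the \InnerLoop\ body issues exactly one query. So the total query count is exactly the number of times the \InnerLoop\ body runs, which is $\xi_3$ multiplied by the number of $(s,v)$ pairs processed by the \MiddleLoop\ across all $\xi_2$ iterations of the \OuterLoop.

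Next I would bound $|V(G_{s-1})|$, the number of vertices the \MiddleLoop\ iterates over in outer-loop iteration $s$. Initially $|V(G_0)| \le \xi_1$ (line~\ref{line:Initials} adds $\xi_1$ vertices, possibly with repetition). In each outer-loop iteration, every vertex currently in $G_{s-1}$ contributes at most $\xi_3$ new vertices (one per inner-loop query). Hence $|V(G_s)| \le (1+\xi_3)\,|V(G_{s-1})|$, giving $|V(G_s)| \le \xi_1 (1+\xi_3)^s \le \xi_1(1+\xi_3)^{\xi_2}$ for all $s \le \xi_2$. Therefore the total number of queries is at most
\[
\xi_2 \cdot \xi_1 (1+\xi_3)^{\xi_2} \cdot \xi_3,
\]
which I would then expand using the definitions $\xi_1 = \xiOneVal$, $\xi_2 = \xiTwoVal$, $\xi_3 = \xiThreeVal$, together with the values $\alpha = \alphaLB$, $\beta = \betaLB$, $\delta = \deltaVal$ and $\epsilon' = \epsilonPVal$ fixed in the algorithm. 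Substituting these shows the bound is a function of $\epsilon$, $p$, and $r$ only: $\alpha$ depends on $p$ and $\epsilon'$ (hence $\epsilon$); $\delta$ depends on $\beta$ (hence $p$, $\epsilon$) and $r$; and $\xi_1,\xi_2,\xi_3$ then depend only on $\alpha,\delta,\epsilon',r$. Crucially, none of $\xi_1,\xi_2,\xi_3$ involves $n$.

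There is essentially no hard obstacle here — the statement is a bookkeeping fact. The one point requiring a little care is making sure the accounting for $|V(G_s)|$ is correct even though vertices returned by the oracle may already lie in $G_{s-1}$ (so the bound $|V(G_s)| \le (1+\xi_3)|V(G_{s-1})|$ is only an upper bound, which is all we need) and that the vertices added on line~\ref{line:Initials} may coincide (again only helping). I would also note explicitly that the final check "$G_{\xi_2}$ has an $H$-subgraph" uses no oracle queries, since $G_{\xi_2}$ is a graph already stored in memory. Assembling these observations yields the claim.
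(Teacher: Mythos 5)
Your proof is correct and follows essentially the same route as the paper's: both arguments simply count inner-loop executions, bound the growth of $|V(G_s)|$ by a factor of $(1+\xi_3)$ per outer iteration to get a query bound of order $\xi_1\xi_2\xi_3(1+\xi_3)^{\xi_2}$, and then observe that $\xi_1,\xi_2,\xi_3$ (via $\alpha,\beta,\delta,\epsilon'$) depend only on $p$, $r$, and $\epsilon$, never on $n$. Your version is merely more explicit about the bookkeeping (repeated vertices, the query-free final check), which is harmless.
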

\begin{proof}
   \sloppy
   According to the code of Algorithm~\ref{alg:PBFS}, the query complexity of Algorithm~\ref{alg:PBFS} is $O(\xi_1\xi_3^{\xi_2})$.
   The values of $\xi_1$, $\xi_2$ and $\xi_3$ depend only on the $r$, $\epsilon'$, $\alpha$ and $\delta$.
   The values of $\epsilon'$, $\alpha$ and $\delta$  (with a bit of extra work) depend only on $p$, $r$, $\epsilon$ and $|V(H)|$.
\end{proof}



\newcommand{\UsefulSet}[1]{\tilde{#1}}

\newcommand{\probHitting}{1-(1-\epsilon'/\alpha)^{\xi_1}}

\noindent
The following lemmas are used by the last theorem of this section, which states that our algorithm is a tester for $H$-freeness.
Lemma~\ref{lem:ProbHittingHSub} is used to prove that, with a sufficiently high probability, $V(G_0)$ includes a vertex of an $H$-subgraph $\J$ of $\GT$.
Lemma~\ref{lem:MinimumVertex} is used to show that, given the event just described, with a
sufficiently high probability Algorithm~\ref{alg:PBFS} discovers $\min_{\GT} V(\J)$.
Lemma~\ref{lem:TargetVertex} is used to prove Lemma~\ref{lem:MinimumVertex}, \ref{lem:ProbableNadir} and~\ref{lem:ManyNadirs}.
Lemma~\ref{lem:ProbableNadir} and~\ref{lem:ManyNadirs} are used to show that given that the event described above occurs, then with sufficiently high probability, all the vertices of some $H$-subgraph of $\GT$ are discovered by Algorithm~\ref{alg:PBFS}.
Lemma~\ref{lem:AllEdges} is used to show that, given that the event described above occurs, with sufficiently high probability, the edges of the just-mentioned $H$-subgraph are discovered by Algorithm~\ref{alg:PBFS}.

\begin{lemma}\label{lem:ProbHittingHSub}
Let $\epsilon' > 0$.
If $\GT$ is $\epsilon'$-far from being $H$-free, 
then with probability at least $\probHitting$, $V(G_0)$ includes a vertex of an $H$-subgraph of $\GT$.
\end{lemma}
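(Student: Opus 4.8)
The plan is to show that a constant fraction of the vertices of $\G$ (equivalently of $\GT$, since $\GT$ has the same vertex set) lie in some $H$-subgraph of $\GT$, and then argue that the $\xi_1$ independent uniform samples forming $V(G_0)$ hit this set with the claimed probability.

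First I would use the hypothesis that $\GT$ is $\epsilon'$-far from $H$-freeness. By the remark in the preliminaries, distance to $H$-freeness can be measured purely by edge deletions, so at least $\epsilon'|V(G)|$ edges of $\GT$ must be removed to destroy all $H$-subgraphs. In particular $\GT$ contains at least $\epsilon'|V(G)|$ edges, and moreover every such edge lies in some $H$-subgraph of $\GT$ (an edge contained in no $H$-subgraph could be omitted from any edge-deletion set, contradicting the lower bound being tight only in the sense that we need $\ge \epsilon' |V(G)|$ deletions — more carefully, if we let $W$ be the set of vertices lying in no $H$-subgraph of $\GT$, then $\GT - W$ is already $H$-free after deleting the edges incident to $W$... but that is not quite the cleanest route). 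The cleanest route: let $X \subseteq V(\GT)$ be the set of vertices that belong to at least one $H$-subgraph of $\GT$. Deleting all edges of $\GT[X]$ together with all edges incident to $X$ destroys every $H$-subgraph, but in fact deleting just the edges with \emph{both} endpoints in $X$ already does, since any $H$-subgraph has all its vertices in $X$ and $H$ has at least one edge. Hence $|E(\GT[X])| \ge \epsilon'|V(G)|$. By Fact~\ref{fact:num-edges} applied to $\GT[X]$ (which is $(p,r)$-admissible as a subgraph of $G$), $|E(\GT[X])| \le p|X|$, so $|X| \ge \epsilon'|V(G)|/p$.

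Next I would bound the probability that none of the $\xi_1$ samples lands in $X$. Each sample is uniform in $V(G)$, so it misses $X$ with probability $1 - |X|/|V(G)| \le 1 - \epsilon'/p$. With $\xi_1 = \xiOneVal$ and $\alpha = \alphaLB$, note $\alpha \ge p$ (since $\epsilon' \le 1$ and $p \ge 2$ give $8p^2/\epsilon' \ge p$), so $\epsilon'/\alpha \le \epsilon'/p$ and thus the per-sample miss probability is at most $1 - \epsilon'/\alpha$. By independence, the probability all $\xi_1$ samples miss $X$ is at most $(1-\epsilon'/\alpha)^{\xi_1}$, and therefore $V(G_0)$ contains a vertex of some $H$-subgraph of $\GT$ with probability at least $\probHitting$, as claimed.

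The only real subtlety — and the step I would be most careful about — is the counting argument showing $|X| \ge \epsilon' |V(G)|/p$: one must phrase the edge-deletion lower bound correctly so that it licenses the inequality $|E(\GT[X])| \ge \epsilon'|V(G)|$ (using that $H$ is connected with at least one edge, so every copy of $H$ contributes an edge inside $X$), and then invoke Fact~\ref{fact:num-edges} on the right induced subgraph. Everything else is a routine union/independence bound plus the elementary estimate $\alpha \ge p$ to match the constant in the statement.
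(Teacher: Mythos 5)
Your proposal is correct, but it takes a genuinely different route from the paper. The paper does not count edges inside the set of vertices covered by $H$-subgraphs; instead it invokes Proposition~\ref{prop:Light} (a consequence of trimming step~(\ref{step:Neighbours})) to argue that every $H$-subgraph of $\GT$, being connected with more than one vertex, contains a vertex of $\G$-degree at most $\alpha$. Letting $D$ be the set of all such low-degree vertices lying in $H$-subgraphs, deleting the at most $\alpha|D|$ edges of $\GT$ incident to $D$ makes $\GT$ $H$-free, so $\epsilon'$-farness gives $|D|\ge \epsilon'|V(G)|/\alpha$ directly, for \emph{any} value of the trimming parameter $\alpha$; the sampling bound then follows exactly as in your last step. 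Your argument bypasses the trimming entirely: you use only that $\GT[X]$ is a subgraph of the $(p,r)$-admissible graph $G$ and hence has at most $p|X|$ edges (Fact~\ref{fact:num-edges}), and your edge-counting step is sound (every copy of $H$ has all its vertices in $X$ and at least one edge, so deleting $E(\GT[X])$ destroys all copies, forcing $|E(\GT[X])|\ge\epsilon'|V(G)|$). This yields the stronger density bound $|X|\ge\epsilon'|V(G)|/p$, and $D\subseteq X$. The price is the final conversion: to land on the stated constant $\epsilon'/\alpha$ in $\probHitting$ you need $\alpha\ge p$, which holds for the algorithm's setting $\alpha=\alphaLB$ (since $p\ge 2$ and $\epsilon'\le 1$) but is not guaranteed by the lemma's hypotheses as written, where $\alpha$ is an arbitrary trimming parameter. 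So strictly your proof establishes the lemma only under the additional, in-context harmless, assumption $\alpha\ge p$, whereas the paper's proof is unconditional; on the other hand, your approach shows the conclusion does not depend on the trimming at all, only on the sparsity of admissible graphs.
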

\begin{proof}
    Let $\J$ be an $H$-subgraph of $\GT$.
    Since $|\J| > 1$ and $\J$ is connected, by Proposition~\ref{prop:Light}, there exists a vertex $v\in V(\J)$ such that $\deg_\G(v) \leq \alpha$.
    Thus, we may conclude that every $H$-subgraph of $\GT$ includes a vertex $v$ such that $\deg_\G(v) \leq \alpha$.
    Let $D$ be the union of all these vertices.
    By construction, if we remove from $\GT$ every edge incident on a vertex of $D$, then we remove from $\GT$ at most $\alpha|D|$ edges, and the resulting graph is $H$-free.
    Since $\GT$ is $\epsilon'$-far from being $H$-free, we can conclude that $\alpha|D| \geq \epsilon' |V(G)|$ and therefore $|D| \geq \epsilon' |V(G)|/\alpha$.
    Finally, by line~\ref{line:Initials} of the Algorithm, with probability at least $\probHitting$,
    $V(G_0)$ includes a vertex of an $H$-subgraph of $\GT$.  
\end{proof}

\newcommand{\probTargetVertex}[1]{1 - #1(1 - \delta^{-1})^{\xi_3}}

\begin{lemma}\label{lem:TargetVertex}
Let $u,v \in V(G)$ be such that $\large{u>_{\GT} v}$.
Assume that there exists in $\GT$ an $(r,\GT)$-\chain~of length $\ell$ with $u$ as its start vertex and $v$ as its end vertex.
If $\alpha \leq \delta$ and $u\in V(G_\tau)$, where $\tau \leq \xi_2-\ell$, then with probability at least $\probTargetVertex{\ell}$, we have $v\in V(G_{\tau + \ell})$.
\end{lemma}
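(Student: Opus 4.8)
The statement relates a structural object---an $(r,\GT)$-chain $uLv$ of length $\ell$ from $u$ down to $v$---to the probabilistic discovery process of Algorithm~\ref{alg:PBFS}. The natural strategy is induction on $\ell$, peeling off one $(r,\GT)$-admissible segment at a time using Proposition~\ref{prop:ChainVsAdmissible}, which decomposes the chain uniquely into admissible subpaths $v_{i-1}L_i'v_i$. Each such segment of length $\ell_i$ will be ``traversed'' in $\ell_i$ \OuterLoop\ iterations, and the probabilities will multiply (with a union bound on failure), which is exactly the shape of the claimed bound $1 - \ell(1-\delta^{-1})^{\xi_3}$: one failure term of size $(1-\delta^{-1})^{\xi_3}$ per unit of length.

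First I would handle the base case $\ell = 1$, i.e.\ $uv \in E(\GT)$ with $u >_\GT v$, assuming $u \in V(G_\tau)$ with $\tau \le \xi_2 - 1$. In \OuterLoop\ iteration $\tau+1$, the \MiddleLoop\ processes $u$ and the \InnerLoop\ makes $\xi_3$ random-neighbour queries at $u$ in $G$. I need a lower bound on the probability that at least one of these queries returns $v$. By Proposition~\ref{prop:Light}, $\deg_\G(u) \le \alpha$, and by the trimming step (\ref{step:admissiblePaths}) --- applied with $i=1$ --- the edge $uv$ survives in $\GT$ only if $\GT$ has a \emph{large} set of edge-disjoint $1$-admissible paths (i.e.\ edges) from $v$ to $u$, but here the relevant quantity is simply that $uv$ itself is present; the cleaner route is: a single random-neighbour query at $u$ returns $v$ with probability $1/\deg_G(u) \ge 1/\alpha \ge 1/\delta$ (using the hypothesis $\alpha \le \delta$). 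Hence the probability that none of the $\xi_3$ independent queries returns $v$ is at most $(1-\delta^{-1})^{\xi_3}$, and $v$ is added to $V(G_{\tau+1})$ with probability at least $1 - (1-\delta^{-1})^{\xi_3}$, matching the claim for $\ell = 1$.

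For the inductive step I would use the decomposition of the chain. Actually, the cleanest induction is along a single $(r,\GT)$-admissible path, then concatenate. So: suppose $uLv$ is an $(r,\GT)$-admissible path of length $\ell$ (all internal vertices $>_\GT u$, and $v <_\GT u$). Here I want to invoke the trimming guarantee of step~(\ref{step:admissiblePaths}): because the edge structure survived trimming, for the admissible path $uLv$ of length $\ell \le r$ there is a set $\mathcal S$ of edge-disjoint $\ell$-admissible paths from $v$ to $u$ in $\GT$ with $|\mathcal S| > \deg_\G(u)/\beta$ (otherwise the edges would have been removed). Wait --- I must be careful about orientation and which endpoint indexes the degree; trimming step~(\ref{step:admissiblePaths}) bounds $|\mathcal S|$ by $\deg_\G(v)/\beta$ where $v$ is the low endpoint, so the surviving paths number more than $\deg_\G(v)/\beta$. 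This is the key mechanism: from $u$, a constant fraction (roughly $1/\beta$) of its $\deg_G(u)$ incident edges in $G$ begin one of these many edge-disjoint admissible paths down to $v$. Traversing such a path vertex by vertex over $\ell$ consecutive \OuterLoop\ iterations, at each step the ``correct'' next vertex is hit with probability $\ge 1/\deg_G(\cdot) \ge 1/\alpha \ge 1/\delta$ among $\xi_3$ tries. I would formalize this as: conditioned on $u \in V(G_\tau)$, for each of the $\ell$ steps the probability of extending the partial path fails with probability $\le (1-\delta^{-1})^{\xi_3}$, and a union bound over the $\ell$ steps gives the failure bound $\ell(1-\delta^{-1})^{\xi_3}$. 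Finally, since a general $(r,\GT)$-chain is a concatenation of admissible paths whose lengths sum to $\ell$ (Proposition~\ref{prop:ChainVsAdmissible}), and since each intermediate endpoint $v_i$ lies below $v_{i-1}$ hence the $\tau$-bookkeeping composes ($\tau + \ell_1 + \cdots + \ell_j \le \xi_2$ is guaranteed by $\tau \le \xi_2 - \ell$), stitching the segments together and union-bounding over all $\sum_i \ell_i = \ell$ unit steps yields the stated probability $1 - \ell(1-\delta^{-1})^{\xi_3}$.

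\textbf{Main obstacle.} The delicate point is the second one: arguing that \emph{some} admissible path surviving in $\GT$ is actually traversed with good probability, rather than just that one fixed path is traversed. A single fixed admissible path of length $\ell$ is traversed with probability only about $(\delta^{-1})^\ell \cdot$(something), which is far too small. The resolution must exploit the abundance guaranteed by trimming step~(\ref{step:admissiblePaths}): there are $\gtrsim \deg_G(v)/\beta$ edge-disjoint admissible paths from $v$ up to $u$, so from $u$'s side the first query hits the start of \emph{one of them} with probability $\gtrsim \deg_\G(u)^{-1}\cdot(\text{number of such starts at }u)$. One has to chase how these edge-disjoint paths distribute their top edges at $u$ and bottom edges at $v$, and set up the step-by-step ``you are still on one of the good paths'' invariant so that at every intermediate vertex $w$ the out-degree toward a continuing good path is at least one, giving the uniform $1/\deg_G(w) \ge 1/\delta$ per-try success probability. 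Handling the indexing carefully (which endpoint carries the $\deg/\beta$ bound, and ensuring $\alpha \le \delta$ is used exactly where a vertex of high $G$-degree could otherwise kill the $1/\delta$ bound) is where most of the care --- though not deep difficulty --- lies. I expect the write-up to proceed by first proving a one-segment (single admissible path) version as a sublemma, then concatenating.
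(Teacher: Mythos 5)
Your overall skeleton---induction on $\ell$, the base case via Proposition~\ref{prop:Light} (the higher endpoint of a surviving edge has degree at most $\alpha\le\delta$), decomposition of the chain into admissible segments, and the use of trimming step~(\ref{step:admissiblePaths}) to obtain many edge-disjoint admissible paths whose distinct first edges at the segment's start give a $1/\delta$ per-query success probability---matches the paper's proof. But the step you yourself flag as the main obstacle is resolved incorrectly. For intermediate vertices $w$ of an admissible path $uLv$ you claim a per-query success probability of $1/\deg_G(w)\ge 1/\delta$ on the grounds that the out-degree towards a continuing good path is at least one. Out-degree one only gives $1/\deg_G(w)$, and $\deg_G(w)$ is \emph{not} bounded by $\delta$ in general: an internal vertex $w$ with $w>_{\GT}u$ that lies below both of its path-neighbours receives, from Proposition~\ref{prop:Light}, no constraint on its degree (only the \emph{higher} endpoint of each surviving edge is guaranteed degree at most $\alpha$), so $1/\deg_G(w)$ can be as small as $1/n$ and your per-step bound collapses.

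The missing idea is that the abundance argument must be re-invoked at \emph{every} such vertex, not only at $u$. Whenever the remaining chain goes upward from its current start vertex $w$, its first maximal admissible segment $wPx$ (ending at the first vertex $x<_{\GT}w$) has length between $2$ and $r$, and since it survives in $\GT$, trimming step~(\ref{step:admissiblePaths}) guarantees more than $\deg_\G(w)/\beta\ge\deg_\G(w)/\delta$ edge-disjoint admissible paths from $w$ to $x$, whose distinct first edges restore the $1/\delta$ per-query bound at $w$ (accepting that the traversal hops to a different but equivalent path); whenever the chain goes downward from $w$, Proposition~\ref{prop:Light} gives $\deg_\G(w)\le\alpha\le\delta$ directly. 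The paper packages this as a recursion: after the first hop lands on some vertex $y$ of one of the abundant paths, the suffix $y\tilde Lv$ is again an $(r,\GT)$-\chain~of length $\ell-1$ (Proposition~\ref{prop:subChain}), and the induction hypothesis is applied to it wholesale, so the above case distinction is made automatically at each segment boundary. Without this, your single-admissible-path sublemma does not go through.
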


\begin{proof}
We prove the lemma by induction on the value of $\ell$.
Suppose that $\ell = 1$. 
Thus, $u$ and $v$ are adjacent in $\GT$.
As $u>_{\GT} v$, according to (\ref{step:Neighbours}) of trimming, $\deg_{\G}(u) \leq \alpha$.
Since $u\in V(G_\tau)$ due to the \InnerLoop, with probability at least $1-(1-\alpha^{-1})^{\xi_3} \geq 1 - \ell(1 - \delta^{-1})^{\xi_3},$ we have $v\in V(G_{\tau + 1})$.

Assume by induction that the lemma holds for every $(r,\GT)$-\chain~of length  at most $\ell-1$. 
Let $uLv$ be an $(r,\GT)$-\chain~of length $\ell$ in $\GT$.
Let $x$ be the closest vertex to $u$ in $uLv$ such that $x<_\G u$ and let $uL'x$ be a subpath of $uLv$.
By construction, for every $z\in L'$, we have $z >_{\GT} u >_{\GT} x$ and hence $uL'x$ is an $(r,\GT)$-admissible path and, by definition, also an $(r,\GT)$-\chain.

Suppose that $x\neq v$.
Let $\ell' = \length(uL'x)$.
Since $u\in V(G_\tau)$, by the induction assumption with probability at least
$1 - \ell'(1 - \delta^{-1})^{\xi_3}$, $u\in V(G_{\tau + \ell'})$.
Let $xL^*v$ be a subpath of $uLv$, by Proposition~\ref{prop:subChain},  $xL^*v$ is a $(r,\GT)$-\chain.
So, again by the induction assumption,
if $x\in V(G_{\tau + \ell'})$, then with probability at least
$1 - (\ell-\ell')(1 - \delta^{-1})^{\xi_3}$, we have $v\in V(G_{\tau + \ell})$.
Thus, we can conclude that the lemma holds when $x\neq v$.

Suppose that $x = v$.
Thus, by the above, $uLv$ is an $(r,\G)$-admissible path in $\GT$.
Therefore, by step~\ref{step:admissiblePaths}, $\mathbf \G$ has at least $\deg_\G(u)/\delta$
edge-disjoint $(r,\G)$-admissible paths each of length $k$, and with $u$ as the start vertex and $v$ as the end vertex.
Since these paths are edge-disjoint, at least $\deg_\G(u)/\delta$ of the neighbours of the vertex $u$ are in one of these paths.
Thus, according to the inner loop with probability at least $1-(1 - \delta^{-1})^{\xi_3}$,
if $u\in V(G_\tau)$, then $y\in V(G_{\tau+1})$ for one of the paths $uy\tilde{L}v$ just mentioned.

Let $uy\tilde{L}v$, one of the $(r,\GT)$-\chain s mentioned above.
By definition, as $uLv$ is an $(r,\GT)$-\chain, so is $uy\tilde{L}v$.
Thus, by the induction assumption, if $u\in V(G_{\tau + 1})$, then with probability at least $1-(\ell-1)(1 - \delta^{-1})^{\xi_3}$, it holds that $v\in V(G_{\tau + \ell})$. 
Thus, we can conclude that the lemma holds when $x = v$, and the proof is done.
\end{proof}

\newcommand{\probFindMin}{\probTargetVertex{r}}

\begin{lemma}\label{lem:MinimumVertex}
Let $\J$ be an $H$-subgraph of $\GT$ and $u\in V(\J)$.
If $\alpha \leq \delta$ and $u\in V(G_\tau)$, where $\tau \leq \xi_2-r$, then with probability at least $\probFindMin$, we have $\min_{\J}V(\J)\in V(G_{\tau + r})$. 
\end{lemma}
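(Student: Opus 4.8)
The statement to prove, Lemma~\ref{lem:MinimumVertex}, follows almost immediately from Lemma~\ref{lem:TargetVertex} once we establish the connecting structural fact, namely that there is a short $(r,\GT)$-\chain\ from $u$ to $\min_\J V(\J)$. The plan is as follows. Let $m = \min_\J V(\J)$. If $u = m$ there is nothing to prove (the conclusion holds deterministically), so assume $u \neq m$, which in particular gives $u >_{\GT} m$ since $m$ is the $\leq_\G$-minimum of $V(\J)$ and $\GT$ carries the same order as $\G$.

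First I would produce the chain. Since $H$ is connected, $\J$ is a connected ordered graph on at most $r$ vertices, so there is a path in $\J$ from $u$ to $m$; picking a \emph{simple} such path $uLm$ gives $\length(uLm) \leq |V(\J)| - 1 \leq r - 1 < r$. I claim $uLm$ is an $(r,\GT)$-\chain: by Definition~\ref{def:chain} we need every internal vertex $w \in uL$ to satisfy $w >_\GT m$, and indeed every vertex of $\J$ other than $m$ is strictly greater than $m$ under $\leq_\G = \leq_\GT$ because $m = \min_\J V(\J)$. (An $(r,\GT)$-\chain\ also requires no $\GT$-admissible subpath of length exceeding $r$, which is automatic here since the whole path has length $< r$.) Note this chain lives in $\GT$ because $\J$ is a subgraph of $\GT$, and hence also in $\G$, exactly as needed to invoke the earlier lemma.

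Next I would simply apply Lemma~\ref{lem:TargetVertex} with this $(r,\GT)$-\chain\ $uLm$ of length $\ell := \length(uLm) \leq r$: the hypotheses $u >_\GT m$, $\alpha \leq \delta$, and $u \in V(G_\tau)$ with $\tau \leq \xi_2 - r \leq \xi_2 - \ell$ all transfer directly. That lemma then yields $m \in V(G_{\tau + \ell})$ with probability at least $\probTargetVertex{\ell}$. To finish I would monotonicity-argue that this probability is at least $\probTargetVertex{r} = \probFindMin$ (the bound $1 - \ell(1-\delta^{-1})^{\xi_3}$ is decreasing in $\ell$, and $\ell \leq r$), and that $m \in V(G_{\tau+\ell}) \subseteq V(G_{\tau+r})$ since the vertex sets $V(G_s)$ are non-decreasing in $s$ by construction of Algorithm~\ref{alg:PBFS} (each \OuterLoop\ iteration only adds vertices) — using here that $\tau + r \leq \xi_2$ so the index $\tau+r$ is actually reached by the run.

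\textbf{Main obstacle.} There is essentially no hard step: the only thing requiring care is the verification that a shortest $u$--$m$ path in $\J$ genuinely is an $(r,\GT)$-\chain\ (the order-minimality of $m$ in $V(\J)$ is what makes this work, and one must remember $\GT$ inherits the order of $\G$), and the bookkeeping that the length $\ell \leq r$ so that the index shifts and the probability bound line up with the statement. Everything quantitative is already packaged inside Lemma~\ref{lem:TargetVertex}.
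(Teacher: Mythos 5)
Your proposal is correct and follows essentially the same route as the paper: exhibit an $(r,\GT)$-\chain\ from $u$ to $\min_\J V(\J)$ inside $\J$ (using connectivity of $H$ and the order-minimality of that vertex) and then invoke Lemma~\ref{lem:TargetVertex}. The paper's own proof is terser — it asserts the existence of the chain without the verification you supply — so your added checks (that the simple path is indeed an $(r,\GT)$-\chain, the monotonicity of the probability bound in $\ell$, and the monotonicity of the sets $V(G_s)$) are exactly the details the paper leaves implicit.
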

\begin{proof}
Let $v = \min_{\J}V(\J)$.
Since $|V(H)|\leq r$, and $\J$ is an $H$-subgraph of $\GT$,
there exists an $(r,\GT)$-chain in $\GT$ with $u$ as a start vertex and $\min_{\J}V(\J)$ as an end vertex.
Since also $\alpha \leq \delta$,
by Lemma~\ref{lem:TargetVertex}, if $u \in V(G_\tau)$, then
$\min_{\J}V(\J) \in V(G_{\tau + r})$, with probability at least
$\probFindMin$.
\end{proof}

\noindent
The following lemma appears quite similar to Lemma~\ref{lem:TargetVertex}, so let us point out the important difference first:
Lemma~\ref{lem:TargetVertex} deals with the probability that Algorithm~\ref{alg:PBFS} discovers a vertex $v$ given that it already discovered a vertex $\mathbf{u>_{\GT} v}$ and there exists an $(r,\GT)$-\chain~$(uPv)$;
the following lemma deals with the probability that Algorithm~\ref{alg:PBFS} discovers a vertex $u$, given that it already discovered a vertex $\mathbf{v>_{\GT} u}$ that satisfies some extra condition.
We will need it to analyze the probability of discovering a $\Nadir$ vertex.

\begin{lemma}\label{lem:ProbableNadir}
Let $u,v \in V(G)$ be such that $\mathbf{v>_{\GT} u}$.
Assume that there exists a family $\mathcal{S}$, of edge-disjoint $(r,\GT)$-admissible paths, which have $u$ as the start vertex and $v$ as the end vertex, and that $|\mathcal{S}| \geq \deg_\G(v)/\delta$.
If $\alpha \leq \delta$ and $v \in G_\tau$, where $\tau \leq \xi_2-\ell$, then with probability at least $1-r(1 - \delta^{-1})^{\xi_3},$
 we have $u \in G_{\tau + r}$.
\end{lemma}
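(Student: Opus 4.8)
The plan is to mirror the structure of Lemma~\ref{lem:TargetVertex} but ``reversed'': there we walked \emph{down} an admissible chain from a discovered high vertex; here we want to walk \emph{down} a bundle of admissible paths from a discovered high vertex~$v$ to its common lower endpoint~$u$. First I would observe that the single-edge case is essentially the same as in Lemma~\ref{lem:TargetVertex}: since $v >_{\GT} u$ and $vu \in E(\GT)$, trimming step~(\ref{step:Neighbours}) forces $\deg_\G(v) \leq \alpha \leq \delta$, and the hypothesis on $\mathcal S$ gives $|\mathcal S| \geq \deg_\G(v)/\delta \geq 1$, so the \InnerLoop{} at $v$ hits the (unique) neighbour $u$ with probability at least $1 - (1-\alpha^{-1})^{\xi_3} \geq 1 - r(1 - \delta^{-1})^{\xi_3}$ — here I am using $\alpha \le \delta$ exactly as in the base case of Lemma~\ref{lem:TargetVertex}.

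For the inductive step on the common path length~$\ell$, the key point is that the hypothesis gives $|\mathcal S| \geq \deg_\G(v)/\delta$ \emph{edge-disjoint} admissible paths $uP_iv$ all ending at $v$; hence at least $\deg_\G(v)/\delta$ of $v$'s neighbours are the ``last'' vertices $w_i$ on these paths (the neighbour of $v$ on $P_i$). So, since $v \in V(G_\tau)$, the \InnerLoop{} at $v$ discovers at least one such $w_i \in V(G_{\tau+1})$ with probability at least $1 - (1-\delta^{-1})^{\xi_3}$. Now for that particular $P_i = uP_i'w_iv$, the subpath $uP_i'w_i$ is an $(r,\GT)$-admissible path of length $\ell-1$ — here I need $w_i >_{\GT} u$, which holds because $uP_iv$ is admissible so every internal vertex, in particular $w_i$, exceeds~$u$ — and hence an $(r,\GT)$-chain by Proposition~\ref{prop:subChain}. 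Since we also have $\alpha \le \delta$, Lemma~\ref{lem:TargetVertex} applied to this chain (with start vertex $w_i$, end vertex $u$, $w_i >_{\GT} u$) gives that if $w_i \in V(G_{\tau+1})$ then $u \in V(G_{\tau+1+(\ell-1)}) = V(G_{\tau+\ell})$ with probability at least $1 - (\ell-1)(1-\delta^{-1})^{\xi_3}$. Taking a union bound over the two events yields $u \in V(G_{\tau+\ell}) \subseteq V(G_{\tau+r})$ (using $\ell \le r$ and monotonicity of the $V(G_s)$) with probability at least $1 - \ell(1-\delta^{-1})^{\xi_3} \geq 1 - r(1-\delta^{-1})^{\xi_3}$.

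Actually, I realize the clean way to organize this is \emph{not} an induction on $\ell$ at all, but a single application: the one step from $v$ to some $w_i$ using the bundle $\mathcal S$, followed by one application of Lemma~\ref{lem:TargetVertex} down the chain $w_i P_i' u$ of length $\ell - 1 \le r - 1$. The main obstacle — and the only place needing care — is verifying the hypotheses of Lemma~\ref{lem:TargetVertex} for the subpath: that $w_i >_{\GT} u$ (from admissibility of $P_i$), that $w_i P_i' u$ is genuinely an $(r,\GT)$-chain (Proposition~\ref{prop:subChain}, noting $\ell - 1 \le r$), and that $\alpha \le \delta$ is available. The step count also needs the index bound $\tau + 1 \le \xi_2 - (\ell - 1)$, i.e. $\tau \le \xi_2 - \ell$, which is exactly the hypothesis. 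Everything else is a routine union bound, and I would not grind through the probability arithmetic beyond noting $(1 - (1-\delta^{-1})^{\xi_3}) \cdot (1 - (\ell-1)(1-\delta^{-1})^{\xi_3}) \ge 1 - \ell(1-\delta^{-1})^{\xi_3} \ge 1 - r(1-\delta^{-1})^{\xi_3}$.
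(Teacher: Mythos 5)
Your final "single application" organization is exactly the paper's proof: use edge-disjointness to get at least $\deg_\G(v)/\delta$ neighbours of $v$ lying on paths of $\mathcal S$, hit one such neighbour $y$ via the \InnerLoop{} with probability at least $1-(1-\delta^{-1})^{\xi_3}$, observe that the remaining subpath $yLu$ is an $(r,\GT)$-\chain{} by Proposition~\ref{prop:subChain}, and finish with Lemma~\ref{lem:TargetVertex} plus a union bound. The preliminary induction and separate base case are unnecessary but harmless, and your care about the hypotheses of Lemma~\ref{lem:TargetVertex} (the order relation between the intermediate vertex and $u$, and the index bound $\tau+1\le\xi_2-(\ell-1)$) matches what the paper implicitly relies on.
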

\begin{proof}
 Since the paths in $\mathcal{S}$ are edge-disjoint, at least $\deg_\G(v)/\delta$ of the neighbours of the vertex $v$ are on one of these paths.
Thus, according to \InnerLoop, with probability at least $1-(1 - \delta^{-1})^{\xi_3}$,
if $v$ is in the knowledge graph at the end of iteration $\tau$, then a vertex $y$ of a path in $\mathcal{S}$ is in the knowledge graph at the end of iteration $\tau+1$. 

Let $yLu$ be the subpath of the path in $\mathcal{S}$ that includes $y$.
Since $yLu$ is a subpath of an $(r,\GT)$-admissible path, by Proposition~\ref{prop:subChain} $yLu$ is an $(r,\GT)$-\chain.
Thus, by Lemma~\ref{lem:TargetVertex}, given that $y\in G_{\tau+1}$, with probability at least $1-(r-1)(1 - \delta^{-1})^{\xi_3}$, we have $u\in G_{\tau + r}$.
This together with the previous implies the lemma.
\end{proof}

\newcommand{\manyNadirsProb}{1 - 1/(20r)}
\newcommand{\manyNadirsProbI}{1 - 1/(40r)}

\begin{lemma}\label{lem:ManyNadirs}
Let $(\G',U_\G')$ be a useful pair, and
 $(\MD,U)$ be a \dStrata{$\delta$} in $\GT$.
If $(r-1)(1 - \delta^{-1})^{\xi_3} < 1/2$, $\delta/64\beta^2 > 16\ln r$,$\xi_3 > \delta/(2\beta)$ and
$\max_{\G} U \in G_{\tau}$~then,
with probability at least 
$\manyNadirsProb$,
  the graph $G_{\tau+r}$ includes at least $\delta/(16\beta^2)$ vertices from $\Nadir(\MD,U)$.
\end{lemma}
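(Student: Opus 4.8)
The plan is to follow the $\xi_3$ random neighbour queries made at $v := \max_\G U$ in iteration $\tau+1$ (which the algorithm performs because $v \in G_\tau$) down short \emph{admissible} paths to the nadirs of $\MD$, in the spirit of Lemma~\ref{lem:ProbableNadir} but aggregated over many targets. First I would collect the structural ingredients. Assume $\MD$ is a maximum \dStrata{$\delta$} with $\max_\G U = v$ --- the case relevant to the application --- so that trimming step~(\ref{step:TrimDeltaStrata}) forces $|\MD| \ge \deg_\G(v)/\beta$. Since $(\MD,U)$ is a \dStrata{$\delta$}, each $w \in \Nadir(\MD,U)$ is $\delta$-weak, and the sub-collection of members of $\MD$ with nadir $w$ is a Strata with nadir set $\{w\}$, hence has at most $\deg_\G(v)/\delta$ members; summing over $w$ yields $|\Nadir(\MD,U)| \ge |\MD|\,\delta/\deg_\G(v) \ge \delta/\beta$. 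Next, for each $(\G_i,U) \in \MD$ with $n_i := \Nadir(\G_i,U)$ I would exhibit an $(r,\GT)$-admissible path $R_i$ from $n_i$ to $v$: by the prefix property $v$ has a neighbour $y_i \in V(\G_i)\setminus U$ with $vy_i \in E(\G_i)$, and since $(\G_i,U)$ is a useful pair the graph $\G_i[V(\G_i)\setminus U]$ is connected, so it contains a $y_i$--$n_i$ path; reversing it and appending the edge $y_iv$ gives $R_i$. Every internal vertex of $R_i$ lies in $V(\G_i)\setminus U$, hence exceeds $n_i$ under $\le_\G$, while $v <_\G n_i$, and $\length(R_i) \le r-1$; so $R_i$ is indeed $(r,\GT)$-admissible. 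By condition~(2) in Definition~\ref{def:strata} the members of $\MD$ share only edges inside $U$, which the $R_i$ avoid, so the $R_i$ are pairwise edge-disjoint and their heads $y_i$ are distinct neighbours of $v$. Thus at least $|\MD| \ge \deg_\G(v)/\beta$ of $v$'s neighbours are heads, and at most $\deg_\G(v)/\delta$ of them are heads of a path ending at any fixed nadir.

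For the probabilistic core I would estimate how many distinct nadirs have a head returned by the $\xi_3$ independent uniform neighbour queries at $v$. Writing $d_w$ for the number of heads of paths ending at $w$, the chance that some such head is returned is $1-(1-d_w/\deg_\G(v))^{\xi_3}$, and over distinct $w$ these indicators are negatively associated. Using $1-(1-x)^k \ge \tfrac12\min(kx,1)$ and splitting $\Nadir(\MD,U)$ according to whether $\xi_3 d_w/\deg_\G(v) \ge 1$ --- in which case there are at least $\delta/(2\beta)$ such $w$, since each accounts for at most $\deg_\G(v)/\delta$ of a head-mass of at least $\deg_\G(v)/(2\beta)$ --- or $<1$ --- in which case the contribution of those $w$ totals at least $\xi_3/(2\beta)$, using $\xi_3 > \delta/(2\beta)$ --- I would get that the expected number of distinct nadirs with a returned head is $\Omega(\delta/\beta)$. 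A Chernoff bound for negatively associated variables, together with the hypothesis $\delta/(64\beta^2) > 16\ln r$, then shows that with probability at least $\manyNadirsProbI$ there are at least $\delta/(8\beta)$ distinct nadirs, each equipped with a head $y_i \in G_{\tau+1}$.

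It remains to push these heads down to their nadirs. For a head $y_i \in G_{\tau+1}$, the subpath of $R_i$ from $y_i$ to the endpoint $n_i$ of $R_i$ is, by Proposition~\ref{prop:subChain}, an $(r,\GT)$-chain of length at most $r-1$ with $y_i >_\GT n_i$; so by Lemma~\ref{lem:TargetVertex} and the hypothesis $(r-1)(1-\delta^{-1})^{\xi_3} < 1/2$, conditioned on $y_i \in G_{\tau+1}$ we have $n_i \in G_{\tau+r}$ with probability exceeding $1/2$. To turn this into a concentration statement across nadirs despite the chains possibly sharing vertices, I would thin the hit useful pairs to a sub-family that is vertex-disjoint outside $U$, one pair per nadir: by the counting underlying Lemma~\ref{lem:ParallelHSubgraphs} (each vertex outside $U$ lies in only boundedly many members of a $\delta$-Strata) a greedy choice keeps at least a $(rp^{r^2})^{-1}$-fraction of the nadirs, still at least $\delta/(16\beta^2)$ up to absolute constants since $\beta = \betaLB$ dwarfs $rp^{r^2}$. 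For this thinned family the downward traversals occupy disjoint vertex sets and so are conditionally independent given which heads were returned; a second Chernoff bound, again using $\delta/(64\beta^2) > 16\ln r$, then gives at least $\delta/(16\beta^2)$ of the corresponding nadirs in $G_{\tau+r}$ with probability at least $\manyNadirsProbI$, and a union bound over the two failure events produces the claimed probability $\manyNadirsProb$.

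The hard part is exactly this probabilistic core --- converting the structural statement ``$\Omega(\delta/\beta)$ nadirs are each reachable from $v$ along a short admissible path, with only $O(\deg_\G(v)/\delta)$ such paths per nadir'' into ``$\Omega(\delta/\beta^2)$ nadirs lie in $G_{\tau+r}$ with high probability.'' This needs the two-regime expectation estimate, concentration via negative association, and --- the subtlest point --- controlling the fact that the downward traversals for different nadirs can collide on shared vertices; the remedy is to first thin to a vertex-disjoint sub-Strata so that the traversals become conditionally independent, which is affordable only because $\beta$ is vastly larger than the $p^{r^2}$-scale loss this incurs.
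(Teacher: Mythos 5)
Your proposal follows the same two--phase architecture as the paper's proof: first argue that the $\xi_3$ queries at $v=\max_\G U$ land on ``heads'' (neighbours of $v$ lying on short descending chains) belonging to many \emph{distinct} nadirs, then push each discovered head down to its nadir via Lemma~\ref{lem:TargetVertex}. The implementations of phase one differ: the paper tracks the size of a maximum matching in a bipartite ``head--nadir'' graph and concentrates it with a vertex-exposure martingale and Azuma--Hoeffding, whereas you compute the expectation directly with a heavy/light split and invoke negative association plus Chernoff. For phase two the paper simply asserts that the descent events for distinct matched nadirs are independent and applies Chernoff; you instead thin to a sub-strata that is vertex-disjoint outside $U$ (via the counting inside Lemma~\ref{lem:ParallelHSubgraphs}) before concentrating, which is a more defensible treatment of the dependence you correctly identify as the subtle point (though even vertex-disjoint chains do not make the descents fully independent, since Lemma~\ref{lem:TargetVertex}'s witnessing events route through the off-chain parallel admissible paths guaranteed by trimming step~(\ref{step:admissiblePaths}); this weakness is shared with, and indeed worse in, the paper's own one-line independence claim).

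There is, however, a quantitative gap in your light regime. Under the lemma's stated hypothesis $\xi_3>\delta/(2\beta)$, when the light nadirs (those with $\xi_3 d_w/\deg_\G(v)<1$) carry at least half of the head mass, your two-regime estimate only yields an expected number of hit nadirs of at least
\[
\tfrac12\cdot\frac{\xi_3}{\deg_\G(v)}\sum_{w\ \mathrm{light}} d_w \;\ge\; \frac{\xi_3}{4\beta}\;>\;\frac{\delta}{8\beta^2},
\]
which is $\Omega(\delta/\beta^2)$, not the $\Omega(\delta/\beta)$ you claim. Your subsequent chain of losses --- a factor $r p^{r^2}$ from the vertex-disjoint thinning, a factor $2$ from the descent probability, and a factor $2$ from the final Chernoff bound --- then leaves roughly $\delta/(64\beta^2 r p^{r^2})$ nadirs, strictly short of the claimed $\delta/(16\beta^2)$. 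The paper avoids this because it incurs no thinning loss: a matching of size $\delta/(4\beta^2)$ already suffices to reach $\delta/(16\beta^2)$ after the two halvings. Your argument does close if one uses the actual parameter value $\xi_3=\xiThreeVal$ (so that $\xi_3/(4\beta)=\Omega(\delta/\beta)$, which comfortably absorbs the $rp^{r^2}$ thinning loss since $\beta=\betaLB\gg rp^{r^2}$), but as written the lemma is not established from its stated hypotheses; you should either strengthen the hypothesis on $\xi_3$ or rebalance the constants in the light regime.
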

\newcommand{\col}{\mbox{col}}

\begin{proof}
Let $v = \max_\G U$ and suppose that, $(r-1)(1 - \delta^{-1})^{\xi_3} < 1/2$, $\delta/64\beta^2 > 16\ln r$, $\xi_3 > \delta/(2\beta)$ and $v \in G_{\tau}$ where $\tau \leq \xi_2 - r$ of \OuterLoop.
Let $S = N_{G}(v) \cap V(G_{\tau+1})$, 
and $W=(S,\Nadir(\MD,U),F)$ be a bipartite graph such that
$xy \in F$ if and only if $x\in S$, $y \in \Nadir(\MD,U)$ and $\GT$ has an $(r,\GT)$-\chain~$xLy$.
Also, let $Y$ be the size of a maximum matching in $W$.  

We first show that if $Y\geq \delta/(4\beta^2)$ at the end of iteration $\tau + 1$ of \OuterLoop, then with probability at least $\manyNadirsProbI$, at the end of iteration $\tau + r$ of \OuterLoop,
$G_{\tau+r}$ includes at least $\delta/(16\beta^2)$ vertices from $\Nadir(\MD',U)$.
Afterwards, we show that with probability at least $\manyNadirsProbI$, at the end of iteration $\tau + 1$ of \OuterLoop, $Y \geq \delta/(4\beta^2)$.
This together with the previous implies the lemma.

Suppose that $Y \geq \delta/(4\beta^2)$, and let $M\subseteq F$ be a maximum matching in $W$.
For every edge $xy\in M$, where $x\in S$ and $y \in \Nadir(\MD,U)$,
the graph $\GT$ has an $(r,\GT)$-\chain~$xLy$.
Thus, by Lemma~\ref{lem:TargetVertex}, for every edge $xy\in M$, it holds that $y \in G_{\tau + r}$, with probability at least $1 - (r-1)(1 - \delta^{-1})^{\xi_3} \geq 1/2$, where the inequality follows since $(r-1)(1 - \delta^{-1})^{\xi_3} < 1/2$.
By definition of Algorithm~\ref{alg:PBFS} (see \MiddleLoop~condition), these events are independent.
Therefore, by the Chernoff bound, the graph $G_{\tau+r}$ includes at least $\delta/(16\beta^2)$ vertices of $\Nadir(\MD,U)$, with probability at least $1 - e^{-\delta/64\beta^2} > \manyNadirsProbI$, where the inequality follows since $\delta/64\beta^2 > 16\ln r$. We next prove the second claimed probability.

\newcommand{\Exp}{\mathbb{E}}

Now, we show that, with probability at least $\manyNadirsProbI$, at the end of iteration $\tau + 1$ of \OuterLoop, $Y \geq \delta/(4\beta^2)$. We use martingales for the proof.
For every $i\in [\xi_3]$, let $X_i$ be the identity of the neighbour of $v$ selected in iteration $i$ of \InnerLoop, and let $W_i = (S_i,\Nadir(\MD',U),F_i)$ be a bipartite graph such that 
$S_i = \{X_j\}_{j=1}^{i}$  and
$xy \in F_i$ if and only if $x\in S_i$, $y \in \Nadir(\MD,U)$ and $\GT$ has an $(r,\GT)$-\chain~$xLy$.
We set $W = W_{\xi_3}$. 

For every $i \in [2,\xi_3]$, let $Y_i$ be the size of a maximum matching in $W_i$ minus the size of a maximum matching in $W_{i-1}$.
Since $\xi_3 > \delta/(2\beta)$ we have that the expectation $\Exp[Y]\geq  \sum_{i=1}^{\delta/(2\beta)}\Exp[Y_i]$ (where $\Exp$ denotes the expected value). 
Next, we show that for every $i\in [\delta/(2\beta)]$, we have $\Exp[Y_i] \geq 1/(2\beta)$.
Together, with the above, this implies that $\Exp[Y]\geq \delta/(4\beta^2)$.

By (\ref{step:TrimDeltaStrata}) of trimming  the definition of a \dStrata{$\delta$}, $|\MD| \geq \deg_\G(v)/\beta$.
So, if we remove from $(\MD,U,)$, all the useful pairs
$(\G^*,U)\in \MD$ such that $\Nadir(\G^*,U)$ is in a  subset of $\Nadir(\MD,U)$ of size at most $\delta/(2\beta)$, then the remaining \Strata~will be of size $\deg_\G(\max_\G U)/2\beta$.
Hence, the probability that a vertex selected in \InnerLoop~will be in one of the $\G$-useful pairs of $(\MD^*,U)$ is at least $1/(2\beta)$. Note that if this event occurs for some $i\in [2,\delta/(2\beta)]$, then $Y_i - Y_{i-1} = 1$.
Thus, we conclude that, indeed, for every $i\in [\delta/(2\beta)]$, we have $\Exp[Y_i] \geq 1/(2\beta)$.

We use the vertex exposure martingale to show that with sufficiently high probability $Y$ is sufficiently large. 
The whole graph $W$ is revealed by iterating over $i= 1,\dots,\xi_3$, and for every $i \in [\xi_3]$  the value $x$ of $X_i$ is revealed and the identity of the neighbours of $x$ in $W$ and the edges of $W$ incident on $x$.
For every $t\in [\xi_3]$, we let $Z_t = \Exp(Y\mid X_t,\dots,X_1)$.

We note that for every $t\in [\delta/(2\beta)]$, we have $|Z_{t+1} - Z_t| \leq 1$ since $Z_{t+1}$ has at most one fixed vertex more than $Z_t$.
Thus, by the Azuma--Hoeffding tail bound, we have
$|Z_{\delta/(2\beta)} - Z_0| \leq \delta/(8\beta^2)$,
with probability at least $1 - e^{-\frac{(\delta/(8\beta^2))^2}{2\delta/(2\beta)}} = 1 - e^{-\frac{\delta}{64\beta^3}} \geq \manyNadirsProbI$, where the last inequality follows because $\delta/64\beta^2 > 16\ln r$.
This implies that with probability at least $\manyNadirsProbI$, we have $Y_{\delta/(2\beta)} \geq \delta/(8\beta^2)$.
Thus, as $\xi_3 > \delta/(2\beta)$, we have $Y \geq Y_{\delta/(2\beta)}$, with probability at least $\manyNadirsProbI$.
\end{proof}

\begin{lemma}\label{lem:AllEdges}
Let $\tau \leq \xi_2-r$.
If $G_{\tau}$ includes all the vertices of an $H$-subgraph of $\GT$, then, with probability at least $1 - r^2(1-\alpha^{-1})^\delta$, the graph $G_{\tau + r}$ includes all the edges of the $H$-subgraph.
\end{lemma}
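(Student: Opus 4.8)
The plan is to observe that, once all vertices of an $H$-subgraph are present in the knowledge graph, a single further iteration of the \OuterLoop\ already discovers each of its edges with high probability, and then to conclude by a union bound over the at most $r(r-1)/2$ edges of that subgraph. No structural work about admissibility, chains, or stratas is needed for this step; it is purely a concentration-free calculation built on the trimming guarantee.

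Concretely, I would fix an $H$-subgraph $\J$ of $\GT$ with $V(\J)\subseteq V(G_\tau)$, whose existence is guaranteed by the hypothesis. For each edge $uv\in E(\J)$ name the endpoints so that $u>_{\GT}v$; since $\J$ is a subgraph of $\GT$ we have $uv\in E(\GT)$, so Proposition~\ref{prop:Light} yields $\deg_\G(u)\leq\alpha$, and $\deg_G(u)=\deg_\G(u)$ because $G$ and $\G$ share the same edge set. Because $u\in V(\J)\subseteq V(G_\tau)$ and $\tau+1\leq\xi_2-r+1\leq\xi_2$, iteration $\tau+1$ of the \OuterLoop\ is executed, its \MiddleLoop\ processes $u$, and the \InnerLoop\ then issues $\xi_3$ independent random-neighbour queries at $u$. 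Each such query returns the neighbour $v$ with probability $1/\deg_G(u)\geq 1/\alpha$, so with probability at most $(1-\alpha^{-1})^{\xi_3}\leq(1-\alpha^{-1})^{\delta}$ (using $\xi_3\geq\delta$) none of them does; that is, the edge $uv$ is added to $E(G_{\tau+1})$ except with probability at most $(1-\alpha^{-1})^{\delta}$.

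Finally I would take a union bound over the edges of $\J$, of which there are $|E(H)|\leq r(r-1)/2<r^2$: every edge of $\J$ lies in $E(G_{\tau+1})$ with probability at least $1-r^2(1-\alpha^{-1})^{\delta}$. Since the graphs $G_0\subseteq G_1\subseteq\cdots$ are nested and $\tau+r\leq\xi_2$, we get $E(G_{\tau+1})\subseteq E(G_{\tau+r})$, which is exactly the claim. I do not anticipate a genuine obstacle here: no independence between distinct edges is needed, since the union bound is applied query-by-query even when two edges of $\J$ share the same larger endpoint, and the extra slack of $r$ rounds in $G_{\tau+r}$ serves only to keep the statement uniform with Lemmas~\ref{lem:TargetVertex}--\ref{lem:ManyNadirs} — one round already suffices here because \emph{all} of $V(\J)$ is present at time $\tau$. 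The only things worth double-checking are the index bookkeeping ($1\leq\tau+1$ and $\tau+r\leq\xi_2$, both immediate from $\tau\leq\xi_2-r$ and $r\geq 1$) and that the random queries are answered with respect to $G$, whose vertex degrees coincide with those of $\G$.
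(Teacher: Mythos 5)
Your proposal is correct and follows essentially the same argument as the paper: fix an edge $uv$ of the $H$-subgraph with $u>_{\GT}v$, invoke Proposition~\ref{prop:Light} to get $\deg_\G(u)\le\alpha$, bound the per-edge failure probability by $(1-\alpha^{-1})^{\delta}$ via the \InnerLoop, and union-bound over the at most $r^2$ edges. The only cosmetic difference is that you collect all edges in a single further iteration while the paper spreads the union bound over $|V(H)|-1$ iterations; both yield the stated bound, and your explicit remark that $\xi_3\ge\delta$ is a detail the paper leaves implicit.
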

\begin{proof}
Let $\J$ be an $H$-subgraph in $\GT$ and suppose that $V(\J) \subseteq G_{\tau}$.
Let $uv \in E(\J)$, where $u>_\G v$.
By Proposition~\ref{prop:Light}, $\deg_\G(u)\leq \alpha$.
Hence, according to the \InnerLoop, with probability at least $1-(1-\alpha^{-1})^\delta$,
$uv\in G_{\tau + r}$.

For every vertex $u$ as above, at most $|V(H)|-1$ edges need to be discovered, by the union bound, the probability that this happens after $|V(H)|-1$ iteration of \OuterLoop, is at least $1 - (|V(H)|-1)(1-\alpha^{-1})^\delta$.
Since there may be up to $|V(H)|$, by the union bound, with probability at least $1 - |V(H)|(|V(H)|-1)(1-\alpha^{-1})^\delta \geq 1 - r^2(1-\alpha^{-1})^\delta$, the graph $G_{\tau + |V(H)|-1}$ includes all the edges of $\J$, and as $|V(H)|\leq r$ so does $G_{\tau + r}$.
\end{proof}

\begin{theorem}\label{thm:main}
If the input graph $G$ is $H$-free, then Algorithm~\ref{alg:PBFS} accepts with probability $1$. If the input graph $G$ is $(p,r)$-admissible and $\epsilon$-far from $H$-freeness, then Algorithm~\ref{alg:PBFS} rejects with probability at least $2/3$.
The query complexity of Algorithm~\ref{alg:PBFS} is independent of the size of the input graph. 
\end{theorem}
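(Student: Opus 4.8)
The plan is to assemble the theorem from the lemmas already proved, tracking a single chain of conditional probabilities. The first clause is immediate: Algorithm~\ref{alg:PBFS} only rejects when $G_{\xi_2}$ contains an $H$-subgraph, and every edge of $G_{\xi_2}$ is an edge of $G$ discovered by an oracle query, so an $H$-free input can never be rejected. The query-complexity clause is exactly Lemma~\ref{lem:PBFSQC}. So the whole content is the middle clause: if $G$ is $(p,r)$-admissible and $\epsilon$-far from $H$-freeness, the algorithm rejects with probability at least $2/3$.

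For that clause I would first fix the parameters as in line~\ref{line:XiValues} of the algorithm: $\epsilon' = \epsilon/2$, $\alpha = \alphaLB$, $\beta = \betaLB$, $\delta = \deltaVal$. One then checks, once and for all, that these satisfy every numerical hypothesis invoked by the lemmas below --- in particular $\alpha \le \delta$, $(r-1)(1-\delta^{-1})^{\xi_3} < 1/2$, $\delta/(64\beta^2) > 16\ln r$, $\xi_3 > \delta/(2\beta)$, and that $\xi_3 = \xiThreeVal$ is a multiple of $r$ --- using $\xi_2 = \xiTwoVal$ and $\xi_1 = \xiOneVal$. Since $G$ is $2\epsilon'$-far from $H$-freeness, Lemma~\ref{lem:TrimDist} gives that the trimmed graph $\GT$ (built with these $\alpha,\beta,\delta$) is $\epsilon'$-far from $H$-freeness; note the algorithm never computes $\GT$, it is only an analysis device.

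Next I would run the four-step argument outlined in Section~\ref{sec:PBFS}, each step conditioned on the previous. Step~(i): by Lemma~\ref{lem:ProbHittingHSub}, with probability at least $\probHitting$ the set $V(G_0)$ contains a vertex $u$ of some $H$-subgraph $\J$ of $\GT$; with $\xi_1 = \xiOneVal$ this probability is at least, say, $0.99$. Step~(ii): conditioned on that, Lemma~\ref{lem:MinimumVertex} gives that $x_1 := \min_\J V(\J) \in V(G_r)$ with probability at least $\probFindMin$. Step~(iii) is the crux and uses Lemma~\ref{lem:HStable} together with Lemmas~\ref{lem:ProbableNadir}, \ref{lem:ManyNadirs} and~\ref{lem:ParallelHSubgraphs}: starting from the $\J$-stable set $\{x_1\}$ (item~(\ref{item:MinimumStable}) of Lemma~\ref{lem:HStable}), one argues inductively over the at most $r$ vertices of $H$ that, after each further block of $r$ iterations of the \OuterLoop, the discovered vertices still form a $\J^*$-stable set for some $H$-subgraph $\J^*$ of $\GT$ --- here one splits on whether the relevant nadir is $\delta$-weak or $\delta$-strong, applying Lemma~\ref{lem:ManyNadirs} (a \dStrata{$\delta$} of guaranteed size $\deg_\G(\max_\G U)/\beta$ is available by trimming step~(\ref{step:TrimDeltaStrata}), so many nadirs are found) in the weak case and Lemma~\ref{lem:ProbableNadir} in the strong case, using item~(\ref{item:PlusNadir}) of Lemma~\ref{lem:HStable} to maintain stability --- until item~(\ref{item:AllofH}) of Lemma~\ref{lem:HStable} forces $U = V(\J^*)$, i.e.\ all vertices of $\J^*$ are discovered. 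Since each of these at most $r$ extensions costs $r$ iterations and succeeds with probability $1 - O(1/r)$ (or $1 - r(1-\delta^{-1})^{\xi_3}$, which the parameter choice makes tiny), and since $\xi_2 = r^2 + 3r + 1 > 1 + r + r\cdot r$ leaves enough iterations, a union bound over the $\le r$ steps keeps the conditional success probability above, say, $0.9$. Step~(iv): conditioned on all vertices of $\J^*$ lying in some $G_\tau$ with $\tau \le \xi_2 - r$, Lemma~\ref{lem:AllEdges} gives that $G_{\tau+r} \subseteq G_{\xi_2}$ contains all edges of $\J^*$ with probability at least $1 - r^2(1-\alpha^{-1})^\delta$, again made close to $1$ by the parameter choice. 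Multiplying the four conditional probabilities gives overall success probability well above $2/3$, and when this event occurs $G_{\xi_2}$ contains the $H$-subgraph $\J^*$, so the algorithm rejects.

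The main obstacle is Step~(iii): one must phrase the induction so that the object being tracked is not a fixed $H$-subgraph but a $\J^*$-stable set, carefully invoke Lemma~\ref{lem:HStable}(\ref{item:PlusNadir}) to pass a stable set to a larger stable set after discovering a nadir, correctly handle the bookkeeping of which \Strata~or \dStrata{$\delta$} guarantees the needed parallel useful pairs (trimming steps~(\ref{step:TrimStrata}) and~(\ref{step:TrimDeltaStrata}), plus Lemma~\ref{lem:ParallelHSubgraphs} when $|\Nadir| \ge \NadirNumbers$), and verify that the total iteration budget $\xi_2$ and the failure probabilities accumulated over $\le r$ rounds still leave a comfortable margin above $2/3$. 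Everything else is a matter of plugging the fixed parameter values into the lemma hypotheses and taking a union bound.
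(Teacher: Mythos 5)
Your proposal is correct and follows essentially the same route as the paper's proof: the same reduction to the trimmed graph via Lemma~\ref{lem:TrimDist}, the same four-stage chain (hit an $H$-subgraph, reach its minimum, grow a stable set by the weak/strong nadir case split using Lemmas~\ref{lem:ProbableNadir}, \ref{lem:ManyNadirs} and~\ref{lem:ParallelHSubgraphs} with the $H$-subgraph swap, then collect the edges via Lemma~\ref{lem:AllEdges}), and the same parameter checks and union bound. The point you flag as the main obstacle --- tracking a $\J^*$-stable set rather than a fixed $H$-subgraph --- is precisely how the paper resolves Step~(iii).
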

\begin{proof}

The query complexity of the statement follows from Lemma~\ref{lem:PBFSQC}.

Suppose first that $G$ is $H$-free.
Thus, as $G_{\xi_2}$ is a subgraph of $G$, $G_{\xi_2}$ is also $H$-free and 
hence Algorithm~\ref{alg:PBFS} will not discover an $H$-subgraph in $G_{\xi_2}$ and consequently accepts.

From here on, assume that $G$ is $\epsilon$-far from being $H$-free.
Let $\G=(G,\leq)$ be a $(p,r)$-order of the vertices of $G$. $\G$ exists since $G$ is $(p,r)$-admissible.
Let $\epsilon' = \epsilon/2$, $\GT$ be the subgraph of $G$ obtained by applying trimming to $\G$ with $H$ and parameters $r$, $\alpha = \alphaLB$, $\beta = \betaLB$, and $\delta = \deltaVal$. 
By Lemma~\ref{lem:TrimDist}, $\GT$ is $\epsilon'$-far from being $H$-free.

By Lemma~\ref{lem:ProbHittingHSub}, with probability at least $\probHitting\geq 19/20$, $V(G_0)$ includes a vertex of an $H$-subgraph of $\GT$, where the inequality follows because $\xi_1 = \xiOneVal$.
Let $\J$ be an $H$-subgraph of $\G$, such that $V(G_0)\cap V(\J) \neq \emptyset$.
Since $\alpha\leq \delta = \deltaVal$,
by Lemma~\ref{lem:MinimumVertex},  with probability at least 
$\probFindMin \geq 19/20$,
 it holds that $\min_{\J}V(\J)\in V(G_{r+1})$, where the inequality holds because $\xi_3 = \xiThreeVal  \geq 20r\delta$.

Let $v_1 = \min_{\J}V(\J)$.
By Lemma~\ref{lem:HStable}, $\{v_1\}$ is $\J$-stable.
Let $(\J_2,\{\phi(v_1)\})$ be a useful pair in $\MSpine_\J(\{v_1\})$,
by Lemma~\ref{lem:HStable} such a useful pair exists.

Let $v_2 = \Nadir(\J_2,\{v_1\})$.
Since $\J_2$ is a subgraph of $\GT$, by (\ref{step:TrimStrata}) and (\ref{step:TrimDeltaStrata})
it must be the case that there exists a \Strata~$(\MD_2,\{v_1\})$ in $\GT$ 
such that $|\MD_2| \geq \deg_{\G}(v_1)/\beta$ and all the useful pairs in $\MD_2$ are similar to $\J_2$.
We further on deal with two cases: in the first $(\MD_2,\{v_1\})$ is not a $\dStrata{\delta}$ in the second it is.

We note first that this is sufficient to prove the theorem because of the following.
In both the above mentioned cases, the result is that with probability at least $1-1/(20r)$,
$G_{2r+1}$ includes vertices $v_1,v_2$ that such that for some $\J'$ that is an $H$-subgraph of $\GT$ the set $\{v_1,v_2\}$ is $\J'$-stable.
Lemma~\ref{lem:HStable} ensures that
the same analysis can be repeated to ensure that with probability at least $19/20$, the graph $G_{r(r+1)+1}$ includes all the vertices of some $H$-subgraph of $\GT$.
Given that the previous occurred, by Lemma~\ref{lem:AllEdges}, with probability at least $1 - r^2(1-\alpha^{-1})^\delta > 19/20$ (where the inequality follows from the values of $\alpha$ and~$\delta$), the graph $G_{r(r+2)+1}$ will have an $H$-subgraph.
This together with all the above implies the theorem.

Suppose first that $(\MD_2,\{v_1\})$ is not a $\dStrata{\delta}$.
Then, by the definition of a $\dStrata{\delta}$, there exists $v_2 \in \Nadir(\MD_2,\{v_1\})$ for which there exists a \Strata~$(\MD_2',\{v_1\})\subseteq (\MD_2,\{v_1\})$, such that $|\MD_2'| >\deg_{\G}(v_1)/\beta$ and $\Nadir(\MD_2,\{v_1\}) = \{v_2\}$.
So, because $|V(H)| \leq r$ and $v_1 = \min_{\J}V(\J)$, this implies that
the number of $r$-\chain s of length at most $r$, starting in $v_1$ and ending in $v_2$ of length at most $r$ is at least $\deg_{\G}(v_1)/\delta$.
Since $\alpha \leq \delta$ and $v_1 \in G_{r+1}$, by Lemma~\ref{lem:ProbableNadir}, probability at least 
$1-r(1 - \delta^{-1})^{\xi_3} \geq 1-1/(20r)$,
 it holds that $v_2 \in G_{2r+1}$, where the inequality holds because $\xi_3 = \xiThreeVal  \geq 20r\delta$.
Since $v_2 = \Nadir(\J_2,\{v_1\})$) $(\J_2,\{\phi(v_1)\})$ be a useful pair in $\MSpine_\J(\{v_1\})$, by Lemma~\ref{lem:HStable}, the set $\{\phi(v_1),\phi(v_2)\}$ is $\J$-stable.

Suppose now that $(\MD_2,\{v_1\})$ is a $\dStrata{\delta}$.
Since $\xi_3 = \xiThreeVal$ and $\delta = \deltaVal$, we have that
$(2r-1)(1 - \delta^{-1})^{\xi_3} < 1/2$, $\xi_3 \geq \delta/(2\beta)$, $\delta/(64\beta^3) > 16\ln(r)$, and
$\max_{\G} \{v_1\} \in G_{\tau}$.
Thus,
by Lemma~\ref{lem:ManyNadirs}, with probability at least $1-1/(20r)$, the graph $G_{2r + 1}$ includes a set $W$ of at least $\delta/(16\beta^2)$ vertices from $\Nadir(\MD_2,\{v_1\})$.

Let $(\hat \MD_2,\{v_1\})$ be the \Strata such that
$\hat \MD_2 \subseteq \MD_2$ and  $W = \Nadir(\hat \MD_2,\{v_1\})$.
Since $\beta = \betaLB$ and $\delta = \deltaVal$,
it holds that $\delta/(16\beta^2) \geq \NadirNumbers$ and hence, by Lemma~\ref{lem:ParallelHSubgraphs}, there exists a \Strata $(\tilde \MD_2,\{v_1\})$, such that $\tilde \MD_2 \subseteq \hat \MD_2$, $|\tilde \MD_2| \geq 2r$,
and the set of vertices common to the graphs of every pair of distinct pairs in $\tilde \MD_2$ is exactly $\{v_1\}$.
Consequently, there exists $(\J^*,\{v_1\}) \in \tilde \MD_2$ 
such that $V(\J^*)\cap V(\J)= \{v_1\}$.
Let $v_3 = \min_{\J^*}V(\J^*) = \Nadir(\J^*,\{v_1\})$.

Using $\J^*$ and $\J$ we can construct a new $H$-subgraph $\J'$ of $\GT$ such that 
$\{v_1,v_3\}$ is an $\J'$-stable set (recall that $W\subset V(G_{2r + 1})$ and hence $v_3\in V(G_{2r + 1})$). 
The construction is done simply by removing the set of vertices and edges $\J_2$ (excluding $v_1$) from $\J$ and adding instead the vertices and edges of $V(\J^*)$.
It follows that $\J'$ is an $H$-subgraph of $\GT$, because $V(\J^*)\cap V(\J)= \{v_1\}$, $\J^*$ and $\J_2$ are order-isomorphic (both in the same \Strata) and $\J$ and $\J^*$ are ordered subgraphs of $\GT$.
It follows that $\{v_1,v_3\}$ is $\J'$-stable because of the following.
$\{v_1\}$ is an $\J'$-stable set because, by construction, $v_1 = \min_{J'}V(J')$.
Also by construction $(J^*,\{v_1\}) \in \MSpine_{\J'}(\{v_1\})$ and 
 $v_3 = \min_{J^*}V(J^*)$, so by Lemma~\ref{lem:HStable}, $\{v_1,v_3\}$ is an $\J'$-stable set.
\end{proof}

\noindent
Theorems~\ref{thm:testable-F} and~\ref{thm:testable-H} now follow immediately from Theorem~\ref{thm:main}.

\bibliographystyle{abbrv}
\bibliography{biblio}


\appendix

\section{Appendix}\label{sec:appx1}

  \ab* 
  This proof is taken from~\cite{H-freeness}.
\begin{proof}

    "Let $v$ be an arbitrary vertex in $V(G)$, $r \in \Naturals$, and $h\in [r]$.
    Note first that, by construction, $N_\G^-(v) = \Target^1_{\G}(v)$, and hence we only need to prove the bound on $|\Target^h_{\G}(v)|$.
        
    For every $u\in \Target^h_{\G}(v)$, let $vP_uu$ be a $r$-admissible path of length $h$; such a path exists by the definition of $\Target^h_{\G}(v)$.
    Let $W$ be a subgraph of $G$ defined as follows: $V(W)$ includes exactly the vertex $v$, all the vertices in $\Target^h_{\G}(v)$ and all the vertices in $P_u$, for every $u \in \Target^h_{\G}(v)$; and $E(W)$ includes every edge of $G$ participating in a path $vP_uu$ for some $u \in \Target^h_{\G}(v)$.
    
    By construction, the set of vertices $\Target^h_{\G}(v)$ is independent in $W$ and for every $w\in \Target^h_{\G}(v)$, $\deg_W(w) = 1$.
    Also, by construction, $\dist_W(v,\Target^h_{\G}(v)) \leq h$.
    Hence, we can find a rooted tree $T$ in $W$ with the following properties: $v$ is the root of $T$, the set $\Target^h_{\G}(v)$ is the set of leaves of $T$ and the depth of $T$ is at most $h$. 
    
    Next, we show that the degree of each vertex in the tree is at most $p$. This implies that indeed $|\Target^h_{\G}(v)| \leq p(p-1)^{h-1}$ and in particular $|N_\G^-(v)| \leq p$.
    
    Let $p'$ be the degree of $v$ in $W$.
    By the construction of $T$, there are $p'$ paths from $v$ to a leaf of $T$ (vertex in $\Target^h_{\G}(v)$).
    These paths have a maximum length $h$, share only the vertex $v$, and correspond to $r$-admissible paths of $\G$.
    The last part is valid since $u>_\G v$, for every internal vertex $u$ (non-leaf or root vertex) of $T$ and for every $w\in \Target^h_{\G}(v)$ (the leaves of $T$), we have $w<_\G v$.
    Therefore, these paths are an $r$-admissible packing of $\G$, and therefore their number $p' \leq p$, and in particular the degree of $v$ in $T$ is at most $p$.
    
    We now show that this also applies to every internal vertex $y$ of $T$.
    We also notice that $w<_\G y$, for every $w\in \Target^h_{\G}(v)$.
    However, it is not necessarily the case that $u<_\G y$ when $u$ is an internal vertex of $T$.
    This is resolved by noticing that for any $x\in \Target^h_{\G}(v)$ and path $yPx$ in $T$, if $P$ has a vertex smaller than $y$, then instead of taking the path $yPx$, we take its shortest subpath $yPx'$ such that $x' <_\G y$.
    Now, the same reasoning we used for $v$ implies that the degree of $y$ in $T$ is at most $p$. " 
\end{proof}

\end{document}